\newtheorem{theorem}{Theorem}
\newtheorem{lemma}[theorem]{Lemma}
\newtheorem{corollary}[theorem]{Corollary}
\newtheorem{definition}[theorem]{Definition}
\newtheorem{observation}[theorem]{Observation}
\newcommand{\RR}{\mathbb{R}}
\newcommand{\ZZ}{\mathbb{Z}}
\newcommand{\A}{A}
\newcommand{\B}{B}
\newcommand{\E}{E}
\newcommand{\eps}{\varepsilon}
\newcommand{\F}{\mathcal{F}}
\newcommand{\N}{\mathcal{N}}
\newcommand{\T}{{^\textnormal{T}}}
\renewcommand{\th}{^\text{th}}
\newcommand{\SET}[1]{\left\{#1\right\}}
\newcommand{\sSET}[1]{\{#1\}}
\newcommand{\DOT}{\,.}
\newcommand{\COMMA}{\,,}
\newcommand{\WHERE}{\,|\,}
\newcommand{\ceil}[1]{\left\lceil#1\right\rceil}
\newcommand{\DEF}{\mathbin{:=}}
\newcommand{\FED}{\mathbin{=:}}
\newcommand{\Ex}[2][]{\text{\rm\bf E}_{#1}\hspace{-0.03cm}\left[#2\right]}
\renewcommand{\Pr}[2][]{\text{\rm\bf Pr}_{#1}\hspace{-0.03cm}\left[#2\right]}
\newcommand{\rank}{\textnormal{rank}}
\newcommand{\ID}[1][]{\mathbb{I}_{#1}}
\newcommand{\ZERO}[1][]{\mathbb{O}_{#1}}
\newcommand{\NULL}[1][]{0_{#1}}
\newcommand{\junk}[1]{}
\newcommand{\SPAN}[1]{\textnormal{span}\{#1\}}
\providecommand{\institute}[1]{\date{#1}}
\providecommand{\email}[1]{\texttt{#1}}
\DeclareMathOperator*{\argmax}{arg\,max}
\newcommand{\diag}{\text{diag}}
\begin{document}

\title{Solving Totally Unimodular LPs with the\\ Shadow Vertex Algorithm\thanks{This research was supported by ERC Starting Grant 306465
(BeyondWorstCase).}}

\author{Tobias Brunsch \and Anna Gro{\ss}wendt \and Heiko R\"oglin}

\institute{
Department of Computer Science\\
University of Bonn,
Germany\\
\email{\small\{brunsch,grosswen,roeglin\}@cs.uni-bonn.de}}

\maketitle

\begin{abstract}
We show that the shadow vertex simplex algorithm can be used to solve linear programs in strongly polynomial time 
with respect to the number~$n$ of variables, the number~$m$ of constraints, and~$1/\delta$, where~$\delta$ is a
parameter that measures the flatness of the vertices of the polyhedron. This extends our recent result that the
shadow vertex algorithm finds paths of polynomial length (w.r.t.~$n$, $m$, and~$1/\delta$) between two given
vertices of a polyhedron~\cite{BrunschR13}. 

Our result also complements a recent result due to Eisenbrand and Vempala~\cite{GRE} who have shown
that a certain version of the random edge pivot rule solves linear programs with a running time that is
strongly polynomial in the number of variables~$n$ and~$1/\delta$, but independent of the number~$m$ of
constraints. Even though the running time of our algorithm depends on~$m$, it is significantly faster for
the important special case of totally unimodular linear programs, for which~$1/\delta\le n$ and
which have only~$O(n^2)$ constraints.
\end{abstract}

\section{Introduction}

The shadow vertex algorithm is a well-known pivoting rule for the simplex method that has gained
attention in recent years because it was shown to have polynomial running time in the model
of smoothed analysis~\cite{SpielmanTeng}. Recently we have observed that it can also be used
to find short paths between given vertices of a polyhedron~\cite{BrunschR13}. Here short
means that the path length is~$O(\frac{mn^2}{\delta^2})$, where~$n$ denotes the number of
variables, $m$ denotes the number of constraints, and~$\delta$ is a parameter of the polyhedron
that we will define shortly. 

Our result left open the question whether or not it is also possible to solve linear programs in
polynomial time with respect to~$n$,~$m$, and~$1/\delta$ by the shadow vertex simplex algorithm.  
In this article we resolve this question and introduce a variant of the shadow vertex simplex
algorithm that solves linear programs in strongly polynomial time with respect to these
parameters. 

For a given matrix~$A = [a_1, \ldots, a_m]\T \in \RR^{m \times n}$ and vectors~$b \in \RR^m$ and~$c_0\in \RR^n$ our
goal is to solve the linear program~$\max \lbrace c_0\T x \WHERE Ax \leq b \rbrace$.
We assume without loss of generality that~$\|c_0\|=1$ and~$\|a_i\|=1$ for every row~$a_i$ of the constraint matrix.
 
\begin{definition}\label{def:delta}
The matrix~$A$ satisfies the
\emph{$\delta$-distance property} if the following condition holds:
For any~$I\subseteq\{1,\ldots,m\}$ and any~$j\in\{1,\ldots,m\}$,
if $a_j\notin\SPAN{a_i \WHERE i\in I}$ then~$\mathrm{dist}(a_j,\SPAN{a_i \WHERE i\in I})\ge\delta$.
In other words, if~$a_j$ does not lie in the subspace spanned by the~$a_i$, $i\in I$, then its
distance to this subspace is at least~$\delta$.
\end{definition}

We present a variant of the shadow vertex simplex algorithm that solves linear programs in
strongly polynomial time with respect to~$n$, $m$, and~$1/\delta$, where~$\delta$ denotes the 
largest~$\delta'$ for which the constraint matrix of the linear program satisfies the $\delta'$-distance property.
(In the following theorems,
we assume~$m\ge n$. If this is not the case, we use the method from Section~\ref{dimension} to add
irrelevant constraints so that~$A$ has rank~$n$. Hence, for instances that have fewer constraints
than variables, the parameter~$m$ should be replaced by~$n$ in all bounds.)

\begin{theorem}\label{thm:MainNumberOfPivots}
There exists a randomized variant of the shadow vertex simplex algorithm (described in Section~\ref{fullalgorithm}) that solves linear programs
with $n$~variables and $m$~constraints satisfying the $\delta$-distance property 
using~$O\big(\frac{mn^3}{\delta^2}\cdot \log\big(\frac{1}{\delta}\big)\big)$ pivots
in expectation if a basic feasible solution is given.
A basic feasible solution can be found using~$O\big(\frac{m^5}{\delta^2}\cdot \log\big(\frac{1}{\delta}\big)\big)$ pivots in expectation.
\end{theorem}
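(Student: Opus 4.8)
The plan is to obtain both bounds from a single shadow-vertex path-following routine and to charge the number of pivots to the expected number of edges of a suitable two-dimensional projection of the polyhedron $P=\{x\WHERE Ax\le b\}$. As preprocessing I would make $A$ have rank $n$ as in Section~\ref{dimension}, so that $P$ is pointed, and impose a lexicographic perturbation of $b$, which makes every vertex non-degenerate, leaves the combinatorial sequence of pivots unchanged, and does not affect~$\delta$; infeasibility and unboundedness are then detected along the way, as indicated below.

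\textbf{Solving the LP from a given basic feasible solution.} Let $x_0$ be the given basic feasible solution and let $B_0\subseteq\{1,\dots,m\}$ index $n$ linearly independent rows tight at $x_0$. I would draw an auxiliary objective $w=\sum_{i\in B_0}\mu_i a_i$ with the $\mu_i$ independent from a suitable continuous distribution on the positive reals; then $x_0$ is the optimum of $\max\{w\T x\WHERE Ax\le b\}$, and the plane $E=\SPAN{w,c_0}$ inherits the randomness of $w$. The routine follows the one-parameter family $w_\lambda=(1-\lambda)\,w+\lambda\,c_0$ with $\lambda$ increasing from $0$ to $1$: starting at $x_0$ it repeatedly computes, by a ratio test, the next value of $\lambda$ at which the optimal vertex changes and pivots to the neighbouring vertex, reporting unboundedness if the improving edge is a ray; at $\lambda=1$ the current vertex is optimal for $c_0$. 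Since every $w_\lambda$-optimal vertex of $P$ (for $\lambda\in[0,1]$) projects under the orthogonal projection $\pi_E$ onto $E$ to a vertex of the polygon $\pi_E(P)$ and consecutive ones are joined by edges of that polygon, the number of pivots of one run is at most the number of edges of $\pi_E(P)$.

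\textbf{Bounding the shadow.} The core of the argument is the estimate that this expected number of edges is $O\big(\frac{mn^3}{\delta^2}\log\frac{1}{\delta}\big)$. Here I would adapt the geometric analysis behind \cite{BrunschR13}: parametrise the boundary of $\pi_E(P)$ by the angle of its outer normal within $E$, and show that each time the optimal vertex of $P$ changes across an edge the angle must advance by at least a quantity controlled by $\delta$ --- this is exactly where Definition~\ref{def:delta} enters, preventing a row of $A$ from lying too close to the span of the rows of a neighbouring basis --- and sum the increments over a full turn. I expect this to be the main obstacle, for two related reasons: in the two-vertex problem of \cite{BrunschR13} one may randomise \emph{both} spanning directions of $E$, whereas here $c_0$ is fixed and only $w$ is random, so the anticoncentration estimate must be pushed through with half the randomness, which is what I expect costs the extra factor $n$ together with a logarithmic density contribution $\log\frac{1}{\delta}$; and one must additionally check that randomising $w$ alone makes the near-degenerate turns rare for the \emph{worst-case} direction $c_0$, not merely for a random one.

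\textbf{Finding a basic feasible solution.} For the second claim I would run the routine above on an auxiliary feasibility LP in $O(n)$ variables and $O(m)$ constraints, for instance of the form $\min\{\lambda\WHERE Ax-\lambda\mathbf{1}\le b\}$ (with $\mathbf{1}$ the all-ones vector), suitably augmented so that, once $A$ has rank $n$, a starting basic feasible solution can be written down explicitly; its constraint matrix still satisfies a distance property with a parameter polynomially related to $\delta$, a vertex attaining $\lambda=0$ is a vertex of $P$, and a positive optimum certifies that $P$ is empty. The subtlety is that pinning down an initial vertex of an inequality system may require either a bounded number of extra auxiliary constraints or an outer loop over the $m$ original constraints, degrading the parameters only polynomially; together with the per-call bound $O\big(\frac{mn^3}{\delta^2}\log\frac{1}{\delta}\big)$ and an $O(m)$ overhead, and using $n\le m$, this gives $O\big(\frac{m^5}{\delta^2}\log\frac{1}{\delta}\big)$.
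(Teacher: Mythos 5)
The proposal misidentifies the source of the factors $n$ and $\log(1/\delta)$ and, more fundamentally, uses a different algorithm that the paper's analysis does not support. You propose to run the shadow vertex method on the plane $\SPAN{w,c_0}$ with $c_0$ \emph{fixed} and only $w$ random, and claim the expected number of edges of $\pi_E(P)$ is $O\big(\frac{mn^3}{\delta^2}\log\frac{1}{\delta}\big)$, attributing the extra $n$ and $\log(1/\delta)$ to a weakened anticoncentration estimate. This is not what happens, and the claim is not obviously true: with one of the two spanning directions deterministic, there is no randomness to exclude that many edges of $P$ have near-vertical slopes $w^\T\Delta/c_0^\T\Delta$. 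The paper's Lemma~\ref{lemma:probability bound} (slopes in $(0,n]$, randomness in $w$) survives with $c$ fixed, but its companion Lemma~\ref{lemma:probability bound2} (slopes in $(n,\infty)$) works by exchanging the roles of $c$ and $w$ and crucially relies on the \emph{perturbation of $c_0$} into a random $c$: the deferred-decisions argument there conditions on $w$ and uses the randomness in $c$ along the direction $a_i$. With $c_0$ deterministic this step collapses, and the shadow of $P$ onto $\SPAN{w,c_0}$ may have superpolynomially many edges. The paper therefore never computes a path to the $c_0$-optimum directly.

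The paper's actual decomposition is different in all three ingredients. First, $c_0$ is perturbed to $c=\text{pert}(c_0,\phi)$; the resulting shadow bound (Theorem~\ref{main}) is $O\big(\frac{mn^2}{\delta^2}+\frac{m\sqrt{n}\phi}{\delta}\big)$, not involving $\log(1/\delta)$. Second, the returned $x_c$ is optimal for $c$, not $c_0$; one then uses Corollary~\ref{gre-cor} (which needs $\|c_0-c\|<\delta/(2n)$, hence $\phi > 2n^{3/2}/\delta$) to extract one facet of $x^\star$, projects down one dimension as in Section~\ref{reduct}, and repeats up to $n$ times. This is where the extra factor $n$ in $\frac{mn^3}{\delta^2}$ comes from, not the anticoncentration. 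Third, since $\delta$ is unknown, the algorithm runs the above with $\phi=\phi_i=2^i n^{3/2}$ for $i=0,1,\ldots$ until optimality is certified; it stops after $i^\star=O(\log(1/\delta))$ attempts because then $\phi_{i^\star}>2n^{3/2}/\delta$. This doubling loop, not a ``logarithmic density contribution'', is where $\log(1/\delta)$ enters. Your $\lambda$-homotopy single run does not perform these $n$ dimension reductions and has no mechanism for the doubling search, so even if the edge-count you conjecture were true it would prove a different (stronger) statement for which the paper gives no argument.

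On Phase~1, the paper uses the textbook auxiliary LP $\min\{\sum_i y_i : Ax-y\le b,\ y\ge0\}$ in $n+m$ variables and $2m$ constraints, and proves in Lemma~\ref{lemma:Phase I delta} that the new matrix $B$ satisfies $1/\delta(B)\le 2\sqrt{m-n+1}/\delta(A)$; plugging $m'=2m$, $n'=n+m\le 2m$, and $1/\delta(B)\le\sqrt{m}\cdot O(1/\delta)$ into the Phase~2 bound gives the $O\big(\frac{m^5}{\delta^2}\log\frac{1}{\delta}\big)$ claim. Your suggestion $\min\{\lambda : Ax-\lambda\mathbf{1}\le b\}$ introduces the column $-\mathbf{1}$, and there is no a~priori control on the distance parameter of the resulting matrix: $\mathbf{1}$ can be nearly in the span of a few rows of $A$ and ruin $\delta$. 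You flag that the distance parameter needs checking, but the bound does not merely need checking --- it fails in general for that auxiliary system, which is exactly why the paper insists on the specific $y$-slack construction for which the block-triangular structure of $B$ can be exploited.
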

We stress that the algorithm can be implemented without knowing the parameter~$\delta$.
From the theorem it follows that the running time of the algorithm is strongly polynomial with respect to the number~$n$ of variables, the number~$m$ of constraints, and~$1/\delta$
because every pivot can be performed in time $O(mn)$ in the arithmetic model of computation (see Section~\ref{runtime}).\footnote{By \emph{strongly polynomial with respect to~$n$, $m$, and~$1/\delta$} 
we mean that the number of steps in the arithmetic model of computation is bounded polynomially in~$n$, $m$, and~$1/\delta$ and the size of the numbers occurring during the algorithm is polynomially bounded
in the encoding size of the input.}

Let~$A\in\ZZ^{m\times n}$ be an integer matrix and let~$A'\in\RR^{m\times n}$ be the matrix
that arises from~$A$ by scaling each row such that its norm equals~$1$. If~$\Delta$ denotes
an upper bound for the absolute value of any sub-determinant of~$A$, then~$A'$ satisfies the
$\delta$-distance property for~$\delta = 1/(\Delta^2 n)$~\cite{BrunschR13}. For such matrices~$A$
Phase~1 of the simplex method can be implemented more efficiently and we obtain the following result. 

\begin{theorem}\label{thm:MainNumberOfPivots2}
For integer matrices~$A\in\ZZ^{m\times n}$,
there exists a randomized variant of the shadow vertex simplex algorithm (described in Section~\ref{fullalgorithm}) that solves linear programs
with $n$~variables and $m$~constraints 
using~$O\big(mn^5\Delta^4 \log(\Delta+1)\big)$ pivots in expectation if a basic feasible solution is given,
where~$\Delta$ denotes an upper bound for the absolute value of any sub-determinant of~$A$.
A basic feasible solution can be found using~$O\big(m^6\Delta^4 \log(\Delta+1)\big)$ pivots in expectation.
\end{theorem}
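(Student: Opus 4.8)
The plan is to derive Theorem~\ref{thm:MainNumberOfPivots2} from Theorem~\ref{thm:MainNumberOfPivots} together with the quantitative link recalled just above its statement. After rescaling every row of~$A$ to unit length we obtain a matrix~$A'$ that satisfies the $\delta$-distance property for $\delta=1/(\Delta^2 n)$, and this rescaling changes neither the feasible region nor the set of optimal solutions of $\max\{c_0\T x \WHERE Ax\le b\}$. Hence an instance with an integer constraint matrix is, as far as the algorithm of Section~\ref{fullalgorithm} is concerned, a $\delta$-distance instance with $1/\delta=\Delta^2 n$. Substituting this into the bounds of Theorem~\ref{thm:MainNumberOfPivots} turns the main-phase bound $O\big(\tfrac{mn^3}{\delta^2}\log\tfrac1\delta\big)$ into $O\big(mn^5\Delta^4\log(\Delta^2 n)\big)$ and the Phase-1 bound $O\big(\tfrac{m^5}{\delta^2}\log\tfrac1\delta\big)$ into $O\big(m^5n^2\Delta^4\log(\Delta^2 n)\big)$. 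This is already of the claimed form except for (i)~the logarithmic factor, which should read $\log(\Delta+1)$ rather than $\log(\Delta^2n)$, and (ii)~the polynomial part of Phase~1, where $m^5n^2$ has to be reduced to $m^6$ (note that $m^5n^2$ can be as large as $m^7$, and exceeds $m^6$ whenever $m<n^2$, so the generic Phase~1 does not suffice).

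For~(i) I would isolate the one place in the algorithm of Section~\ref{fullalgorithm} at which $1/\delta$ enters only logarithmically --- the number of rounds in which the randomly perturbed objective is refined/rounded to sufficient precision --- and re-run its analysis assuming an \emph{integer} constraint matrix. By Cramer's rule every vertex of $\{Ax\le b\}$ and every pivot direction used by the algorithm has the form $A_B^{-1}a$ with $A_B$ a nonsingular $n\times n$ submatrix of~$A$, so the denominators that occur are sub-determinants of~$A$ and are bounded by~$\Delta$; hence the combinatorial quantity driving the number of refinement rounds is $\mathrm{poly}(\Delta)$ instead of $\mathrm{poly}(1/\delta)$, which replaces the factor $\log(1/\delta)$ by $O(\log(\Delta+1))$ in both bounds.

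For~(ii) I would avoid the generic Phase~1 of Theorem~\ref{thm:MainNumberOfPivots} and instead build a basic feasible solution by \emph{unrelaxation}. After the padding of Section~\ref{dimension} we may assume $\rank A=n$; adjoining the $2n$ box constraints $\pm e_k\T x\le R$ for a large~$R$ produces a polyhedron with an explicit vertex whose defining matrix is~$A$ with unit rows attached and is therefore still $\Delta$-modular (here $\rank A=n$ ensures $\Delta\ge 1$). Starting from that vertex I would re-introduce the $O(m)$ original constraints one at a time, each re-introduction realised by a shadow-vertex path of the length guaranteed in~\cite{BrunschR13}, i.e.\ $O(\tfrac{mn^2}{\delta^2})=O(mn^4\Delta^4)$ pivots. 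Because $n\le m$, these $O(m)$ path segments amount to $O(m^2n^4\Delta^4)\le O(m^6\Delta^4)$ pivots, and accounting for the logarithmic perturbation overhead exactly as in~(i) gives the stated bound $O\big(m^6\Delta^4\log(\Delta+1)\big)$.

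The step I expect to be the main obstacle is keeping \emph{every} auxiliary system used in Phase~1 within an $O(\Delta)$-modular regime. The textbook Phase~1 reformulations --- a big-$M$ term or an artificial slack, i.e.\ adjoining an all-ones \emph{column} --- already turn a $\Delta$-modular matrix into one whose sub-determinants are bounded only by $O(n\Delta)$, which would inflate the $\Delta^4$ factor to $n^4\Delta^4$ and ruin the bound; the construction must therefore be confined to operations that provably preserve $\Delta$-modularity, namely sign flips of rows, passing to submatrices, and adjoining unit vectors. A secondary, more technical obstacle is to make the logarithmic bookkeeping of~(i) rigorous --- in particular to verify that the refinement loop is genuinely the only sub-polynomial occurrence of $1/\delta$ and that no hidden $\log n$ sneaks back in through the perturbation.
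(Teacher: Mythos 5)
Your first step---rescaling to a $\delta$-distance instance with $1/\delta=\Delta^2 n$ and substituting into Theorem~\ref{thm:MainNumberOfPivots}---is exactly what the paper does, and your observation that the logarithmic factor must be tightened from $\log(\Delta^2 n)$ to $\log(\Delta+1)$ is also correct. The actual mechanism for this is simpler than the Cramer's-rule bookkeeping you sketch: the $\log(1/\delta)$ factor comes purely from the number $i^\star$ of $\phi$-doubling rounds (the smallest~$i$ with $\phi_i>2n^{3/2}/\delta$), so it suffices to start the doubling sequence at $\phi_i=2^in^{5/2}$ rather than $2^in^{3/2}$; since $1/\delta\le\Delta^2 n$, one then needs only $i^\star=O(\log(\Delta+1))$ rounds, and the per-round pivot count is unchanged. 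There is no separate ``precision refinement loop'' to re-analyse.

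Where your proposal goes genuinely astray is in Phase~1. You correctly notice that substituting $1/\delta=\Delta^2 n$ into the Phase-1 bound $O(m^5/\delta^2\log(1/\delta))$ of Theorem~\ref{thm:MainNumberOfPivots} yields $m^5n^2\Delta^4$, which can exceed $m^6\Delta^4$. But that substitution misapplies the theorem: the $m^5/\delta^2$ bound is derived via Lemma~\ref{lemma:Phase I delta}, which controls $\delta(B)$ in terms of $\delta(A)$ by paying a $\sqrt{m}$ loss. For the $\Delta$-based statement the paper instead uses Lemma~\ref{lemma:Phase I Delta}, which shows that the standard textbook auxiliary LP (LP')---with constraint matrix $B=\bigl[\begin{smallmatrix}A&-\ID\\ \ZERO&-\ID\end{smallmatrix}\bigr]$---satisfies $\Delta(B)=\Delta(A)$ \emph{exactly}, because $B$ is obtained from $A$ by adjoining $\pm$-unit rows and columns, precisely the $\Delta$-preserving operations you list at the end of your proposal. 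Applying $1/\delta(B)\le\Delta(B)^2(n+m)\le 2m\Delta^2$ to the $n'=n+m$, $m'=2m$ instance then gives $O(m'(n')^3/\delta(B)^2)=O(m^6\Delta^4)$ per round directly. So the textbook Phase~1 does suffice, and your premise that a ``big-$M$ term or artificial slack'' necessarily inflates $\Delta$ by a factor $O(n)$ does not apply to (LP').

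Consequently the ``unrelaxation'' alternative you develop for~(ii) is not needed, and as described it has real gaps: re-introducing a violated constraint $a_i\T x\le b_i$ is not a path between two given vertices (the setting of~\cite{BrunschR13}) but an optimization subproblem, for which one would again need the perturbed objective and the $\phi$-doubling, with the attendant bookkeeping; you have not explained how to detect infeasibility in one of the $O(m)$ subproblems, nor how the magnitude of the bounding radius~$R$ is kept from affecting the analysis. None of these are fatal in principle, but they make this route substantially heavier than the one-line argument the paper actually uses.
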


Theorem~\ref{thm:MainNumberOfPivots2} implies in particular that totally unimodular linear programs
can be solved by our algorithm with~$O\big(mn^5\big)$ pivots in expectation if a basic feasible solution is given
and with~$O\big(m^6 \big)$ pivots in expectation otherwise.

Besides totally unimodular matrices there are also other classes of matrices for which~$1/\delta$ is polynomially
bounded in~$n$. Eisenbrand and Vempala~\cite{GRE} observed, for example, that~$\delta=\Omega(1/\sqrt{n})$
for edge-node incidence matrices of undirected graphs with~$n$ vertices. One can also argue
that~$\delta$ can be interpreted as a condition number of the matrix~$A$ in the following sense: If~$1/\delta$ is large
then there must be an $(n\times n)$-submatrix of~$A$ of rank~$n$ that is almost singular.

\subsection{Related Work}

\paragraph{Shadow vertex simplex algorithm}
We will briefly explain the geometric intuition behind the shadow vertex simplex algorithm.
For a complete and more formal description, we refer the reader to~\cite{Borgwardt86} or~\cite{SpielmanTeng}. 
Let us consider the linear program~$\max \lbrace c_0\T x \WHERE Ax \leq b \rbrace$ and let~$P= \SET{ x \in \RR^n \WHERE Ax \leq b }$
denote the polyhedron of feasible solutions. Assume that an initial vertex~$x_1$ of~$P$ is known and assume, 
for the sake of simplicity, that there is a unique optimal vertex~$x^{\star}$ of~$P$ that maximizes the objective
function~$c_0\T x$. The shadow vertex pivot rule first computes a vector~$w\in\RR^n$ such that the vertex~$x_1$ 
minimizes the objective function~$w\T x$ subject to~$x\in P$. Again for the sake of simplicity, let
us assume that the vectors~$c_0$ and~$w$ are linearly independent.

In the second step, the polyhedron~$P$ is projected onto the plane spanned by the vectors~$c_0$ and~$w$.
The resulting projection is a (possibly open) polygon~$P'$ and one can show that the projections 
of both the initial vertex~$x_1$ and the optimal vertex~$x^{\star}$ are vertices of this polygon.
Additionally, every edge between two vertices~$x$ and~$y$ of~$P'$ corresponds to an edge of~$P$ between
two vertices that are projected onto~$x$ and~$y$, respectively. Due to these properties a path from the projection of~$x_1$ to
the projection of~$x^{\star}$ along the edges of~$P'$ corresponds to a path from~$x_1$ to~$x^{\star}$ 
along the edges of~$P$.

This way, the problem of finding a path from~$x_1$ to~$x^{\star}$ on the polyhedron~$P$ is reduced to
finding a path between two vertices of a polygon. There are at most two such paths and the shadow vertex
pivot rule chooses the one along which the objective~$c_0\T x$ improves.

\paragraph{Finding short paths}
In~\cite{BrunschR13} we considered the problem of finding a short path between two given vertices~$x_1$ and~$x_2$ 
of the polyhedron~$P$ along the edges of~$P$. Our algorithm is the following variant of the shadow vertex algorithm: 
Choose two vectors~$w_1,w_2\in\RR^n$ such that~$x_1$ uniquely minimizes~$w_1\T x$ subject to~$x\in P$
and~$x_2$ uniquely maximizes~$w_2\T x$ subject to~$x\in P$. Then project the polyhedron~$P$ onto the plane
spanned by~$w_1$ and~$w_2$ in order to obtain a polygon~$P'$. Let us call the projection~$\pi$. By the same
arguments as above, it follows that~$\pi(x_1)$ and~$\pi(x_2)$ are vertices of~$P'$
and that a path from~$\pi(x_1)$ to~$\pi(x_2)$ along the edges of~$P'$ can be translated into a path
from~$x_1$ to~$x_2$ along the edges of~$P$. Hence, it suffices to compute such a path to solve the problem.
Again computing such a path is easy because~$P'$ is a two-dimensional polygon. 

The vectors~$w_1$ and~$w_2$ are not uniquely determined, but they can be chosen from cones that are determined
by the vertices~$x_1$ and~$x_2$ and the polyhedron~$P$. We proved in~\cite{BrunschR13} that the expected path length is~$O(\frac{mn^2}{\delta^2})$
if~$w_1$ and~$w_2$ are chosen randomly from these cones. For totally unimodular matrices this implies
that the diameter of the polyhedron is bounded by~$O(mn^4)$, which improved a previous result
by Dyer and Frieze~\cite{DyerF94} who showed that for this special case paths of length~$O(m^3n^{16}\log(mn))$
can be computed efficiently.

Additionally, Bonifas et al.~\cite{BonifasDEHN12} proved that in a polyhedron defined by an integer matrix~$A$ between any pair of vertices there exists
a path of length~$O(\Delta^2 n^4 \log (n\Delta))$ where~$\Delta$ is the largest absolute value of any
sub-determinant of~$A$. 
For the special case that~$A$ is a totally unimodular matrix, this bound simplifies to~$O(n^4\log{n})$.
Their proof is non-constructive, however.

\paragraph{Geometric random edge}
Eisenbrand and Vempala~\cite{GRE} have presented an algorithm that solves a linear program~$\max \lbrace c_0\T x | Ax \leq b \rbrace$
in strongly polynomial time with respect to the parameters~$n$ and~$1/\delta$. Remarkably the running time of their algorithm does
not depend on the number~$m$ of constraints. Their algorithm is based on a variant of the random edge pivoting rule.
The algorithm performs a random walk on the vertices of the polyhedron whose transition probabilities 
are chosen such that it quickly attains a distribution close to its stationary distribution.
 
In the stationary distribution the random walk is likely at a vertex~$x_c$ that optimizes an objective function~$c\T x$
with~$\|c_0-c\|<\frac{\delta}{2n}$. The $\delta$-distance property guarantees that~$x_c$ and the optimal vertex~$x^{\star}$
with respect to the objective function~$c_0\T x$ lie on a common facet. This facet is then identified and the algorithm
is run again in one dimension lower. This is repeated at most~$n$ times until all facets of the optimal vertex~$x^{\star}$
are identified. The number of pivots to identify one facet of~$x^{\star}$ is proven to be~$O(n^{10}/\delta^{8})$.
A single pivot can be performed in polynomial time but determining the right transition probabilities
is rather sophisticated and requires to approximately integrate a certain function over a convex body.

Let us point out that the number of pivots of our algorithm depends on the number~$m$ of constraints.
However, Heller showed that for the important special case of totally unimodular linear programs~$m=O(n^2)$~\cite{Heller57}.
Using this observation we also obtain a bound that depends polynomially only on~$n$ for totally
unimodular matrices.

\paragraph{Combinatorial linear programs}
{\'{E}}va Tardos has proved in 1986 that combinatorial linear programs can be solved in strongly polynomial time~\cite{Tardos1986}.
Here combinatorial means that~$A$ is an integer matrix whose largest entry is polynomially bounded in~$n$.
Her result implies in particular that totally unimodular linear programs can be solved in strongly polynomial
time, which is also implied by Theorem~\ref{thm:MainNumberOfPivots2}. However, the proof and the techniques used to
prove Theorem~\ref{thm:MainNumberOfPivots2} are completely different from those in~\cite{Tardos1986}.

\subsection{Our Contribution}\label{sub:contribution}

We replace the random walk in the algorithm of Eisenbrand and Vempala by the shadow vertex algorithm.
Given a vertex~$x_0$ of the polyhedron~$P$ we choose an objective function~$w\T x$ for which~$x_0$ is
an optimal solution. As in~\cite{BrunschR13} we choose~$w$ uniformly at random from the cone determined 
by~$x_0$. Then we randomly perturb each coefficient in the given objective function~$c_0\T x$ by a small amount.
We denote by~$c\T x$ the perturbed objective function.
As in~\cite{BrunschR13} we prove that the projection of the polyhedron~$P$ onto the plane spanned by~$w$ and~$c$
has~$O\big(\frac{mn^2}{\delta^2}\big)$ edges in expectation. If the perturbation is so small that~$\|c_0-c\|<\frac{\delta}{2n}$,
then the shadow vertex algorithm yields with~$O\big(\frac{mn^2}{\delta^2}\big)$ pivots a solution that has
a common facet with the optimal solution~$x^{\star}$. We follow the same approach as Eisenbrand and Vempala
and identify the facets of~$x^{\star}$ one by one with at most~$n$ calls of the shadow vertex algorithm.

The analysis in~\cite{BrunschR13} exploits that the two objective functions possess the same type of randomness
(both are chosen uniformly at random from some cones). This is not the case anymore because every component
of~$c$ is chosen independently uniformly at random from some interval. This changes the analysis significantly
and introduces technical difficulties that we address in this article. 

The problem when running the simplex method is that a feasible solution needs to be
given upfront. Usually, such a solution is determined in Phase~1 by solving a modified linear program
with a constraint matrix~$A'$ for which a feasible solution is known and whose optimal solution is feasible for the linear program
one actually wants to solve. There are several common constructions for this modified linear program,
it is, however, not clear how the parameter~$\delta$ is affected by modifying the linear program. 
To solve this problem, Eisenbrand and Vempala~\cite{GRE} have suggested a method for Phase~1 for which the
modified constraint matrix~$A'$ satisfies the $\delta$-distance property for the same~$\delta$ as the matrix~$A$.
However, their method is very different from usual textbook methods and needs to solve~$m$ different linear programs
to find an initial feasible solution for the given linear program.
We show that also one of the usual textbook methods can be applied.
We argue that $1/\delta$ increases by a factor of at most~$\sqrt{m}$ and that~$\Delta$, 
the absolute value of any sub-determinant of~$A$,
does not change at all in case one considers integer matrices. In this construction, 
the number of variables increases from~$n$ to~$n+m$.

\subsection{Outline and Notation}

In the following we assume that we are given a linear program $\max \lbrace c_0\T x \WHERE Ax \leq b \rbrace$
with vectors $b \in \RR^m$ and $c_0 \in \RR^n$ and a matrix $A=[a_1,\dots,a_m]\T \in \RR^{m \times n}$. 
Moreover, we assume that $\|c_0\|=\|a_i\|=1$ for all $ i \in [m]$, where $[m]:=\{1,\ldots,m\}$
and $\|\cdot\|$ denotes the Euclidean norm. This entails no loss of generality since any linear
program can be brought into this form by scaling the objective function and the constraints
appropriately.
For a vector $x \in \RR^n\setminus\{0^n\}$ we denote by $\N(x) = \frac{1}{\|x\|} \cdot x$ the normalization of vector~$x$. 

For a vertex $v$ of the polyhedron $P=\SET{ x \in \RR^n \WHERE Ax \leq b }$ we call the set of row indices $B_v=\lbrace i\in\{1,\ldots,m\} \WHERE 
a_i\cdot v=b_i \rbrace$ \textit{basis} of $v$. Then the \textit{normal cone}
$C_v$ of $v$ is given by the set 
\[
   C_v=\SET{ \sum \limits_{i \in B_v} \lambda_i a_i\WHERE \lambda_i\ge 0 }.
\]

We will describe our algorithm in Section~\ref{sec:Algorithm} where we assume that the linear
program in non-degenerate, that~$A$ has full rank~$n$, and that the polyhedron~$P$
is bounded. We have already described in Section~3 of~\cite{BrunschR13} that the linear program
can be made non-degenerate by slightly perturbing the vector~$b$. This does not affect the
parameter~$\delta$ because~$\delta$ depends only on the matrix~$A$. In Appendix~\ref{specialcases}
we discuss why we can assume that~$A$ has full rank and why~$P$ is bounded. There are, of course,
textbook methods to transform a linear program into this form. However, we need to be careful that
this transformation does not change~$\delta$.  

In Section~\ref{sec:analysis} we analyze our algorithm and prove Theorem~\ref{thm:MainNumberOfPivots}. 
In Section~\ref{initial} we discuss how Phase~1 of the simplex method can be implemented
and in Appendix~\ref{sdelta} we give an alternative definition of~$\delta$ and discuss some
properties of this parameter.

\section{Algorithm}\label{fullalgorithm}

Given a linear program $\max \lbrace c_0 \T x \WHERE Ax \leq b \rbrace$ and a basic feasible 
solution~$x_0$, our algorithm randomly perturbs each coefficient of the vector~$c_0$ by
at most~$1/\phi$ for some parameter~$\phi$ to be determined later. Let us call the resulting
vector~$c$. The next step is then to use the shadow vertex algorithm to compute a
path from $x_0$ to a vertex $x_c$ which maximizes the function $c\T x$ for $x \in P$.
For~$\phi>\frac{2n^{3/2}}{\delta}$ one can argue that the solution~$x$ has a facet in common with the 
optimal solution~$x^{\star}$ of the given linear program with objective function~$c_0\T x$. Then the
algorithm is run again on this facet one dimension lower until all facets that define~$x^{\star}$
are identified.

This section is organized as follows. In Section~\ref{reduct} we repeat a construction
from~\cite{GRE} to project a facet of the polyhedron~$P$ into the space $\RR^{n-1}$ without
changing the parameter~$\delta$. This is crucial for being able to identify the facets that
define~$x^{\star}$ one after another. In Section~\ref{identify} we also repeat an argument
from~\cite{GRE} that shows how a common facet of~$x_c$ and~$x^{\star}$ can be identified if~$x_c$
is given. Section~\ref{sec:Algorithm} presents the shadow vertex algorithm, the main building block
of our algorithm. Finally in Section~\ref{runtime} we discuss the running time of a single pivot step   
of the shadow vertex algorithm.

\subsection{Reducing the Dimension} \label{reduct}

Assume that we have identified an element~$a_i$, $i \in [m]$, of the optimal basis~$x^{\star}$
(i.e., $a_ix^{\star}=b_i$). In~\cite{GRE} it is described how to reduce in this case
the dimension of the linear program by one without changing the parameter~$\delta$.
We repeat the details. Without loss of generality we may assume that~$a_1$ is an element 
of the optimal basis. Let~$Q \in \mathbb R^{n \times n}$ be an orthogonal matrix that 
rotates~$a_1$ into the first unit vector~$e_1$. Then the following linear programs are equivalent:
\begin{equation}\label{eqn:LP}
    \max \lbrace c_0^Tx\WHERE x \in \mathbb R^n, Ax \leq b \rbrace 
\end{equation}
and
\[
    \max \lbrace c_0^TQx\WHERE x \in \mathbb R^n, AQx \leq b \rbrace.
\]
In the latter linear program the first constraint is of the form~$x_1 \leq b_1$. 
We set this constraint to equality and subtract this equation from the other constraints
(i.e., we project each row into the orthogonal complement of~$e_1$). 
Thus, we end up with a linear program of dimension~$n-1$.
Lemma~\ref{deltaproperties} shows that the $\delta$-distance does not change under multiplication
with an orthogonal matrix. Furthermore, Lemma~3 of~\cite{GRE} ensures that the $\delta$-distance
property is not destroyed by the projection onto the orthogonal complement.

\subsection{Identifying an Element of the Optimal Basis}\label{identify}

In this section we repeat how an element of the optimal basis can be identified if an optimal
solution~$x_c$ for an objective function~$c\T x$ with~$\|c_0-c\|<\delta/(2n)$ is given (see
also~\cite{GRE}). 

\begin{lemma}[Lemma 2 of~\cite{GRE}]\label{lemma:IdentificationMain}
Let $B \subseteq \lbrace 1, \dots, m \rbrace$ be the optimal basis of the linear program~\eqref{eqn:LP} 
and let~$B'$ be an optimal basis of the linear program~\eqref{eqn:LP} with~$c_0$ being replaced by~$c$, where~$\| c_0-c \| < \delta/(2n)$ holds. Consider the conic combination 
\begin{align*}
c=\sum \limits_{j \in B'} \mu_j a_j.
\end{align*}
For $k \in B' \setminus B$, one has $\| c_0-c \| \geq \delta \cdot \mu_k$.
\end{lemma}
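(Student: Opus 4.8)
The plan is to exhibit, for each $k \in B' \setminus B$, a single unit vector $u$ with $\langle u, c\rangle \ge \delta\mu_k$ and $\langle u, c_0\rangle \le 0$; the two bounds combined with Cauchy--Schwarz then give
\[
  \|c_0 - c\| \ \ge\ \langle u,\, c - c_0\rangle \ =\ \langle u, c\rangle - \langle u, c_0\rangle \ \ge\ \delta\mu_k ,
\]
which is exactly the assertion. I first record the relevant normal-cone facts. Since $x_c$ maximises $c\T x$ over $P$ and has basis $B'$, the vector $c$ lies in the normal cone $C_{x_c} = \{\sum_{j \in B'}\nu_j a_j : \nu_j \ge 0\}$; because $A$ has full rank and the program is non-degenerate, $\{a_j : j \in B'\}$ is a basis of $\RR^n$, so the representation $c = \sum_{j \in B'}\mu_j a_j$ is unique, with $\mu_j \ge 0$ (indeed $\mu_j > 0$, $x_c$ being the unique maximiser). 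Similarly $c_0 = \sum_{i \in B}\lambda_i a_i$ with $\lambda_i \ge 0$.

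Fix $k \in B' \setminus B$ and let $H := \mathrm{span}\{a_j : j \in B',\, j \ne k\}$. Since $\{a_j : j \in B'\}$ is linearly independent, $a_k \notin H$ and $\dim H = n-1$, so $H^\perp$ is a line; let $u$ be its unit vector with $\langle u, a_k\rangle > 0$. Then $\langle u, a_k\rangle = \mathrm{dist}(a_k, H) \ge \delta$ by the $\delta$-distance property, while $\langle u, a_j\rangle = 0$ for every $j \in B' \setminus \{k\}$. Hence $\langle u, c\rangle = \sum_{j \in B'}\mu_j\langle u, a_j\rangle = \mu_k\langle u, a_k\rangle \ge \delta\mu_k$. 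Moreover, because $k \notin B$ we have $B \cap B' \subseteq B' \setminus \{k\}$, so the same orthogonality gives $\langle u, c_0\rangle = \sum_{i \in B}\lambda_i\langle u, a_i\rangle = \sum_{i \in B \setminus B'}\lambda_i\langle u, a_i\rangle$.

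It remains to prove $\langle u, c_0\rangle \le 0$, and this is where the real work lies. The geometric picture is that $-u$ is the direction of the edge of $P$ leaving $x_c$ along which exactly the constraint $k$ is relaxed: $a_j(x_c - \eps u) = b_j$ for $j \in B' \setminus \{k\}$ and $a_k(x_c - \eps u) = b_k - \eps\langle u, a_k\rangle < b_k$, so $x_c - \eps u \in P$ for small $\eps > 0$, whereas $x_c + \eps u$ violates constraint $k$. I would argue that moving along this edge away from $x_c$ does not decrease the true objective, i.e.\ $c_0\T(-u) \ge 0$, which is the desired inequality. The reason is that $k \notin B$, so constraint $k$ is slack at the optimal vertex $x^\star$, and that the perturbation is small: the hypothesis $\|c_0 - c\| < \delta/(2n)$ together with the $\delta$-distance property forces $x^\star$ onto the side of the hyperplane $u^\perp$ towards which this edge points, making the edge non-decreasing for $c_0\T x$. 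Turning this into a rigorous argument -- ruling out $\langle u, c_0\rangle > 0$ by comparing $c_0\T x^\star$ with $c_0\T x_c$, equivalently by controlling the coefficients $\lambda_i$, $i \in B \setminus B'$, that appear in $\langle u, c_0\rangle = \sum_{i \in B\setminus B'}\lambda_i\langle u, a_i\rangle$, and invoking the $\delta$-distance property for the constraints in $B \setminus B'$ -- is the main obstacle; it is the one step where the size $\delta/(2n)$ of the perturbation is genuinely needed. Granting $\langle u, c_0\rangle \le 0$, the displayed chain above yields $\|c_0 - c\| \ge \delta\mu_k$, in fact the slightly sharper $\langle u, a_k\rangle\,\mu_k \le \|c_0 - c\|$, which completes the proof.
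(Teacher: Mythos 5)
The paper does not prove this lemma at all: it is stated as ``Lemma 2 of~\cite{GRE}'' and the surrounding text explicitly defers both it and the subsequent corollary to Eisenbrand--Vempala. So there is no ``paper's own proof'' to compare against; what matters is whether your argument is complete and correct.

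Your proposal has a genuine gap, and you say so yourself. The first half is fine: taking $u$ to be the unit normal of $H=\SPAN{a_j \WHERE j\in B',\ j\neq k}$ oriented so that $\langle u,a_k\rangle>0$, the $\delta$-distance property gives $\langle u,a_k\rangle\geq\delta$, hence $\langle u,c\rangle=\mu_k\langle u,a_k\rangle\geq\delta\mu_k$, and Cauchy--Schwarz would finish the job \emph{if} one also knew $\langle u,c_0\rangle\leq 0$. But that last inequality is exactly the content of the lemma, and you do not establish it. Unpacking it as $\langle u,c_0\rangle=\sum_{i\in B\setminus B'}\lambda_i\langle u,a_i\rangle$ with $\lambda_i\geq 0$ makes clear that nothing forces the signs $\langle u,a_i\rangle$ for $i\in B\setminus B'$ to be nonpositive: those $a_i$ need not lie in $H$, and they can sit on either side of the hyperplane $u^\perp$. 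The intuition you offer --- that the edge of $P$ leaving $x_c$ in direction $-u$ cannot decrease $c_0\T x$ because the perturbation is small and $k\notin B$ --- is precisely where the hypothesis $\|c_0-c\|<\delta/(2n)$ has to do real work, and it is asserted rather than argued. When $B$ and $B'$ share all but one index this is easy, because then $C_{x^\star}$ and $C_{x_c}$ are adjacent cones of the normal fan separated by $u^\perp$; but in general $B$ and $B'$ may differ in several indices, the two cones need not share a facet, and no reason is given why $c_0$ must lie on the claimed side of $u^\perp$. Closing this requires an argument that genuinely uses the $\delta/(2n)$ bound (e.g.\ via the path of cones the segment $[c_0,c]$ crosses, or via a dual-feasibility/complementary-slackness argument comparing the $B$- and $B'$-representations of $c_0$), and none is supplied. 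As written, the proof reduces the lemma to an unproven claim of essentially the same difficulty.
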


The following corollary whose proof can also be found in~\cite{GRE}
gives a constructive way to identify an element of the optimal basis.

\begin{corollary}\label{gre-cor}
Let~$c\in\RR^{n}$ be such that $\| c_0-c \| < \delta/(2 \cdot n)$ and let~$\mu_j$, $B$, and~$B'$ be defined
as in Lemma~\ref{lemma:IdentificationMain}.
There exists at least one coefficient~$\mu_k$ with~$\mu_k > 1/n \cdot (1- \delta /(2\cdot n))$ and
any~$k$ with this property is an element of the optimal basis~$B$ (assuming~$\|c_0\|=1$).
\end{corollary}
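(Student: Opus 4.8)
The plan is to combine the estimate from Lemma~\ref{lemma:IdentificationMain} with a simple norm/averaging argument on the conic combination $c = \sum_{j\in B'}\mu_j a_j$. First I would establish the existence of a large coefficient. Since $B'$ is an optimal basis, $|B'|\le n$, so there are at most $n$ coefficients $\mu_j$. I would lower-bound $\sum_{j\in B'}\mu_j$ from below: because $\|a_j\|=1$ for every $j$, the triangle inequality gives $\|c\| \le \sum_{j\in B'}\mu_j$, and since $\|c\|\ge \|c_0\|-\|c_0-c\| = 1-\|c_0-c\| > 1-\delta/(2n)$, we get $\sum_{j\in B'}\mu_j > 1-\delta/(2n)$. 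Averaging over the at most $n$ terms, some coefficient satisfies $\mu_k > \frac{1}{n}\bigl(1-\delta/(2n)\bigr)$, which proves the first claim.

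Next I would show that any such $k$ must lie in $B$. Suppose for contradiction that $k\in B'\setminus B$. Then Lemma~\ref{lemma:IdentificationMain} applies and yields $\|c_0-c\| \ge \delta\cdot\mu_k$. Plugging in the lower bound $\mu_k > \frac{1}{n}\bigl(1-\delta/(2n)\bigr)$ gives
\[
  \|c_0-c\| \;>\; \frac{\delta}{n}\Bigl(1-\frac{\delta}{2n}\Bigr) \;=\; \frac{\delta}{2n}\cdot\frac{2n-\delta}{n} \;\ge\; \frac{\delta}{2n},
\]
where the last inequality uses $\delta\le 1$ (which holds because $\delta$ is a distance between unit vectors and their spans, hence at most $1$) so that $2n-\delta \ge n$. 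This contradicts the hypothesis $\|c_0-c\| < \delta/(2n)$. Hence $k\in B$, as desired.

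The only mild subtlety — and the step I would be most careful about — is the clean handling of the constant: one needs $\delta\le 1$ (or at least $\delta < 2n-n = n$, which is far weaker) to conclude $\frac{1}{n}(1-\delta/(2n)) > \frac{1}{2n}$ strictly, and I would make this explicit by recalling that the $\delta$-distance property with $\|a_i\|=1$ forces $\delta\le 1$. Everything else is a one-line norm estimate plus a direct appeal to Lemma~\ref{lemma:IdentificationMain}; there is no real obstacle beyond bookkeeping the inequalities in the right direction.
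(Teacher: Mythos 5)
Your proof is correct and is the standard argument: the paper does not reproduce a proof (it cites~\cite{GRE}), but your two-step reasoning — lower-bounding $\sum_{j\in B'}\mu_j$ by $\|c\|>1-\delta/(2n)$ via the triangle inequality (using $\mu_j\ge0$ and $\|a_j\|=1$), then averaging over the $n$ coefficients, followed by the contrapositive application of Lemma~\ref{lemma:IdentificationMain} together with $\delta\le1$ — is exactly the intended one. Your explicit flag that $\delta\le 1$ (immediate from the $\delta$-distance property with unit rows, e.g.\ taking $I=\emptyset$) is the right place to be careful; the rest is tight.
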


The corollary implies that given a solution~$x_c$ that is optimal for an objective function~$c\T x$
with~$\|c_0-c\|<\delta/(2n)$, one can identify an element of the optimal basis by solving 
the system of linear equations  
\begin{align*}
[a_1',\dots , a_n'] \cdot \mu = c,
\end{align*}
where the~$a_i'$ denote the constraints that are tight in~$x_c$.

\subsection{The Shadow Vertex Method}\label{sec:Algorithm}

In this section we assume that we are given a linear program of the form $\max \lbrace c_0\T x \WHERE x \in P \rbrace$, where $P = \lbrace x \in \RR^n \WHERE Ax \leq b \rbrace$ is a bounded polyhedron (i.e., a polytope), and
a basic feasible solution $x_0 \in P$. We assume $\|c_0\|=\|a_i\|=1$ for all rows~$a_i$ of~$A$.
Furthermore, we assume that the linear program is non-degenerate.

Due to the assumption~$\|c_0\|=1$ it holds~$c_0\in[-1,1]^n$.
Our algorithm slightly perturbs the given objective function~$c_0\T x$ at random. For each component~$(c_0)_i$ of~$c_0$ it chooses
an arbitrary interval~$I_i\subseteq[-1,1]$ of length~$1/\phi$ with~$(c_0)_i\in I_i$, where~$\phi$ denotes a parameter that will be given to the algorithm.
Then a random vector~$c\in[-1,1]^n$ is drawn as follows: Each component~$c_i$ of~$c$ is chosen independently uniformly
at random from the interval~$I_i$. We denote the resulting random vector by~$\text{pert}(c_0,\phi)$.
Note that we can bound the norm of the difference $\|c_0-c\|$ between the vectors $c_0$ and $c$ from above by $\frac{\sqrt{n}}{\phi}$.

The shadow vertex algorithm is given as Algorithm~\ref{algorithm:SV}. It is assumed that~$\phi$ is given to the algorithm as 
a parameter. We will discuss later how we can run the algorithm without knowing this parameter. Let us remark that the
Steps~\ref{line:shadow vertex polytope} and~\ref{line:walk} in Algorithm~\ref{algorithm:SV} are actually not executed separately.
Instead of computing the whole projection~$P'$ in advance, the edges of~$P'$ are computed on the fly one after another. 

\begin{algorithm*}[th]
  \caption{Shadow Vertex Algorithm}
  \label{algorithm:SV}
  \begin{algorithmic}[1]
  
    \STATE Generate a random perturbation~$c=\text{pert}(c_0,\phi)$ of~$c_0$.\label{line:randomnessc}

    \STATE Determine~$n$ linearly independent rows $u_k\T$ of~$A$ for which $u_k\T x_0 = b_k$. \label{line:opt vectors}

    \STATE Draw a vector $\lambda \in (0, 1]^n$ uniformly at random. \label{line:randomness}

    \STATE Set $w = -\left[ u_1, \ldots, u_n \right] \cdot \lambda$. \label{line:opt direction}

    \STATE Use the function $\pi: x \mapsto \big( c\T x, w\T x \big)$ to project~$P$ onto the Euclidean plane and obtain the shadow vertex polygon $P' = \pi(P)$. \label{line:shadow vertex polytope}

    \STATE Walk from~$\pi(x_0)$ along the edges of~$P'$ in increasing direction of the first coordinate until a rightmost vertex~$\tilde{x}_c$ of~$P'$ is found. \label{line:walk}

    \STATE Output the vertex~$x_c$ of~$P$ that is projected onto~$\tilde{x}_c$.

  \end{algorithmic}
\end{algorithm*}
Note that
\[
  \|w\|
  \leq \sum_{k=1}^n \lambda_k \cdot \|u_k\|
  \leq \sum_{k=1}^n \lambda_k
  \leq n,
\]
where the second inequality follows because all rows of~$A$ are assumed to have norm~1.

The Shadow Vertex Algorithm yields a path from the vertex~$x_0$ to a vertex~$x_c$ 
that is optimal for the linear program $\max \lbrace c\T x \WHERE x \in P \rbrace$ 
where $P = \lbrace x \in \RR^n \WHERE Ax \leq b \rbrace$. The following theorem (whose proof can be found in Section~\ref{sec:analysis}) bounds the expected length of this path, i.e., the number
of pivots.

\begin{theorem}\label{main}
For any~$\phi\ge \sqrt{n}$ the expected number of edges on the path output by Algorithm~\ref{algorithm:SV} is
$O\Big(\frac{mn^2}{\delta^2}+\frac{m\sqrt{n}\phi}{\delta}\Big)$.
\end{theorem}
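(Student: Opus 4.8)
The plan is to bound the expected number of edges of the shadow vertex polygon $P'=\pi(P)$ where $\pi(x)=(c^\mathrm{T}x,w^\mathrm{T}x)$, and then to observe that the path walked by the algorithm traverses at most this many edges. Each vertex of $P'$ is the image of a vertex $v$ of $P$ that is optimal for some objective in the two–dimensional cone spanned by $c$ and $w$; equivalently, the edges of $P'$ correspond to vertices of $P$ whose normal cone $C_v$ meets the plane $E=\mathrm{span}\{c,w\}$ in a ray, and the number of edges of $P'$ equals (up to constants) the number of such rays, i.e. the number of vertices $v$ of $P$ for which $C_v\cap E$ is one–dimensional. So it suffices to count, in expectation over the randomness of $c$ and $w$, the vertices of $P$ whose normal cone is "cut" by $E$.

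First I would set up the parametrization used in \cite{BrunschR13}: write a generic direction in $E$ as $c_\theta = \cos\theta\cdot c + \sin\theta\cdot w$ for $\theta$ ranging over a half–circle, and let $x(\theta)$ be the vertex of $P$ optimal for $c_\theta$. As $\theta$ increases, $x(\theta)$ changes only at finitely many breakpoints, and the number of breakpoints is exactly the number of edges of $P'$ (minus a constant). At a breakpoint the optimal vertex moves along an edge of $P$; such an edge is the intersection of $n-1$ tight constraints, and the direction of the edge is determined by those rows of $A$. The key geometric fact, established via the $\delta$-distance property, is that consecutive breakpoints cannot be too close together in angle: if at angle $\theta$ the objective $c_\theta$ lies on the boundary between two normal cones, then moving the angle by a small amount changes $c_\theta$ by an amount controlled by $\|c_\theta\|$, and the $\delta$-distance property forces the normal cones to have a certain "width", so $c_\theta$ must travel a definite angular distance before hitting the next cone boundary. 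Quantitatively one shows that each edge of $P$ that appears contributes an angular interval of length $\Omega\big(\delta/(\|c\|\cdot\text{something})\big)$ or, more precisely, that the total number of edges is governed by how much the normalized direction $\N(c_\theta)$ can wiggle, which is where the perturbation of $c$ enters.

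The main obstacle — and the reason this is not a verbatim copy of \cite{BrunschR13} — is that $c$ and $w$ now have different distributions: $w$ is uniform in a cone (as before), but each coordinate $c_i$ is an independent uniform draw from an interval $I_i$ of length $1/\phi$, so $c=c_0+\zeta$ with $\zeta$ a product of tiny independent uniforms. The analysis in \cite{BrunschR13} exploited that both objectives were uniform in cones, which gave a clean way to bound the density of the angle at which a cone boundary is crossed. Here I would instead argue as follows: condition on $w$; for a fixed pair of rows $a_i,a_j$ spanning a potential edge direction, the event that this edge appears on the path, together with the location of the corresponding breakpoint angle $\theta$, is determined by where the ray $\RR_{\ge0}\cdot c_\theta$ sits relative to a fixed hyperplane arrangement coming from the $a_i$; the perturbation of $c$ smears the relevant one–dimensional projection of $c$ over an interval of length $\Theta(1/\phi)$ with bounded density, so the probability that $\theta$ falls in any sub‑interval of angular length $\ell$ is $O(\phi\ell)$ times a geometric factor that the $\delta$-distance property bounds by $O(1/\delta)$. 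Summing the contribution $a_i,a_j$ over all $O(m^2)$ pairs — or, more efficiently, charging each of the $m$ facets once as in \cite{BrunschR13} — and integrating the density over the half–circle yields the two terms: the $\frac{mn^2}{\delta^2}$ term is the "intrinsic" complexity inherited from the structure of $P$ (present even without perturbation, as in \cite{BrunschR13}), while the $\frac{m\sqrt n\,\phi}{\delta}$ term is the extra cost incurred because the perturbation magnitude $\|c-c_0\|\le\sqrt n/\phi$ can only be guaranteed small, so a larger $\phi$ (finer perturbation) makes $c$ "closer to a worst‑case direction" and forces a correspondingly weaker density bound. I would finish by checking the hypothesis $\phi\ge\sqrt n$ is exactly what is needed so that $\|c-c_0\|\le\sqrt n/\phi\le 1$, keeping $c$ well inside $[-1,1]^n$ and bounded away from $0$, which is used to pass from bounds on $c_\theta$ to bounds on the normalized direction $\N(c_\theta)$.
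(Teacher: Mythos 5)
Your high-level picture is right: you bound the complexity of the shadow polygon, you recognize that the two terms arise from the two different sources of randomness ($w$ uniform in a cone vs.\ $c$ componentwise uniform in length-$1/\phi$ intervals), you note that each failure event should be charged to one of the $m$ facets, and you correctly explain why $\phi\ge\sqrt n$ is assumed. However, there is a genuine gap between this outline and a proof. The paper does \emph{not} use an angular parametrization $c_\theta=\cos\theta\cdot c+\sin\theta\cdot w$; it parametrizes by \emph{slope} $t=\frac{w\T\cdot(z_2-z_1)}{c\T\cdot(z_2-z_1)}$ and, crucially, splits the slope range into $(0,n]$ and $(n,\infty)$. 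The first range is analyzed by conditioning on $c$ and using the randomness of $w$ (deferred decisions along the ray $w+\RR a_i$, after the reconstruction Lemma~\ref{lemma:reconstruct}); the second range is analyzed by exchanging the roles of $c$ and $w$ (so slopes in $(n,\infty)$ become slopes of the swapped projection in $(0,1/n]$), conditioning on $w$, and using the randomness of $c$ (deferred decisions along the ray $c+\RR a_i$). Your sketch only describes the second kind of conditioning, and it does not explain why the two bounds add rather than multiply, which in the paper is simply the disjointness of the two slope ranges.

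More specifically, your sketch cannot currently produce the $1/\delta^2$ in the first term. You attribute "a geometric factor that the $\delta$-distance property bounds by $O(1/\delta)$," but in Lemma~\ref{lemma:probability bound} the quantity $1/\delta^2$ arises from \emph{two} distinct uses of the $\delta$-distance property: one factor $1/\delta$ from the lower bound $|a_i\T(\hat y-y^\star)|\ge\delta\|\hat y-y^\star\|$ (Claim~\ref{delta properties:neighboring vertices}), which controls the length of the bad interval for $Z$, and a second factor $1/\delta$ from $\delta(Q\T u_1,\ldots,Q\T u_n)\ge\delta$, which enters through Corollary~\ref{corollary.Prob:enough randomness} as the density bound for the conditioned coordinate of $w=-[u_1,\ldots,u_n]\lambda$. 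By contrast, in Lemma~\ref{lemma:probability bound2} (the $c$-side analysis) the matrix $Q$ is orthogonal, so the density bound is simply $2\sqrt n\phi$ and only one factor $1/\delta$ survives, giving $\frac{m\sqrt n\phi}{\delta}$. Your description of the first term as the "intrinsic complexity of $P$" understates that it hinges on the cone-uniform randomness of $w$ and the product of these two $\delta$-effects; without identifying both, the exponent on $\delta$ does not come out right. To close the gap you would need to (i) introduce the slope split and the swap of objectives, (ii) state and use the reconstruction lemma that lets you fix the path $\tilde R$ while leaving one coordinate of $w$ (resp.\ $c$) random, and (iii) carry out the two deferred-decision estimates with the precise appeal to the $\delta$-distance property in each.
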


Since~$\|c_0-c\| \le \frac{\sqrt{n}}{\phi}$ choosing~$\phi>\frac{2n^{3/2}}{\delta}$
suffices to ensure~$\|c_0-c\| < \frac{\delta}{2n}$.
Hence, for such a choice of~$\phi$, by Corollary~\ref{gre-cor}, the vertex $x_c$ has a facet in common with the optimal solution of the linear program $\max \lbrace c_0\T x \WHERE x \in P \rbrace$ 
and we can reduce the dimension of the linear program as discussed in 
Section~\ref{reduct}. This step is repeated at most $n$ times. It is important that we can start each repetition with a known feasible
solution because the transformation in Section~\ref{reduct} maps the optimal solution of the linear program of repetition~$i$ onto a feasible solution with which 
repetition~$i+1$ can be initialized.
Together with Theorem~\ref{main} this implies that an optimal solution of the linear program~\eqref{eqn:LP} 
can be found by performing in expectation $O\Big(\frac{mn^3}{\delta^2}+\frac{mn^{3/2}\phi}{\delta}\Big)$ pivots
if a basic feasible solution~$x_0$ and the right choice of~$\phi$ are given.
We will refer to this algorithm as \emph{repeated shadow vertex algorithm}.

Since~$\delta$ is not known to the algorithm, the right choice for~$\phi$ cannot easily be computed. Instead we will
try values for~$\phi$ until an optimal solution is found. For~$i\in\mathbb{N}$ let~$\phi_i=2^in^{3/2}$.
First we run the repeated shadow vertex algorithm with~$\phi=\phi_0$ and check whether the returned solution
is an optimal solution for the linear program $\max \lbrace c_0\T x \WHERE x \in P \rbrace$.
If this is not the case, we run the repeated shadow vertex algorithm with~$\phi=\phi_1$,
and so on. We continue until an optimal solution is found.
For~$\phi=\phi_{i^{\star}}$ with~$i^{\star}=\big\lceil\log_2\big(1/\delta\big)\big\rceil+2$ this
is the case because~$\phi_{i^{\star}}>\frac{2n^{3/2}}{\delta}$.

Since~$\phi_{i^{\star}}\le \frac{8n^{3/2}}{\delta}$, in accordance with Theorem~\ref{main},
each of the at most~$i^{\star}=O(\log(1/\delta))$ calls of the repeated shadow vertex algorithm uses
in expectation
\[
      O\bigg(\frac{mn^3}{\delta^2}+\frac{mn^{3/2}\phi_{i^{\star}}}{\delta}\bigg)
    = O\bigg(\frac{mn^3}{\delta^2}\bigg).
\]
pivots. Together this proves the first part of Theorem~\ref{thm:MainNumberOfPivots}. 
The second part follows with Lemma~\ref{lemma:Phase I delta}, which states that Phase~1 can be realized
with increasing~$1/\delta$ by at most~$\sqrt{m}$ and increasing the number of variables from~$n$ to~$n+m\le 2m$. 
This implies that the expected number of
pivots of each call of the repeated shadow vertex algorithm in Phase~1 is~$O(m(n+m)^3\sqrt{m}^2/\delta^2) = O(m^5/\delta^2)$. 
Since~$1/\delta$ can increase by a factor of~$\sqrt{m}$, the argument above yields that we need to run the repeated
shadow vertex algorithm at most~$i^{\star}=O(\log(\sqrt{m}/\delta))$ times in Phase~1 to find a basic feasible solution.  
By setting~$\phi_i=2^i\sqrt{m}(n+m)^{3/2}$ instead of~$\phi_i=2^i(n+m)^{3/2}$ this number can be
reduced to~$i^{\star}=O(\log(1/\delta))$ again.

Theorem~\ref{thm:MainNumberOfPivots2} follows from Theorem~\ref{thm:MainNumberOfPivots} using the following fact from~\cite{BrunschR13}:
Let~$A\in\ZZ^{m\times n}$ be an integer matrix and let~$A'\in\RR^{m\times n}$ be the matrix
that arises from~$A$ by scaling each row such that its norm equals~$1$. If~$\Delta$ denotes
an upper bound for the absolute value of any sub-determinant of~$A$, then~$A'$ satisfies the
$\delta$-distance property for~$\delta = 1/(\Delta^2 n)$. Additionally Lemma~\ref{lemma:Phase I Delta}
states that Phase~1 can be realized without increasing~$\Delta$ but with increasing the number of variables from~$n$ to~$n+m\le 2m$.
Substituting~$1/\delta=\Delta^2n$ in Theorem~\ref{thm:MainNumberOfPivots} almost yields Theorem~\ref{thm:MainNumberOfPivots2}
except for a factor~$O(\log(\Delta^2 n))$ instead of~$O(\log(\Delta+1))$. This factor results from the number~$i^{\star}$ of
calls of the repeated shadow vertex algorithm. The desired factor of~$O(\log(\Delta+1))$ can be achieved by
setting~$\phi_i=2^in^{5/2}$ if a basic feasible solution is known and~$\phi_i=2^i(n+m)^{5/2}$ in Phase~1.

\subsection{Running Time}\label{runtime}

So far we have only discussed the number of pivots. Let us now calculate 
the actual running time of our algorithm.
For an initial basic feasible solution~$x_0$ the repeated shadow vertex algorithm repeats the
following three steps until an optimal solution is found. Initially let~$P'=P$.
\renewcommand{\leftmargini}{2cm}
\begin{enumerate}

\item[\textbf{Step 1:}] Run the shadow vertex algorithm for the linear program $\max \lbrace c\T x \WHERE x \in P' \rbrace$, where $c=\text{pert}(c_0,\phi)$.
                        We will denote this linear program by~$LP'$. 
\item[\textbf{Step 2:}] Let~$x_c$ denote the returned vertex in Step 1, which is optimal for the objective function $c\T x$. Identify an element $a'_i$ of~$x_c$ that is in common with the optimal basis.
\item[\textbf{Step 3:}] Calculate an orthogonal matrix $Q \in \RR^{n \times n}$ that rotates $a'_i$ into the first unit vector $e_1$ as described in Section~\ref{reduct} and set $LP'$ to the projection of the current~$LP'$ onto the orthogonal complement. Let~$P'$ denote the polyhedron of feasible solutions of~$LP'$.
\end{enumerate}
First note that the three steps are repeated at most~$n$ times during the algorithm.
In Step 1 the shadow vertex algorithm is run once. Step 1 to Step 4 of Algorithm~\ref{algorithm:SV} can be performed in time~$O(m)$ as we assumed $P$ to be non-degenerate (this implies $P'$ to be non-degenerate in each further step). Step 5 and Step 6 can be implemented with strongly polynomial running time in a tableau form, described in \cite{Borgwardt86}.
The tableau can be set up in time $O((m-d)d^3)= O(mn^3)$ where $d$ is the dimension of $P'$. By Theorem 1 of \cite{Borgwardt86} we can identify for a vertex on a path the row which leaves the basis and the row which is added to the basis in order to move to the next vertex in time $O(m)$ using the tableau. After that, the tableau has to be updated. This can be done in $O((m-d)d)= O(mn)$ steps. 
Using this and Theorem~\ref{main} we can compute the path from $x_0$ to $x_c$ in expected
time $O\big(mn^3+mn\cdot \big(\frac{mn^2}{\delta^2}+\frac{m\sqrt{n}\phi}{\delta}\big)\big) = O\big(\frac{m^2n^3}{\delta^2}+\frac{m^2n^{3/2}\phi}{\delta}\big)$.
Using that~$\phi \le \frac{8n^{3/2}}{\delta}$, as discussed above, yields a running time
of $O\big(\frac{m^2n^3}{\delta^2}\big)$.
 
Once we have calculated the basis of $x_c$ we can easily compute the element $a_{i}$ of the basis that is also an element of the optimal basis. Assume the rows $a'_1, \dots, a'_n$ are the basis of $x_c$. As mentioned in Section~\ref{identify} we can solve the system of linear equations $[a'_1, \dots, a'_n]\mu=c$ and choose the row for which the coefficient $\mu_i$ is maximal. Then $a'_i$ is part of the optimal basis. As a consequence, Step 2 can be performed in time $O(n^3)$. Moreover solving a system of linear equations is possible in strongly polynomial time using Gaussian elimination.

In Step 3, we compute an orthogonal matrix $Q \in \RR^{d \times d}$ such that $e_1 Q=a_{i}$. Since~$Q$ is orthogonal we obtain the equation $e_1 =a_i Q\T$. It is clear that the first row of~$Q$ is given by~$a_i$. 
Thus, it is sufficient to compute an orthonormal basis including~$a_i$. This is possible in strongly polynomial time $O(d^3)=O(n^3)$ using the Gram-Schmidt process.

Since all Steps are repeated in this order at most $n$ times we obtain a running time $O(\frac{m^2 n^4}{\delta^2})$ for the repeated shadow vertex algorithm.
\begin{theorem}
The repeated shadow vertex algorithm has a running time of~$O(\frac{m^2 n^4}{\delta^2})$.
\end{theorem}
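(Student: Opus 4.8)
The plan is to bound the expected cost of one execution of the three-step loop described above and then multiply by the number of iterations. Since Step~3 reduces the dimension by one and the loop terminates once every facet of the optimal vertex has been identified, there are at most~$n$ iterations; hence it suffices to show that a single iteration runs in expected time~$O\big(\frac{m^2n^3}{\delta^2}\big)$.

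The dominant contribution is Step~1, the invocation of Algorithm~\ref{algorithm:SV}. First I would note that the preprocessing Steps~1--4 of Algorithm~\ref{algorithm:SV} cost~$O(m)$ each, using that~$P'$ is non-degenerate in every iteration (which, together with the strongly polynomial bookkeeping, is inherited from the dimension-reduction of Section~\ref{reduct}). The heart of the estimate is that the walk along the edges of the shadow polygon (Steps~5 and~6) is carried out in a tableau representation as in~\cite{Borgwardt86}: setting up the tableau costs~$O((m-d)d^3)=O(mn^3)$ where~$d$ is the current dimension, each pivot identifies the leaving and entering rows in time~$O(m)$, and updating the tableau afterwards costs~$O((m-d)d)=O(mn)$. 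Plugging in the bound of Theorem~\ref{main} on the expected number of pivots, and using that in the relevant regime~$\phi\le\frac{8n^{3/2}}{\delta}$ so that~$\frac{mn^2}{\delta^2}+\frac{m\sqrt{n}\phi}{\delta}=O\big(\frac{mn^2}{\delta^2}\big)$, the expected cost of Step~1 is
\[
  O\Big(mn^3+mn\cdot\tfrac{mn^2}{\delta^2}\Big)=O\Big(\tfrac{m^2n^3}{\delta^2}\Big)\COMMA
\]
which already dominates the whole iteration.

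For Steps~2 and~3 I would argue that both reduce to standard dense linear algebra on systems of size at most~$n$. In Step~2 we solve~$[a_1',\dots,a_n']\mu=c$ by Gaussian elimination and take an index with maximal~$\mu_i$; by Corollary~\ref{gre-cor} this index belongs to the optimal basis, and the cost is~$O(n^3)$. In Step~3 we complete the identified row~$a_i'$ to an orthonormal basis by Gram--Schmidt to obtain~$Q$, again in~$O(n^3)$, and then project the~$O(m)$ constraints, which is absorbed into the tableau-setup cost already counted. Both routines are strongly polynomial, so the numbers occurring in the computation remain polynomially bounded in the encoding size of the input. Summing the three steps gives~$O\big(\frac{m^2n^3}{\delta^2}\big)$ per iteration, and multiplying by the at most~$n$ iterations yields the claimed~$O\big(\frac{m^2n^4}{\delta^2}\big)$.

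The step requiring the most care is the tableau implementation of the shadow vertex walk: one must verify that a single pivot really can be executed in~$O(mn)$ arithmetic operations in the tableau of~\cite{Borgwardt86}, and that the projection in Section~\ref{reduct} preserves non-degeneracy so that the same tableau machinery applies unchanged in each of the~$n$ iterations. Once Theorem~\ref{main} and Corollary~\ref{gre-cor} are available, the remaining bookkeeping is routine.
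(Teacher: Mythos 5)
Your proposal is correct and follows essentially the same approach as the paper: bound the cost of one pass through Steps~1--3 (with Step~1 dominating at $O(m^2n^3/\delta^2)$ via the tableau implementation of~\cite{Borgwardt86}, the pivot bound of Theorem~\ref{main}, and the bound $\phi\le 8n^{3/2}/\delta$), then multiply by the at most~$n$ iterations. The caveats you flag (per-pivot cost $O(mn)$ in the tableau, preservation of non-degeneracy under the projection of Section~\ref{reduct}) are exactly the points the paper also relies on.
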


The entries of both~$c$ and~$\lambda$ in Algorithm~\ref{algorithm:SV} are continuous random variables. 
In practice it is, however, more realistic to assume that we can
draw a finite number of random bits. In Appendix~\ref{sec:RandomBits} we will show that our algorithm only needs to draw $\text{poly}(\log m, n,
\log(1/\delta))$ random bits in order to obtain the expected running time stated in Theorem~\ref{thm:MainNumberOfPivots} if~$\delta$ (or a good lower bound for it) is known.
However, if the parameter~$\delta$ is not known upfront and only discrete random variables
with a finite precision can be drawn, we have to modify the shadow vertex algorithm. This will give us an additional factor of $O(n)$ in the expected running time.

\section{Analysis of the Shadow Vertex Algorithm}\label{sec:analysis}

For given linear functions $L_1 \colon \RR^n \to \RR$ and $L_2 \colon \RR^n \to \RR$ we denote by $\pi = \pi_{L_1, L_2}$ the function $\pi \colon \RR^n \to \RR^2$, given by $\pi(x) = (L_1(x), L_2(x))$. Note that $n$-dimensional vectors can be treated as linear functions. By $P' = P'_{L_1, L_2}$ we denote the projection $\pi(P)$ of the polytope~$P$ onto the Euclidean plane, and by $R = R_{L_1, L_2}$ we denote the path from the bottommost vertex of~$P'$ to the rightmost vertex of~$P'$ along the edges of the lower envelope of~$P'$.

Our goal is to bound the expected number of edges of the path $R = R_{c, w}$, which is random since~$c$ and~$w$ are random. Each edge of~$R$ corresponds to a slope in $(0, \infty)$. These slopes are pairwise distinct with probability one (see Lemma~\ref{lemma:failure probability II}). Hence, the number of edges of~$R$ equals the number of distinct slopes of~$R$.

\begin{definition}
\label{definition:failure event}
For a real $\eps > 0$ let~$\F_\eps$ denote the event that there are three pairwise distinct vertices $z_1, z_2, z_3$ of~$P$ such that~$z_1$ and~$z_3$ are neighbors of~$z_2$ and such that
\[
  \left| \frac{w\T \cdot (z_2-z_1)}{c\T \cdot (z_2-z_1)} - \frac{w\T \cdot (z_3-z_2)}{c\T \cdot (z_3-z_2)} \right| \leq \eps \DOT
\]
\end{definition}

Note that if event~$\F_\eps$ does not occur, then all slopes of~$R$ differ by more than~$\eps$. Particularly, all slopes are pairwise distinct. 
First of all we show that event~$\F_\eps$ is very unlikely to occur if~$\eps$ is chosen sufficiently small. The proof of the following
lemma is almost identical to the corresponding proof in~\cite{BrunschR13} except that we need to adapt it to the different random model of~$c$. 
The proof as well as the proofs of some other lemmas that are almost identical to their counterparts in~\cite{BrunschR13}
can be found in Appendix~\ref{appendix:OmittedProofs} for the sake of completeness. Proofs that are completely identical to~\cite{BrunschR13}
are omitted.

\begin{lemma}
\label{lemma:failure probability II}
The probability of event~$\F_\eps$ tends to~$0$ for $\eps \to 0$.
\end{lemma}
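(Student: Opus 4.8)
The plan is to deduce the statement from a finite union bound together with the fact that both random vectors $c$ and $w$ have absolutely continuous distributions.

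First I would use that~$P$, being a polytope, has only finitely many vertices, and hence there are only finitely many triples $(z_1,z_2,z_3)$ of the type appearing in Definition~\ref{definition:failure event}. Since $\F_{\eps'}\subseteq\F_\eps$ whenever $\eps'\le\eps$, the probabilities $\Pr{\F_\eps}$ are monotone in~$\eps$ and therefore converge as $\eps\downarrow0$, the limit being $\Pr{\bigcap_{\eps>0}\F_\eps}$ by continuity of the measure from above. By finiteness of the number of triples, $\bigcap_{\eps>0}\F_\eps$ is contained in the union, over the finitely many triples $(z_1,z_2,z_3)$ as above, of the events that $c\T(z_2-z_1)$ and $c\T(z_3-z_2)$ are both nonzero and that the two quotients appearing in Definition~\ref{definition:failure event} are equal. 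Hence it suffices to show that each such event has probability~$0$.

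So fix one triple and set $y_1:=z_2-z_1$ and $y_2:=z_3-z_2$; both are nonzero because the~$z_i$ are pairwise distinct. The one genuinely geometric step is the observation that $y_1$ and $y_2$ are linearly independent: if they were parallel, the three vertices $z_1,z_2,z_3$ would be collinear, and then one of them would be a proper convex combination of the other two, which is impossible for vertices of a polytope. I would then condition on~$c$. Since~$c$ is uniformly distributed on the box $I_1\times\cdots\times I_n$ of positive volume~$\phi^{-n}$, for every fixed nonzero vector~$y$ the set $\SET{x\WHERE x\T y=0}$ is a Lebesgue null set and hence $\Pr{c\T y=0}=0$. Thus, almost surely $c\T y_1\ne0$ and $c\T y_2\ne0$, and on this event the equality of the two quotients is equivalent to $w\T v(c)=0$, where $v(c):=\frac{y_1}{c\T y_1}-\frac{y_2}{c\T y_2}$ is nonzero by the linear independence of $y_1$ and~$y_2$.

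It remains to bound $\Pr{w\T v(c)=0}$. For a fixed value of~$c$ with $c\T y_1,c\T y_2\ne0$, recall that $w=-[u_1,\dots,u_n]\,\lambda$ with $U:=[u_1,\dots,u_n]$ invertible (the~$u_k$ are linearly independent by construction) and $\lambda$ uniform on $(0,1]^n$, independent of~$c$. Then $w\T v(c)=-\lambda\T\tilde v$ with $\tilde v:=U\T v(c)\ne0$; choosing an index~$i_0$ with $\tilde v_{i_0}\ne0$ and conditioning on all coordinates of~$\lambda$ except~$\lambda_{i_0}$, the quantity $\lambda\T\tilde v$ becomes a non-constant affine function of the uniform variable $\lambda_{i_0}$ and hence has no atoms, so that $\Pr{w\T v(c)=0\WHERE c}=0$. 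Integrating over~$c$ yields $\Pr{w\T v(c)=0}=0$, which completes the argument. The main obstacle is precisely the linear-independence (equivalently, non-collinearity) observation above: without it the two quotients could coincide identically and the conditional probability would be~$1$ rather than~$0$; the rest is the routine fact that a non-constant affine image of an absolutely continuous random variable is itself absolutely continuous.
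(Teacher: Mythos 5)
Your proof is correct, and it takes a genuinely different route from the paper's. The paper proves the lemma by showing, for each fixed triple, a quantitative bound of the form $\Pr{|\,\cdot\,|\le\eps}\le\gamma\cdot\eps$ with $\gamma<\infty$, and then appeals to a union bound over $m^{3n}$ triples; it does so by conditioning on~$c$, introducing the basis change $M=[a_1,\ldots,a_{n-2},\Delta_z,a_{n-1}]$ built from the constraints tight at~$z_1,z_2$, and invoking Corollary~\ref{corollary.Prob:enough randomness} to bound the probability that the coordinate~$Z$ of~$w$ in that basis lands in an interval of length~$O(\eps)$. That structure is inherited from the proof of Lemma~\ref{lemma:probability bound}, where the explicit interval length actually matters; here it is overkill. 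You instead pass directly to the limiting event via monotonicity and continuity of probability from above, and argue that the limiting event has measure zero. The key geometric observations also differ: the paper uses the non-degeneracy of~$P$ and the adjacency structure of $(z_1,z_2,z_3)$ to get a linearly independent system~$a_1,\ldots,a_{n-1},\Delta_z$, whereas you use only that three distinct vertices of a polytope cannot be collinear, so $y_1=z_2-z_1$ and $y_2=z_3-z_2$ are linearly independent and hence $v(c)=y_1/(c\T y_1)-y_2/(c\T y_2)\ne 0$ whenever both denominators are nonzero (which, as you note, holds almost surely). From there, the invertibility of $U=[u_1,\ldots,u_n]$ reduces everything to the observation that a non-constant affine function of a uniform random variable is atomless. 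Your argument is more elementary and arguably cleaner for this particular lemma, at the cost of not providing the quantitative $O(\eps)$ rate; since only the qualitative $\eps\to 0$ statement is used in Lemma~\ref{lemma:expectation bound}, nothing is lost.
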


Let~$p$ be a vertex of~$R$, but not the bottommost vertex $\pi(x_0)$. We call the slope~$s$ of the edge incident to~$p$ to the left of~$p$ \emph{the slope of~$p$}. As a convention, we set the slope of $\pi(x_0)$ to~$0$ which is smaller than the slope of any other vertex~$p$ of~$R$.

\begin{figure}
  \begin{center}
    \includegraphics[width=0.5\textwidth]{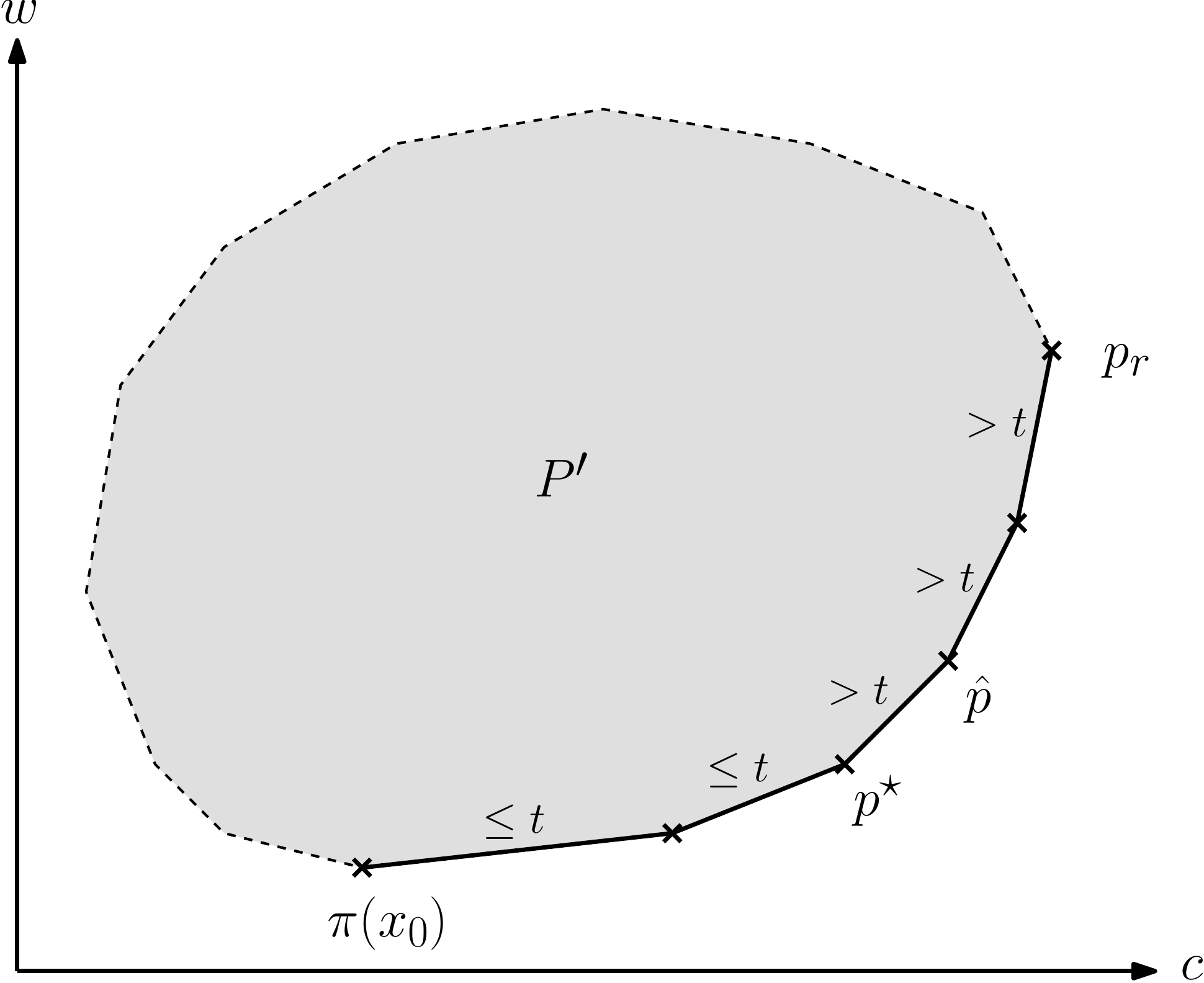}
  \end{center}
  \caption{Slopes of the vertices of~$R$}
  \label{fig:slopes}
\end{figure}

Let $t \geq 0$ be an arbitrary real, let~$p^\star$ be the rightmost vertex of~$R$ whose slope is at most~$t$, and let~$\hat{p}$ be the right neighbor of~$p^\star$, i.e., $\hat{p}$ is the leftmost vertex of~$R$ whose slope exceeds~$t$  (see Figure~\ref{fig:slopes}). Let~$x^\star$ and~$\hat{x}$ be the neighboring vertices of~$P$ with $\pi(x^\star) = p^\star$ and $\pi(\hat{x}) = \hat{p}$. Now let $i = i(x^\star, \hat{x}) \in [m]$ be the index for which $a_i\T x^\star = b_i$ and for which~$\hat{x}$ is the (unique) neighbor~$x$ of~$x^\star$ for which $a_i\T x < b_i$. This index is unique due to the non-degeneracy of the polytope~$P$. For an arbitrary real $\gamma \geq 0$ we consider the vector $\tilde{w} \DEF w - \gamma \cdot a_i$.

\begin{lemma}[Lemma~9 of~\cite{BrunschR13}]
\label{lemma:reconstruct}
Let $\tilde{\pi} = \pi_{c, \tilde{w}}$ and let $\tilde{R} = R_{c, \tilde{w}}$ be the path from $\tilde{\pi}(x_0)$ to the rightmost vertex~$\tilde{p}_r$ of the projection $\tilde{\pi}(P)$ of polytope~$P$. Furthermore, let~$\tilde{p}^\star$ be the rightmost vertex of~$\tilde{R}$ whose slope does not exceed~$t$. Then
$
  \tilde{p}^\star = \tilde{\pi}(x^\star)
$.
\end{lemma}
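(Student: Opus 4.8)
The claim is that shifting the objective $w$ to $\tilde{w} = w - \gamma a_i$ does not move the distinguished vertex $x^\star$: it is still the rightmost vertex of the path $\tilde{R}$ whose slope does not exceed $t$. The plan is to exploit the fact that $x^\star$ and $\hat{x}$ are neighbors on $P$ joined by the edge along which constraint $i$ becomes slack, and that $x^\star$ satisfies $a_i\T x^\star = b_i$. I would first recall (from the shadow-vertex machinery in \cite{BrunschR13}, Lemma~9) the characterization of which vertex of $P'$ sits at the ``breakpoint'' at slope $t$: a vertex $v$ of $P$ projects to $\tilde{p}^\star$ precisely when $\tilde{\pi}(v)$ lies on the lower-right boundary of $\tilde{\pi}(P)$, the left edge at $\tilde{\pi}(v)$ has slope $\le t$, and the right edge has slope $> t$. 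So the proof reduces to two things: (i) $\tilde{\pi}(x^\star)$ is still a vertex of the relevant portion of $\tilde\pi(P)$, and (ii) the left/right slope conditions at $x^\star$ are preserved in passing from $w$ to $\tilde w$.

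**The key computation.** For any edge of $P$ from a vertex $y$ to a vertex $z$, the slope of its projection under $\pi_{c,w}$ is $\frac{w\T(z-y)}{c\T(z-y)}$, and under $\tilde\pi = \pi_{c,\tilde w}$ it becomes
\[
  \frac{\tilde w\T(z-y)}{c\T(z-y)} = \frac{w\T(z-y)}{c\T(z-y)} - \gamma\cdot\frac{a_i\T(z-y)}{c\T(z-y)}.
\]
Now the crucial point: for every edge of $P$ incident to $x^\star$ \emph{other than} the edge to $\hat x$, both endpoints satisfy $a_i\T x = b_i$ (since $i\in B_{x^\star}$ and the non-degeneracy means exactly one incident edge leaves the facet $a_i\T x = b_i$), so $a_i\T(z - x^\star) = 0$ and the slope is unchanged. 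For the one edge $x^\star\hat x$, we have $a_i\T(\hat x - x^\star) = a_i\T\hat x - b_i < 0$, so the slope of that edge only \emph{increases} when we pass to $\tilde w$ (provided the sign of $c\T(\hat x - x^\star)$, the direction of traversal, is the rightward one — which it is, since $\hat p$ is to the right of $p^\star$ on $R$, so $c\T(\hat x - x^\star) > 0$). Hence at $\tilde\pi(x^\star)$ the left edge keeps its slope $\le t$ and the right edge, which had slope $> t$, still has slope $> t$. This is the mechanism that pins $x^\star$ in place.

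**Assembling the argument.** I would then argue that $\tilde\pi(x^\star)$ remains a vertex on the lower-right envelope of $\tilde\pi(P)$: since the two edges of $P'$ at $p^\star$ are preserved as edges of $\tilde P'$ at $\tilde\pi(x^\star)$ up to the controlled slope change above, and the envelope is determined locally by the turning behavior, $x^\star$ stays a vertex of $\tilde R$. Finally, to conclude $\tilde p^\star = \tilde\pi(x^\star)$ I must also check no \emph{other} vertex to the right of $\tilde\pi(x^\star)$ on $\tilde R$ sneaks in with slope $\le t$; this follows because the slopes along $\tilde R$ are monotone increasing (a basic property of the lower envelope / shadow path), so the breakpoint at level $t$ is unique, and we have exhibited it at $\tilde\pi(x^\star)$. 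The cleanest route is actually to cite Lemma~9 of \cite{BrunschR13} directly, since that lemma is stated for exactly this construction; the excerpt presents the current lemma \emph{as} that lemma, so the proof is essentially a pointer to \cite{BrunschR13} together with the observation that the argument there only uses the combinatorial facts about $x^\star$, $\hat x$, and index $i$ recalled above, none of which depend on the random model of $c$.

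**Expected main obstacle.** The only delicate point is verifying that $\tilde\pi(x^\star)$ does not get ``shadowed'' — i.e., that after the shift it still lies on the lower-right boundary of $\tilde\pi(P)$ rather than being pushed into the interior of the projected polygon. Handling this cleanly requires the observation that $\gamma \ge 0$ and that the shift is in the direction $-a_i$, which is an outer normal to the facet $a_i\T x = b_i$ containing $x^\star$; combined with monotonicity of slopes along the envelope, this rules out the bad case. Everything else is the routine slope computation above.
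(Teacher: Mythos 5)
Your local slope computation is correct and you have identified the right ingredients ($\gamma \ge 0$, $a_i\T x \le b_i$ on $P$, equality at $x^\star$, and the direction of $\hat{x}$), but the argument has a genuine gap exactly where you flag it: you never actually establish that $\tilde\pi(x^\star)$ lies on the lower envelope $\tilde{R}$ of $\tilde\pi(P)$. Being on the lower envelope is a \emph{global} condition (it requires $x^\star$ to minimize $\tilde{w}\T x$ over all $x \in P$ with $c\T x = c\T x^\star$, not just over its neighbors), and the behavior of the two incident edges at $x^\star$ says nothing about what happens to the projections of far-away vertices of $P$. Your proposed resolution --- ``outer normal plus monotonicity of slopes rules out the bad case'' --- is an intuition, not an argument; as written it would not survive refereeing.

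The paper closes this gap with a short global comparison that you have all the ingredients for but do not assemble. Define the auxiliary objective $\bar{w}(x) := \tilde{w}\T x + \gamma\, b_i$. Then for every $x \in P$ one has $\bar{w}(x) = w\T x - \gamma(a_i\T x - b_i) \ge w\T x$ since $\gamma \ge 0$ and $a_i\T x \le b_i$, with equality at $x = x^\star$. Hence the projection $\bar\pi = \pi_{c,\bar{w}}$ agrees with $\pi$ in the first coordinate and dominates it pointwise in the second, while the two coincide at $x^\star$. This forces the lower envelope $\bar{R}$ to lie weakly above $R$ and to touch it at $p^\star = \pi(x^\star) = \bar\pi(x^\star)$; a one-line convexity argument then gives that the left slope of $\bar{R}$ at $p^\star$ is at most the left slope of $R$ there (hence $\le t$) and the right slope of $\bar{R}$ is at least the right slope of $R$ there (hence $> t$), so $p^\star$ is the rightmost vertex of $\bar{R}$ with slope $\le t$. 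Since $\bar{R}$ and $\tilde{R}$ differ by the rigid translation $(0, \gamma b_i)$, the slopes are identical, giving $\tilde{p}^\star = \tilde\pi(x^\star)$. Note this argument never needs your case analysis over the $n$ edges incident to $x^\star$; the single dominance inequality handles everything at once, including the on-envelope question you could not resolve locally.
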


Let us reformulate the statement of Lemma~\ref{lemma:reconstruct} as follows: The vertex~$\tilde{p}^\star$ is defined for the path~$\tilde{R}$ of polygon $\tilde{\pi}(R)$ with the same rules as used to define the vertex~$p^\star$ of the original path~$R$ of polygon $\pi(P)$. Even though~$R$ and~$\tilde{R}$ can be very different in shape, both vertices, $p^\star$ and~$\tilde{p}^\star$, correspond to the same solution~$x^\star$ in the polytope~$P$, that is, $p^\star = \pi(x^\star)$ and $\tilde{p}^\star = \tilde{\pi}(x^\star)$.

Lemma~\ref{lemma:reconstruct} holds for any vector~$\tilde{w}$ on the ray $\vec{r} = \SET{ w - \gamma \cdot a_i \WHERE \gamma \geq 0}$. As $\|w\| \leq n$ (see Section~\ref{sec:Algorithm}), we have $w \in [-n,n]^n$. Hence, ray~$\vec{r}$ intersects the boundary of $[-n,n]^n$ in a unique point~$z$. We choose $\tilde{w} = \tilde{w}(w, i) \DEF z$ and obtain the following result.

\begin{corollary}
\label{corollary:reconstruct}
Let $\tilde{\pi} = \pi_{c, \tilde{w}(w, i)}$ and let~$\tilde{p}^\star$ be the rightmost vertex of path $\tilde{R} = R_{c, \tilde{w}(w, i)}$ whose slope does not exceed~$t$. Then
$
  \tilde{p}^\star = \tilde{\pi}(x^\star)
$.
\end{corollary}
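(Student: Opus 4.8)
The plan is to derive Corollary~\ref{corollary:reconstruct} as an immediate specialization of Lemma~\ref{lemma:reconstruct}. Lemma~\ref{lemma:reconstruct} already asserts that the reconstruction property $\tilde{p}^\star = \tilde{\pi}(x^\star)$ holds for \emph{every} vector of the form $\tilde{w} = w - \gamma \cdot a_i$ with $\gamma \geq 0$, i.e.\ for every point on the ray $\vec{r} = \SET{ w - \gamma \cdot a_i \WHERE \gamma \geq 0}$. So the only thing that needs justification is that the particular choice $\tilde{w}(w,i)$ used in the corollary is a well-defined point on this ray, and that choosing this specific point is legitimate.

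First I would recall that $\|w\| \leq n$, which was established in Section~\ref{sec:Algorithm} from $w = -[u_1,\dots,u_n]\cdot\lambda$ with $\lambda \in (0,1]^n$ and $\|u_k\| = 1$; hence $w$ lies in the cube $[-n,n]^n$. Next I would argue that the ray $\vec{r}$, which starts at the interior (or boundary) point $w$ of the cube and moves in the direction $-a_i \neq 0^n$ (note $\|a_i\| = 1$), must eventually leave the cube, and therefore crosses the boundary of $[-n,n]^n$. Since the cube is convex and the ray is a half-line emanating from a point of the cube in a nonzero direction, the intersection of $\vec{r}$ with the boundary is a single point; call it $z$. Setting $\tilde{w}(w,i) \DEF z$ then gives a point on $\vec{r}$, so it is of the form $w - \gamma \cdot a_i$ for some $\gamma \geq 0$, and Lemma~\ref{lemma:reconstruct} applies verbatim with this choice of $\tilde{w}$, yielding $\tilde{p}^\star = \tilde{\pi}(x^\star)$ for $\tilde{\pi} = \pi_{c,\tilde{w}(w,i)}$.

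There is essentially no obstacle here: the corollary is a packaging step whose purpose is to pin down a canonical representative of the ray (the one on the boundary of $[-n,n]^n$) so that later bounds can control the magnitude of $\tilde w$. The only mild subtlety worth a sentence is uniqueness of the boundary intersection point~$z$ — if $w$ happened to already lie on the boundary of the cube and $-a_i$ pointed outward, then $z = w$; if $-a_i$ pointed inward or along a face, one takes the first exit point. In all cases convexity of $[-n,n]^n$ guarantees the exit point is unique, so $z$ is well defined. With $z$ in hand the statement follows directly from Lemma~\ref{lemma:reconstruct}, so the proof is one short paragraph.
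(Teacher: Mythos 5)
Your proposal is correct and takes essentially the same route as the paper: the paper gives no separate proof for the corollary, but the sentences immediately preceding it are exactly the specialization argument you give, namely that $\|w\| \le n$ places $w$ in $[-n,n]^n$, the ray $\vec{r}$ therefore meets the cube's boundary at a well-defined point $z$, and Lemma~\ref{lemma:reconstruct} applies to $\tilde{w}(w,i) \DEF z$ since it lies on $\vec{r}$. The edge cases you raise about $w$ lying on the boundary are moot here, since $\|w\| < n$ holds strictly (the $u_k$ are linearly independent, so the triangle inequality $\|\!\sum_k \lambda_k u_k\| \le \sum_k \lambda_k$ cannot be tight), putting $w$ in the interior of the cube.
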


Note that Corollary~\ref{corollary:reconstruct} only holds for the right choice of index $i = i(x^\star, \hat{x})$. However, the vector $\tilde{w}(w, i)$ can be defined for any vector $w \in [-n, n]^n$ and any index $i \in [m]$. In the remainder, index~$i$ is an arbitrary index from~$[m]$.

We can now define the following event that is parameterized in~$i$, $t$, and a real $\eps > 0$ and that depends on~$c$ and~$w$.

\begin{definition}
\label{definition:event E}
For an index $i \in [m]$ and a real $t \geq 0$ let~$\tilde{p}^\star$ be the rightmost vertex of $\tilde{R} = R_{c, \tilde{w}(w, i)}$ whose slope does not exceed~$t$ and let~$y^\star$ be the corresponding vertex of~$P$. For a real $\eps > 0$ we denote by~$\E_{i, t, \eps}$ the event that the conditions
\begin{itemize}[leftmargin=0.5cm]

  \item[$\bullet$] $a_i\T y^\star = b_i$ and

  \item[$\bullet$] $\frac{w\T (\hat{y} - y^\star)}{c\T (\hat{y} - y^\star)} \in (t, t+\eps]$, where~$\hat{y}$ is the neighbor~$y$ of~$y^\star$ for which $a_i\T y < b_i$,

\end{itemize}
are met. Note that the vertex~$\hat{y}$ always exists and that it is unique since the polytope~$P$ is non-degenerate.
\end{definition}

Let us remark that the vertices~$y^\star$ and~$\hat{y}$, which depend on the index~$i$, equal~$x^\star$ and~$\hat{x}$ if we choose $i = i(x^\star, \hat{x})$. For other choices of~$i$, this is, in general, not the case.

Observe that all possible realizations of~$w$ from the line $L \DEF \SET{ w + x \cdot a_i \WHERE x \in \RR }$ are mapped to the same vector $\tilde{w}(w, i)$. Consequently, if~$c$ is fixed and if we only consider realizations of~$\lambda$ for which $w \in L$, then vertex~$\tilde{p}^\star$ and, hence, vertex~$y^\star$ from Definition~\ref{definition:event E} are already determined. However, since~$w$ is not completely specified, we have some randomness left for event $\E_{i, t, \eps}$ to occur. This allows us to bound the probability of event $\E_{i, t, \eps}$ from above (see proof of Lemma~\ref{lemma:probability bound}). The next lemma shows why this probability matters.

\begin{lemma}[Lemma~12 from~\cite{BrunschR13}]
\label{lemma:event covering}
For any $t \geq 0$ and $\eps > 0$ let $\A_{t, \eps}$ denote the event that the path $R = R_{c, w}$ has a slope in $(t, t+\eps]$. Then, $\A_{t, \eps} \subseteq \bigcup_{i=1}^m \E_{i, t, \eps}$.
\end{lemma}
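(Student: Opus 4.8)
The plan is to prove Lemma~\ref{lemma:event covering} by showing that the specific index $i = i(x^\star, \hat{x})$ associated with the slope-jump at parameter $t$ on the path $R = R_{c,w}$ witnesses the event $\E_{i,t,\eps}$. So suppose event $\A_{t,\eps}$ occurs, i.e., $R$ has a slope in $(t, t+\eps]$. Recall the definitions preceding Lemma~\ref{lemma:reconstruct}: $p^\star$ is the rightmost vertex of $R$ whose slope is at most $t$, $\hat p$ is its right neighbor (the leftmost vertex of $R$ whose slope exceeds $t$), and $x^\star, \hat x$ are the corresponding vertices of $P$. First I would observe that the slope of $\hat p$ — which by definition is the slope of the edge of $R$ joining $p^\star$ and $\hat p$ — is exactly $\frac{w\T(\hat x - x^\star)}{c\T(\hat x - x^\star)}$, and that this slope lies in $(t, t+\eps]$; the lower bound is immediate because $\hat p$'s slope exceeds $t$ by choice of $p^\star$, and the upper bound holds because $\A_{t,\eps}$ says some slope of $R$ lies in $(t, t+\eps]$ and by monotonicity of slopes along $R$ the smallest slope exceeding $t$ is the one at $\hat p$.

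Next I would set $i = i(x^\star, \hat x)$, the index for which $a_i\T x^\star = b_i$ and $\hat x$ is the unique neighbor $x$ of $x^\star$ with $a_i\T x < b_i$; this is well-defined by non-degeneracy of $P$. The key step is then to invoke Corollary~\ref{corollary:reconstruct}: for this choice of $i$, the rightmost vertex $\tilde p^\star$ of $\tilde R = R_{c, \tilde w(w,i)}$ whose slope does not exceed $t$ satisfies $\tilde p^\star = \tilde\pi(x^\star)$, so the vertex $y^\star$ of $P$ from Definition~\ref{definition:event E} equals $x^\star$. As remarked in the excerpt right after Definition~\ref{definition:event E}, for $i = i(x^\star,\hat x)$ the neighbor $\hat y$ (the neighbor $y$ of $y^\star$ with $a_i\T y < b_i$) then equals $\hat x$. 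Now I check the two bullet conditions of $\E_{i,t,\eps}$: the first, $a_i\T y^\star = b_i$, holds because $y^\star = x^\star$ and $a_i\T x^\star = b_i$ by the definition of $i$; the second, $\frac{w\T(\hat y - y^\star)}{c\T(\hat y - y^\star)} \in (t, t+\eps]$, is precisely the slope-of-$\hat p$ computation from the first paragraph, since $y^\star = x^\star$ and $\hat y = \hat x$. Hence $\E_{i,t,\eps}$ occurs, and therefore $\A_{t,\eps} \subseteq \bigcup_{i=1}^m \E_{i,t,\eps}$.

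The main obstacle — and the place where care is genuinely needed — is the identification $y^\star = x^\star$: one must make sure that the parameter-$t$ vertex defined on the transformed path $\tilde R$ (with direction $\tilde w(w,i)$) actually coincides with the one defined on the original path $R$, and this is exactly the content that Corollary~\ref{corollary:reconstruct} (itself built on Lemma~\ref{lemma:reconstruct}) supplies for the correct index $i = i(x^\star,\hat x)$. Everything else is bookkeeping: matching up the slope of the first edge of $R$ past parameter $t$ with the quotient $\frac{w\T(\hat x - x^\star)}{c\T(\hat x - x^\star)}$, and using monotonicity of edge slopes along the lower-envelope path $R$ to pin this quotient into the interval $(t, t+\eps]$ rather than merely into $(t, \infty)$. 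Since this lemma is quoted verbatim as Lemma~12 of~\cite{BrunschR13}, the argument is identical to the one there and the randomness model for $c$ plays no role, so no new difficulties arise beyond re-deriving the geometric facts above.
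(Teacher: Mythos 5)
Your proof is correct and follows the same route the paper (by citing Lemma~12 of~\cite{BrunschR13} and omitting its proof) intends: take the witness index $i = i(x^\star,\hat{x})$, use Corollary~\ref{corollary:reconstruct} to identify $y^\star = x^\star$ and $\hat{y} = \hat{x}$, and then read off the two conditions of $\E_{i,t,\eps}$ from the definition of $p^\star$, $\hat{p}$ and the monotonicity of slopes along the lower envelope~$R$. No gaps.
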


With Lemma~\ref{lemma:event covering} we can now bound the probability of event $\A_{t, \eps}$.
The proof of the next lemma is almost identical to the proof of Lemma~13 from~\cite{BrunschR13}.
We include it in the appendix for the sake of completeness. The only differences to Lemma~13 from~\cite{BrunschR13}
are that we can now use the stronger upper bound $\|c\|\le 2$ instead of $\|c\|\le n$ and that
we have more carefully analyzed the case of large~$t$. 

\begin{lemma}
\label{lemma:probability bound}
For any~$\phi\ge\sqrt{n}$, any~$t \geq 0$, and any $\eps > 0$ the probability of event $\A_{t, \eps}$ is bounded by
\[
  \Pr{\A_{t, \eps}} \leq \frac{2mn^2\eps}{\max \SET{ \frac n2, t } \cdot \delta^2} 
  \leq \frac{4mn\eps}{\delta^2}\DOT
\]
\end{lemma}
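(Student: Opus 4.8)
The plan is to bound $\Pr{\A_{t,\eps}}$ by combining the covering statement of Lemma~\ref{lemma:event covering} with a union bound over the $m$ indices, so that it suffices to bound $\Pr{\E_{i,t,\eps}}$ for a fixed $i \in [m]$. For a fixed index $i$, I would exploit the key structural observation made just before Lemma~\ref{lemma:event covering}: all realizations of $w$ lying on the line $L = \SET{w + x\cdot a_i \WHERE x\in\RR}$ are mapped to the same $\tilde{w}(w,i)$, and hence (conditioning on $c$) the vertices $y^\star$ and $\hat{y}$ in Definition~\ref{definition:event E} are already determined once we fix the ``position'' of $w$ transverse to the direction $a_i$. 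So I would fix $c$ and condition on the component of $\lambda$ that pins down $w$ up to its $L$-coordinate; what remains is a one-dimensional piece of randomness — the coordinate of $w$ along $a_i$ — and I must show that the event $\frac{w\T(\hat y - y^\star)}{c\T(\hat y - y^\star)} \in (t, t+\eps]$, together with $a_i\T y^\star = b_i$, has small conditional probability.

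The next step is the calculation of that conditional probability. Write $v \DEF \hat y - y^\star$, which is determined by the conditioning. The quantity $\frac{w\T v}{c\T v}$ is an affine function of the remaining scalar parameter (the coefficient of $a_i$ in $w = w_\perp + \xi a_i$), namely $\frac{w_\perp\T v + \xi\, a_i\T v}{c\T v}$, so the event that it lands in an interval of length $\eps$ is the event that $\xi$ lands in an interval of length $\eps\cdot\frac{|c\T v|}{|a_i\T v|}$. I then need two ingredients: (i) an upper bound on the density of $\xi$ — since $w = -[u_1,\dots,u_n]\lambda$ with $\lambda$ uniform on $(0,1]^n$, the direction $a_i$ has a component of length at least $\delta$ along some $u_k$ by the $\delta$-distance property applied cleverly (this is where $\delta$ enters, exactly as in~\cite{BrunschR13}), giving a density bound of order $1/\delta$ for the relevant coordinate; and (ii) a lower bound on $|a_i\T v|$ relative to $|c\T v|$. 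For (ii), $a_i\T v = a_i\T(\hat y - y^\star) < 0$ with $|a_i\T v| \ge \delta\cdot(\text{something})$ again via the distance property on the edge of $P$, while $|c\T v| \le \|c\|\cdot\|v\|$; using $\|c\|\le 2$ (the improvement over~\cite{BrunschR13}, valid since $\|c_0\|=1$ and $\|c_0-c\|\le\sqrt n/\phi\le 1$ for $\phi\ge\sqrt n$) and bounding $\|v\|$ in terms of $n/\delta$ yields the numerator factor $n$. Multiplying the density bound, the interval length, and then the union bound over $i$, I expect to land on $\Pr{\A_{t,\eps}} \le \frac{2mn^2\eps}{(n/2)\cdot\delta^2}$ for the regime $t\le n/2$. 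For the second, sharper term governed by large $t$, the point is that when $t$ is large the slope $\frac{w\T v}{c\T v}$ being large forces $|c\T v|$ to be correspondingly small relative to $\|w\|\le n$, i.e. $|c\T v|\le \|w\|\cdot\|v\|/t \le n\|v\|/t$; feeding this improved bound on $|c\T v|$ back into step (ii) replaces one factor from the denominator and produces the $\max\SET{n/2, t}$ in the denominator. The final inequality $\frac{2mn^2\eps}{\max\{n/2,t\}\delta^2} \le \frac{4mn\eps}{\delta^2}$ is then immediate since $\max\{n/2,t\}\ge n/2$.

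\textbf{Main obstacle.} The routine part is the affine-function/interval-length bookkeeping and the union bound. The genuinely delicate step is extracting the two $\delta$-factors in the denominator from the $\delta$-distance property: one must carefully choose which subspace to apply Definition~\ref{def:delta} to in order to bound both the density of the transverse coordinate of $w$ (via the $u_k$'s forming a basis of a tight vertex) and the quantity $|a_i\T(\hat y - y^\star)|$ along the edge, and one must verify that the conditioning argument genuinely leaves exactly one scalar of fresh randomness with a controlled density — this is the place where the change from the symmetric two-cone model of~\cite{BrunschR13} to the asymmetric model (cone for $w$, box for $c$) could in principle cause trouble, but since here $w$ still carries the relevant randomness and $c$ is merely fixed and bounded, the argument should go through with only the bookkeeping changes noted above (the $\|c\|\le 2$ bound and the more careful large-$t$ case).
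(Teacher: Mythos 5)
Your proposal follows essentially the same route as the paper's own proof: union bound via Lemma~\ref{lemma:event covering}, deferred decisions fixing $c$ and the transverse part of $w$, a one-dimensional density bound for the $a_i$-component of $w$ coming from the $\delta$-distance property (formalized in the paper via Corollary~\ref{corollary.Prob:enough randomness}), plus $|a_i\T(\hat y - y^\star)|\ge\delta\|\hat y-y^\star\|$, $\|c\|\le 2$, and the large-$t$ refinement using $|c\T v|\le\|w\|\|v\|/t\le n\|v\|/t$. One small bookkeeping slip: you need not ``bound $\|v\|$ in terms of $n/\delta$'' — since $\beta$ is scale-invariant one normalizes $\|v\|=1$, and the numerator factor of $n$ comes from $\|w\|\le n$ (in the large-$t$ branch) together with a further factor of $n$ from the density bound, not from $\|v\|$.
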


\begin{lemma}
\label{lemma:expectation bound}
For any interval~$I$ let~$X_{I}$ denote the number of slopes of $R = R_{c,w}$ that lie in the interval~$I$.
Then, for any~$\phi\ge\sqrt{n}$,
\[
  \Ex{X_{(0,n]}} \leq \frac{4mn^2}{\delta^2}
\]
\end{lemma}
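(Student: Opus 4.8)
The plan is to integrate the pointwise tail bound of Lemma~\ref{lemma:probability bound} over the range $(0,n]$. Fix an integer $N$ and set $\eps \DEF n/N$, and partition $(0,n]$ into the $N$ half-open intervals $I_k \DEF (k\eps,(k+1)\eps]$, $k=0,\dots,N-1$, so that $X_{(0,n]} = \sum_{k=0}^{N-1} X_{I_k}$ and hence $\Ex{X_{(0,n]}} = \sum_{k=0}^{N-1}\Ex{X_{I_k}}$. The point of the partition is that a single interval $I_k$ can normally contain at most one slope of $R$: as noted right after Definition~\ref{definition:failure event}, if the event $\F_\eps$ does not occur then any two slopes of $R$ differ by more than $\eps$ (the slopes increase monotonically along the lower-envelope path $R$, so it suffices that consecutive ones differ by more than $\eps$), and then no two of them can lie in a common interval of length $\eps$. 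Consequently, on the complement of $\F_\eps$ we have $X_{I_k} = \mathbf 1[\A_{k\eps,\eps}]$, and in general $X_{I_k} \le \mathbf 1[\A_{k\eps,\eps}] + X_{I_k}\cdot\mathbf 1[\F_\eps]$.

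Summing this over $k$ and taking expectations gives
\[
  \Ex{X_{(0,n]}} \;\le\; \sum_{k=0}^{N-1}\Pr{\A_{k\eps,\eps}} \;+\; \Ex{X_{(0,n]}\cdot\mathbf 1[\F_\eps]}\DOT
\]
The number of edges of $R$ is finite — crudely it is at most the number $\binom mn$ of potential vertices of $P$ — so the second term is bounded by $\binom mn\cdot\Pr{\F_\eps}$, which tends to $0$ as $\eps\to0$ by Lemma~\ref{lemma:failure probability II}. For the first term I would insert Lemma~\ref{lemma:probability bound}, $\Pr{\A_{k\eps,\eps}} \le \frac{2mn^2\eps}{\max\{n/2,\,k\eps\}\cdot\delta^2}$, to obtain
\[
  \sum_{k=0}^{N-1}\Pr{\A_{k\eps,\eps}} \;\le\; \frac{2mn^2}{\delta^2}\cdot\sum_{k=0}^{N-1}\frac{\eps}{\max\{n/2,\,k\eps\}}\DOT
\]

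The remaining sum is a Riemann sum for $\int_0^n \frac{dt}{\max\{n/2,\,t\}} = \int_0^{n/2}\frac{2}{n}\,dt + \int_{n/2}^n\frac{dt}{t} = 1+\ln 2 < 2$; concretely, splitting the sum at the index where $k\eps$ crosses $n/2$, the low part contributes at most $(\tfrac{N}{2}+1)\cdot\tfrac{2\eps}{n} = 1 + \tfrac{2\eps}{n}$ and the high part is at most $\sum_{k\ge n/(2\eps)}^{N-1}\tfrac1k \le \ln 2 + o(1)$ as $\eps\to0$. Combining the last three displays and letting $N\to\infty$ (equivalently $\eps\to0$) yields
\[
  \Ex{X_{(0,n]}} \;\le\; \frac{2mn^2}{\delta^2}\,(1+\ln 2) \;<\; \frac{4mn^2}{\delta^2}\COMMA
\]
which is the claimed bound.

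The only genuinely delicate point is the reduction "$X_{I_k}\le \mathbf 1[\A_{k\eps,\eps}]$ off of $\F_\eps$'': it hinges on combining the remark after Definition~\ref{definition:failure event} with the monotonicity of the slopes along $R$, and on checking that the leftover term $\binom mn\Pr{\F_\eps}$ really does vanish in the limit (Lemma~\ref{lemma:failure probability II}). Everything else is the routine Riemann-sum estimate for $\int_0^n \frac{dt}{\max\{n/2,t\}}$. If one prefers to avoid the crude $\binom mn$ bound, an alternative is to take $\eps = n/2^j$ so that the partitions refine as $j\to\infty$; then $\sum_k\mathbf 1[\A_{k\eps,\eps}]$ increases pointwise to $X_{(0,n]}$ almost surely (the slopes of $R$ are a.s. pairwise distinct by Lemma~\ref{lemma:failure probability II}), and the monotone convergence theorem gives $\Ex{X_{(0,n]}} = \lim_{j\to\infty}\sum_k\Pr{\A_{k\eps,\eps}}$ directly, after which the same Riemann-sum estimate finishes the proof.
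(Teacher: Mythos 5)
Your proposal is correct and follows essentially the same approach as the paper: partition $(0,n]$ into small subintervals, use the failure event $\F_\eps$ to argue that off of $\F_\eps$ each subinterval holds at most one slope, apply Lemma~\ref{lemma:probability bound} termwise, and send the mesh to zero with Lemma~\ref{lemma:failure probability II} killing the $m^n\Pr{\F_\eps}$ error term. The only (harmless) difference is that you keep the $\max\{n/2,t\}$ in the denominator and evaluate the resulting Riemann sum as $\int_0^n dt/\max\{n/2,t\}=1+\ln 2$, giving the slightly sharper $\tfrac{2mn^2}{\delta^2}(1+\ln 2)$, whereas the paper simply bounds $\max\{n/2,t\}\ge n/2$ throughout and sums a constant to land exactly on $4mn^2/\delta^2$.
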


\begin{proof}
For a real $\eps > 0$ let~$\F_\eps$ denote the event from Definition~\ref{definition:failure event}. 
Recall that all slopes of~$R$ differ by more than~$\eps$ if~$\F_\eps$ does not occur.
For~$t\in\mathbb{R}$ and~$\eps>0$ let~$Z_{t,\eps}$ be the random variable that indicates whether~$R$ has a slope in the 
interval $(t,t+\eps]$ or not, i.e., $Z_{t,\eps} = 1$ if~$X_{(t,t+\eps]}>0$ and $Z_{t,\eps} = 0$ if~$X_{(t,t+\eps]}=0$.

Let~$k \geq 1$ be an arbitrary integer.
We subdivide the interval~$(0,n]$ into~$k$ subintervals. If none of them contains more than one slope
then the number~$X_{(0,n]}$ of slopes in the interval~$(0,n]$ equals the number of subintervals for 
which the corresponding $Z$-variable equals~1. Formally
\[
  X_{(0,n]} \leq \begin{cases}
    \sum_{i=0}^{k-1} Z_{i\cdot\frac{n}{k},\frac{n}{k}} & \text{if $\F_{\frac{n}{k}}$ does not occur} \COMMA \cr
    m^n & \text{otherwise} \DOT
  \end{cases}
\]
This is true because $\binom{m}{n-1} \leq m^n$ is a worst-case bound on the number of edges of~$P$ and, hence, of the number of slopes of~$R$. 
Consequently,
\begin{align*}
  \Ex{X_{(0,n]}}
  &\leq \sum_{i=0}^{k-1} \Ex{Z_{i \cdot \frac{n}{k}, \frac{n}{k}}} + \Pr{\F_{\frac{n}{k}}} \cdot m^n
  = \sum_{i=0}^{k-1} \Pr{\A_{i \cdot \frac{n}{k}, \frac{n}{k}}} + \Pr{\F_{\frac{n}{k}}} \cdot m^n \cr
  &\leq \sum_{i=0}^{k-1} \frac{2mn^2 \cdot \frac{n}{k}}{\frac n2\delta^2} + \Pr{\F_{\frac{n}{k}}} \cdot m^n
  = \frac{4mn^{2}}{\delta^2} + \Pr{\F_{\frac{n}{k}}} \cdot m^n \DOT
\end{align*}
The second inequality stems from Lemma~\ref{lemma:probability bound}. 
Now the lemma follows because the bound on $\Ex{X_{(0,n]}}$ holds for any integer $k \geq 1$ and since $\Pr{\F_\eps} \to 0$ for $\eps \to 0$ in accordance with Lemma~\ref{lemma:failure probability II}.
\end{proof}

In~\cite{BrunschR13} we only computed an upper bound for the expected value of~$X_{(0,1]}$. Then we argued that the same
upper bound also holds for the expected value of~$X_{(1,\infty)}$. In order to see this, we simply exchanged the order  
of the objective functions in the projection~$\pi$. Then any edge with a slope of~$s>1$ becomes an edge with 
slope~$\frac{1}{s}<1$. Hence the number of slopes in~$[1,\infty)$ equals the number of slopes in~$(0,1]$ in the scenario
in which the objective functions are exchanged. Due to the symmetry in the choice of the objective functions in~\cite{BrunschR13}
the same analysis as before applies also to that scenario. 

We will now also exchange the order of the objective functions~$w\T x$ and~$c\T x$ in the projection. Since these objective functions
are not anymore
generated by the same random experiment, a simple argument as in~\cite{BrunschR13} is not possible anymore. Instead we
have to go through the whole analysis again. We will use the superscript~$-1$ to indicate that we are referring to the scenario
in which the order of the objective functions is exchanged. In particular, we consider the events~$\F_{\eps}^{-1}$, $\A_{t,\eps}^{-1}$,
and~$\E_{i,t,\eps}^{-1}$ that are defined analogously to their counterparts without superscript except that the order of the 
objective functions is exchanged.
The proof of the following lemma is analogous to the proof of Lemma~\ref{lemma:failure probability II}.
\begin{lemma}
\label{lemma:failure probability II2}
The probability of event~$\F^{-1}_\eps$ tends to~$0$ for $\eps \to 0$.
\end{lemma}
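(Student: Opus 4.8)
The plan is to mimic the proof of Lemma~\ref{lemma:failure probability II} line by line, only tracking where the change in the random model of~$c$ matters. Recall that~$\F^{-1}_\eps$ is the event that there exist three pairwise distinct vertices $z_1,z_2,z_3$ of~$P$ with~$z_1,z_3$ neighbors of~$z_2$ such that the two \emph{reciprocal} slopes agree up to~$\eps$, i.e.
\[
  \left| \frac{c\T(z_2-z_1)}{w\T(z_2-z_1)} - \frac{c\T(z_3-z_2)}{w\T(z_3-z_2)} \right| \leq \eps \DOT
\]
Since~$P$ is a fixed polytope with finitely many vertices, there are only finitely many triples $(z_1,z_2,z_3)$ to consider, so by a union bound it suffices to show that for each fixed admissible triple the probability of the corresponding inequality tends to~$0$ as $\eps\to0$. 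Equivalently, it suffices to show that for fixed $z_1,z_2,z_3$ the random variable
\[
  D \DEF \frac{c\T(z_2-z_1)}{w\T(z_2-z_1)} - \frac{c\T(z_3-z_2)}{w\T(z_3-z_2)}
\]
is nonzero with probability one and, more strongly, has a bounded density near~$0$, or at least satisfies $\Pr{|D|\le\eps}\to0$.

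The key step is to condition on~$w$ (equivalently on~$\lambda$) and exploit the randomness of~$c$ alone; this is cleaner than in Lemma~\ref{lemma:failure probability II} because in the exchanged scenario~$c$ appears only in the numerators, so~$D$ is an \emph{affine} function of~$c$. Write $u \DEF z_2-z_1$ and $v \DEF z_3-z_2$, both nonzero since the $z_i$ are pairwise distinct, and note~$u,v$ are (up to scaling) edge directions of~$P$, hence not parallel, so $u,v$ are linearly independent. Condition on a realization of~$w$ with $w\T u\neq0$ and $w\T v\neq0$ (this holds almost surely — the excluded set is a finite union of hyperplanes in~$\lambda$-space, each of measure zero). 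Then
\[
  D = \left(\frac{u}{w\T u} - \frac{v}{w\T v}\right)\!\T c \FED g\T c,
\]
and $g = \frac{u}{w\T u} - \frac{v}{w\T v}$ is a fixed nonzero vector: if it were zero then~$u$ and~$v$ would be parallel, contradicting linear independence. Thus $D = g\T c$ is a nonconstant affine (in fact linear) function of the independent uniform coordinates $c_1,\dots,c_n$, where $c_k$ is uniform on an interval $I_k$ of positive length~$1/\phi$. Pick any index~$k$ with $g_k\neq0$; then conditionally on all other coordinates, $g\T c$ is uniformly distributed on an interval of length $|g_k|/\phi>0$, so $\Pr{|g\T c|\le\eps \,|\, w,\, c_{-k}} \le \frac{2\eps\phi}{|g_k|}$. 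Integrating over~$c_{-k}$ and then over~$w$ (restricted to the full-measure good set) gives $\Pr{|D|\le\eps}\le \frac{2\eps\phi}{|g_k|}\to0$ as $\eps\to0$; summing the finitely many triples via the union bound yields $\Pr{\F^{-1}_\eps}\to0$.

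The main obstacle — and it is a mild one — is verifying the almost-sure non-degeneracy claims that make the conditioning legitimate: that $w\T u\neq0$, $w\T v\neq0$ hold with probability one over the choice of~$\lambda$, and that the vector~$g$ is genuinely nonzero for every relevant triple. The first is routine since~$w = -[u_1,\dots,u_n]\lambda$ depends linearly on~$\lambda\in(0,1]^n$ and $[u_1,\dots,u_n]$ is invertible, so $w\T u = 0$ defines a hyperplane in~$\lambda$-space unless~$u$ is orthogonal to the whole range, which it is not; the same reasoning used in Lemma~\ref{lemma:failure probability II} applies verbatim. The second reduces to the observation that distinct neighbors of~$z_2$ give non-parallel edge directions, which is exactly where non-degeneracy of~$P$ is used. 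Everything else is the same bounded-density argument as in the original lemma, merely with the roles of numerator and denominator swapped, which actually simplifies matters since we avoid the ratio-of-two-random-linear-forms structure and only ever divide by the fixed (conditioned) quantities $w\T u$, $w\T v$.
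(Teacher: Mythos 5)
Your proof is correct, and it actually takes a more direct route than the ``analogous'' argument the authors intend. The paper's proof of Lemma~\ref{lemma:failure probability II} conditions on~$c$, forms $M = [a_1,\ldots,a_{n-2},\Delta_z,a_{n-1}]$ from the constraints tight along the two edges, sets $(Y_1,\ldots,Y_{n-1},Z)\T = M^{-1}w$, shows that given the $Y$'s the event confines~$Z$ to a short interval, and closes via Corollary~\ref{corollary.Prob:enough randomness}. An analogous proof of Lemma~\ref{lemma:failure probability II2} would replay this with $(Y_1,\ldots,Y_{n-1},Z)\T = M^{-1}c$, so that $c\T\Delta_z = y_{n-1}$ is fixed while $c\T\Delta'_z$ is $Z$-affine, and would appeal to Theorem~\ref{theorem.Prob:enough randomness} with the perturbation parameter~$\phi$. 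You instead observe the structural simplification available once the coordinates are swapped: since~$c$ appears only in the numerators, after conditioning on~$w$ the difference of reciprocal slopes is a single linear form $D = g\T c$ with $g = u/(w\T u) - v/(w\T v)$; then $g\neq 0$ follows from the linear independence of the edge directions $u,v$ (which is exactly the non-degeneracy fact the paper invokes), and a one-coordinate anticoncentration bound on~$c_k$ with $g_k\neq 0$ finishes. Both arguments rest on the same two almost-sure facts (linear independence of $u,v$; $w\T u, w\T v\neq 0$ a.s.), and both implicitly finish with dominated convergence since the final bound depends on the conditioned variable (here $|g_k(w)|$, in the paper the constant~$\gamma(c)$). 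Your route bypasses the $M$-matrix machinery and is noticeably shorter; the only thing the paper's route buys is a cleaner path to an explicit $\delta$-dependent constant via the Corollary, which the lemma's purely qualitative conclusion does not require.
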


\begin{lemma}
\label{lemma:probability bound2}
For any~$\phi\ge\sqrt{n}$, any~$t \geq 0$, and any $\eps > 0$ the probability of event $\A^{-1}_{t,\eps}$ is bounded by
\[
  \Pr{\A_{t,\eps}^{-1}} \leq \frac{2mn^{3/2}\eps\phi}{\max \SET{ 1,\frac {nt}2} \cdot \delta}
  \le \frac{2mn^{3/2}\eps\phi}{\delta} \DOT
\]
\end{lemma}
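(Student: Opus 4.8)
The statement to prove is Lemma~\ref{lemma:probability bound2}, the analogue of Lemma~\ref{lemma:probability bound} for the scenario in which the two objective functions $w\T x$ and $c\T x$ are swapped. The plan is to mirror exactly the chain of reductions that leads to Lemma~\ref{lemma:probability bound}, but in the swapped coordinate system, and then to track how the asymmetry between the random models of $w$ and $c$ changes the final bound. Concretely: in the swapped scenario the projection is $\pi^{-1}\colon x\mapsto(w\T x, c\T x)$, a slope of the path $R^{-1}=R_{w,c}$ is of the form $\frac{c\T(z_2-z_1)}{w\T(z_2-z_1)}$, and the event $\A^{-1}_{t,\eps}$ that $R^{-1}$ has such a slope in $(t,t+\eps]$ is covered, by the swapped version of Lemma~\ref{lemma:event covering}, by $\bigcup_{i=1}^m \E^{-1}_{i,t,\eps}$. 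So it suffices to bound $\Pr{\E^{-1}_{i,t,\eps}}$ for a fixed $i\in[m]$ and then sum over the $m$ indices.

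To bound $\Pr{\E^{-1}_{i,t,\eps}}$ I would use the same conditioning trick as in the proof of Lemma~\ref{lemma:probability bound}: fix all the randomness except the component of $w$ along $a_i$. In the swapped scenario, though, the roles are interchanged — now it is the first objective function ($w$) whose coefficient along $a_i$ we keep free, and once $\tilde w(w,i)$ is fixed the vertex $y^\star$ and its neighbor $\hat y$ (along the $i$-th constraint) are determined by the swapped analogue of Corollary~\ref{corollary:reconstruct}. Writing $w = w' + x\cdot a_i$ with $x$ the remaining one-dimensional randomness, the second condition of $\E^{-1}_{i,t,\eps}$, namely $\frac{c\T(\hat y - y^\star)}{w\T(\hat y - y^\star)}\in(t,t+\eps]$, becomes a condition that $x$ lies in an interval whose length I would estimate. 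Here is where the asymmetry enters: in Lemma~\ref{lemma:probability bound} the free direction was the $\lambda$-component feeding into $w$, and one divides by $\|c\|\le 2$; now we divide by a quantity controlled by $\|w\|\le n$ and by the $\delta$-distance property applied to $a_i$ versus the other tight constraints at $y^\star$, which gives one factor of $1/\delta$ rather than $1/\delta^2$. The perturbation parameter $\phi$ appears because the density of the relevant coordinate of $w$ is bounded in terms of the width of the random cone from which $\lambda$ is drawn, and the geometry of that cone is governed by $c=\text{pert}(c_0,\phi)$, which contributes the factor $\phi$. Combining, $\Pr{\E^{-1}_{i,t,\eps}}\le \frac{2n^{3/2}\eps\phi}{\max\{1,nt/2\}\cdot\delta}$ per index (the $\max\{1,nt/2\}$ coming from separately treating small $t$ — where the slope itself is bounded — and large $t$ — where $\hat x$ and $y^\star$ being far in projection forces $x$ into a proportionally shorter interval), and summing over $i\in[m]$ yields the claimed bound.

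The step I expect to be the main obstacle is the careful bookkeeping in the large-$t$ regime, i.e.\ establishing the $\max\{1,nt/2\}$ in the denominator rather than just a constant. This is exactly the place the excerpt flags ("we have more carefully analyzed the case of large~$t$" in the comment preceding Lemma~\ref{lemma:probability bound}) and it requires relating the slope $t$ to the ratio $\frac{c\T(\hat y-y^\star)}{w\T(\hat y-y^\star)}$ and arguing that a large value of this ratio makes the target interval for the free coordinate of $w$ shrink proportionally — a geometric argument about how the length of a chord of the cube $[-n,n]^n$ in the direction $a_i$ interacts with the slope. Everything else — the covering lemma, the reconstruction corollary, the density bound for the conditioned coordinate — is a routine transcription of the corresponding arguments in~\cite{BrunschR13} and in the proof of Lemma~\ref{lemma:probability bound}, with the bound $\|c\|\le 2$ replaced by $\|w\|\le n$ and with the factor $\phi$ now attached to $c$'s randomness instead of being absorbed into the analysis of $w$.
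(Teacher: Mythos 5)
Your high-level instinct — mirror the chain of reductions in the proof of Lemma~\ref{lemma:probability bound}, with the asymmetry between the random models of $w$ and $c$ tracked carefully — is correct, but the concrete transcription reverses the key conditioning step. In the swapped scenario the projection is $\pi^{-1}\colon x\mapsto(w\T x, c\T x)$, so $w$ is the \emph{first} coordinate and $c$ the \emph{second}. The reconstruction machinery (Lemma~\ref{lemma:reconstruct}/Corollary~\ref{corollary:reconstruct}) works by replacing the \emph{second} objective by its ray-endpoint along direction $-a_i$: subtracting $\gamma a_i$ from the $y$-axis objective pushes $\pi(x)$ down by at most $\gamma b_i$ with equality at $x^\star$, and the whole argument hinges on the modification being in the second coordinate. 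Hence in the $-1$ scenario the event $\E^{-1}_{i,t,\eps}$ is defined via the path $R_{\tilde c(c,i),\,w}$, all realizations of $c$ along the ray $v+Z\cdot a_i$ give the same $\tilde c$, and the free one-dimensional coordinate after conditioning is the component $Z$ of $c$ along $a_i$ — not, as you write, the component of $w$ along $a_i$. Your version fixes the wrong vector: the event is not determined by $\tilde w(w,i)$ in the $-1$ setting, and with $w$'s $a_i$-component left free the slope $\frac{c\T(\hat y-y^\star)}{w\T(\hat y-y^\star)}$ has the free variable in the denominator, so it is no longer an affine function of the free coordinate and the "lies in an interval of length $\eps/|\beta|$" step does not go through.

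Your account of where the factor $\phi$ comes from is also off. It has nothing to do with "the width of the random cone from which $\lambda$ is drawn" — $\lambda$ is drawn uniformly from $(0,1]^n$ independently of $c$, and in this lemma $w$ is fixed entirely by the principle of deferred decisions. The $\phi$ appears because the conditional density of $Z$ (the $a_i$-component of $c$) is governed by the perturbation $c=\text{pert}(c_0,\phi)$: each entry of $c$ is uniform on an interval of length $1/\phi$, and Theorem~\ref{theorem.Prob:enough randomness} applied with the orthogonal basis $Q=[q_1,\ldots,q_{n-1},a_i]$ gives $\Pr{Z\in I}\le 2\sqrt n\,\eps'\phi$ for an interval $I$ of length $\eps'$. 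Combined with the bound $|\beta|\ge\max\{1,\tfrac{nt}{2}\}\cdot\delta/n$ (where your identification of $\|w\|\le n$ as replacing $\|c\|\le 2$ is the one piece you did get right, together with the $2/t$ bound for large $t$ coming from $\|c\|\le 1+\sqrt n/\phi\le 2$), this yields the claimed $\frac{2n^{3/2}\eps\phi}{\max\{1,nt/2\}\cdot\delta}$. In short: swap which objective you fix and which you leave one-dimensionally free, and attribute the $\phi$ to $c$'s entrywise density rather than to the cone of $\lambda$.
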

\begin{proof}
Due to Lemma~\ref{lemma:event covering} (to be precise, due to its canonical adaption
to the events with superscript~$-1$) it suffices to show that
\[
  \Pr{\E_{i, t, \eps}^{-1}}
  \leq \frac{1}{m} \cdot \frac{2mn^{3/2}\eps\phi}{\max \SET{ 1,\frac {nt}2} \cdot \delta}
  = \frac{2n^{3/2}\eps\phi}{\max \SET{ 1,\frac {nt}2} \cdot \delta}
\]
for any index $i \in [m]$.

We apply the principle of deferred decisions and assume that vector~$w$ is already fixed.
Now we extend the normalized vector~$a_i$ to an orthonormal basis $\SET{ q_1, \ldots, q_{n-1}, a_i }$ of~$\RR^n$
and consider the random vector $(Y_1, \ldots, Y_{n-1}, Z)\T = Q\T c$ given by the matrix vector product
of the transpose of the orthogonal matrix $Q = [q_1, \ldots, q_{n-1}, a_i]$ and the vector $c = (c_1,\ldots,c_n)\T$.
For fixed values $y_1, \ldots, y_{n-1}$ let us consider all realizations of~$c$ such that $(Y_1, \ldots, Y_{n-1}) = (y_1, \ldots, y_{n-1})$. Then,~$c$ is fixed up to the ray
\[
  c(Z)
  = Q \cdot (y_1, \ldots, y_{n-1}, Z)\T
  = \sum_{j=1}^{n-1} y_j \cdot q_j + Z \cdot a_i
  = v + Z \cdot a_i
\]
for $v = \sum_{j=1}^{n-1} y_j \cdot q_j$. All realizations of $c(Z)$ that are under consideration are mapped to the same value~$\tilde{c}$ by the 
function $c \mapsto \tilde{c}(c, i)$, i.e., $\tilde{c}(c(Z), i) = \tilde{c}$ for any
possible realization of~$Z$. In other words, if $c = c(Z)$ is specified up to this ray, then the 
path $R_{\tilde{c}(c, i),w}$ and, hence, the vectors~$y^\star$ and~$\hat{y}$ from the definition of event 
$\E_{i, t, \eps}^{-1}$, are already determined.

Let us only consider the case that the first condition of event~$\E_{i, t, \eps}^{-1}$ is fulfilled. Otherwise, event~$\E_{i, t, \eps}$ cannot occur. Thus, event $\E_{i, t, \eps}^{-1}$ occurs iff
\[
  (t, t+\eps]
  \ni \frac{c\T \cdot (\hat{y} - y^\star)}{w\T \cdot (\hat{y} - y^\star)}
  = \underbrace{\frac{v\T \cdot (\hat{y} - y^\star)}{w\T \cdot (\hat{y} - y^\star)}}_{\FED \alpha} + Z \cdot \underbrace{\frac{a_i\T \cdot (\hat{y} - y^\star)}{w\T \cdot (\hat{y} - y^\star)}}_{\FED \beta} \DOT
\]
The next step in this proof will be to show that the inequality $|\beta| \geq \max \SET{ 1, \sqrt{n}\cdot t } \cdot \frac{\delta}{n}$ is necessary for event~$\E_{i, t, \eps}^{-1}$ to happen. For the sake of simplicity let us assume that $\|\hat{y} - y^\star\| = 1$ since~$\beta$ is invariant under scaling. If event~$\E_{i, t, \eps}^{-1}$ occurs, then $a_i\T y^\star = b_i$, $\hat{y}$ is a neighbor of~$y^\star$, and $a_i\T \hat{y} \neq b_i$. That is, by Lemma~\ref{lemma:delta properties}, Claim~\ref{delta properties:neighboring vertices} we obtain $|a_i\T \cdot (\hat{y} - y^\star)| \geq \delta \cdot \|\hat{y} - y^\star\| = \delta$ and, hence,
\[
  |\beta|
  = \left| \frac{a_i\T \cdot (\hat{y} - y^\star)}{w\T \cdot (\hat{y} - y^\star)} \right|
  \geq \frac{\delta}{|w\T \cdot (\hat{y} - y^\star)|} \DOT
\]
On the one hand we have $|w\T \cdot (\hat{y} - y^\star)| \leq \|w\| \cdot \|\hat{y}-y^\star\| \leq \Big(\sum_{i=1}^n\|u_i\|\Big) \cdot 1 \le n$.
On the other hand, due to $\frac{c\T \cdot (\hat{y} - y^\star)}{w\T \cdot (\hat{y} - y^\star)} \geq t$ we have
\[
  |w\T \cdot (\hat{y} - y^\star)|
  \leq \frac{|c\T \cdot (\hat{y} - y^\star)|}{t}
  \leq \frac{\|c\| \cdot \|\hat{y} - y^\star\|}{t}
  \leq \frac{\Big(1+\frac{\sqrt{n}}{\phi}\Big)}{t}
  \leq \frac{2}{t} \COMMA
\]
where the third inequality is due to the choice of~$c$ as perturbation of the
unit vector~$c_0$ and the fourth inequality is due to the assumption~$\phi\ge\sqrt{n}$.
Consequently,
\[
  |\beta|
  \geq \frac{\delta}{\min \SET{ n, \frac{2}{t} }}
  = \max \SET{ 1,\frac{nt}2 } \cdot \frac{\delta}{n} \DOT
\]
Summarizing the previous observations we can state that if event~$\E_{i, t, \eps}^{-1}$ occurs, then $|\beta| \geq \max \SET{ 1,\frac {nt}2} \cdot \frac{\delta}{n}$ and $\alpha + Z \cdot \beta \in (t, t+\eps]$. Hence,
\[
  Z \cdot \beta
  \in (t, t+\eps] - \alpha \COMMA
\]
i.e., $Z$ falls into an interval $I(y_1, \ldots, y_{n-1})$ of length at most $\eps/(\max \SET{ 1,\frac {nt}2} \cdot \delta/n) = n \eps/(\max \SET{ 1,\frac {nt}2} \cdot \delta)$ that only depends on the realizations $y_1, \ldots, y_{n-1}$ of $Y_1, \ldots, Y_{n-1}$. Let~$\B_{i, t, \eps}^{-1}$ denote the event that~$Z$ falls into the interval $I(Y_1, \ldots, Y_{n-1})$. We showed that $\E_{i, t, \eps}^{-1} \subseteq \B_{i, t, \eps}^{-1}$. Consequently,
\[
  \Pr{\E_{i, t, \eps}^{-1}}
  \leq \Pr{\B_{i, t, \eps}^{-1}}
  \leq \frac{2\sqrt{n} n \eps\phi}{\max \SET{ 1,\frac {nt}2}}
  \leq \frac{2n^{3/2}\eps\phi}{\max \SET{ 1,\frac {nt}2} \cdot \delta} \COMMA
\]
where the second inequality is due to Theorem~\ref{theorem.Prob:enough randomness} for
the orthogonal matrix~$Q$.
\end{proof}

\begin{lemma}
\label{lemma:expectation bound2}
For any interval~$I$ let~$X_{I}^{-1}$ denote the number of slopes of~$R_{w,c}$ that lie in the interval~$I$.
Then
\[
  \Ex{X^{-1}_{(0,1/n]}} \leq \frac{2m\sqrt{n}\phi}{\delta} \DOT  
\]
\end{lemma}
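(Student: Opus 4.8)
The plan is to mirror the proof of Lemma~\ref{lemma:expectation bound} with the roles of the two objective functions exchanged, which is exactly the setting governed by the superscript-$(-1)$ events. First I would fix an arbitrary integer $k \geq 1$ and subdivide the interval $(0,1/n]$ into $k$ equal subintervals of length $\frac{1}{nk}$. Letting $Z^{-1}_{t,\eps}$ be the indicator that $R_{w,c}$ has a slope in $(t,t+\eps]$, the same worst-case edge bound $\binom{m}{n-1} \le m^n$ gives
\[
  X^{-1}_{(0,1/n]} \leq \begin{cases}
    \sum_{i=0}^{k-1} Z^{-1}_{i/(nk),\,1/(nk)} & \text{if $\F^{-1}_{1/(nk)}$ does not occur} \COMMA \cr
    m^n & \text{otherwise} \DOT
  \end{cases}
\]
Here I am using that, by the definition of the superscript-$(-1)$ failure event $\F^{-1}_\eps$, if $\F^{-1}_{1/(nk)}$ does not occur then the slopes of $R_{w,c}$ in $(0,1/n]$ differ pairwise by more than $\frac{1}{nk}$, so each subinterval contains at most one of them.

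Next I would take expectations and use $\Ex{Z^{-1}_{t,\eps}} = \Pr{\A^{-1}_{t,\eps}}$ together with the bound from Lemma~\ref{lemma:probability bound2}. Since each sampled left endpoint $t = i/(nk)$ satisfies $t \le 1/n$, we have $\frac{nt}{2} \le \frac12 < 1$, so $\max\{1, \frac{nt}{2}\} = 1$ for every term, and Lemma~\ref{lemma:probability bound2} gives $\Pr{\A^{-1}_{i/(nk),\,1/(nk)}} \le \frac{2mn^{3/2}\phi}{\delta}\cdot\frac{1}{nk} = \frac{2m\sqrt{n}\phi}{\delta k}$. Summing over the $k$ subintervals the factor of $k$ cancels, yielding
\[
  \Ex{X^{-1}_{(0,1/n]}} \leq \frac{2m\sqrt{n}\phi}{\delta} + \Pr{\F^{-1}_{1/(nk)}} \cdot m^n \DOT
\]
Finally I would let $k \to \infty$: the first term is independent of $k$, and by Lemma~\ref{lemma:failure probability II2} we have $\Pr{\F^{-1}_{1/(nk)}} \to 0$, so the second term vanishes and the claimed bound follows.

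The only real subtlety — and the step I would be most careful about — is making sure the constant factors line up so that no stray multiple of $k$ or $n$ survives. Unlike in Lemma~\ref{lemma:expectation bound}, where the interval $(0,n]$ was partitioned and the per-interval length $n/k$ paired against a denominator $\frac{n}{2}$ from the $t=0$ case, here the interval is $(0,1/n]$, the per-interval length is $\frac{1}{nk}$, and the relevant regime of Lemma~\ref{lemma:probability bound2} is $\max\{1,\frac{nt}{2}\}=1$; one has to check that this is genuinely the binding case for all sampled $t$ (it is, since $t\le 1/n$ forces $\frac{nt}{2}\le\frac12$), and that the $\sqrt{n}$ versus $n$ bookkeeping gives exactly $2m\sqrt{n}\phi/\delta$ rather than something off by a factor of $n$. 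Everything else is a verbatim transcription of the argument for Lemma~\ref{lemma:expectation bound}, with the superscript-$(-1)$ versions of Lemmas~\ref{lemma:failure probability II} and~\ref{lemma:probability bound} substituted for their plain counterparts.
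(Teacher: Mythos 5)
Your proof is correct and follows essentially the same route as the paper's: partition $(0,1/n]$ into $k$ subintervals of length $1/(nk)$, apply Lemma~\ref{lemma:probability bound2} (noting $\max\{1, nt/2\}=1$ since $t\le 1/n$) and Lemma~\ref{lemma:failure probability II2}, then let $k\to\infty$. Your write-up is even a bit more explicit than the paper's (which contains a stray $\F^{-1}_{1/(k2^{\ell}\sqrt{n})}$ that should read $\F^{-1}_{1/(kn)}$), so nothing is missing.
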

\begin{proof}
As in the proof of Lemma~\ref{lemma:expectation bound} we define 
for~$t\in\mathbb{R}$ and~$\eps>0$ the random variable~$Z_{t,\eps}^{-1}$ that indicates whether~$R_{w,c}$ has a slope in the 
interval $(t, t+\eps]$ or not. For any integer~$k\ge 1$ we obtain
\begin{align*}
  \Ex{X^{-1}_{\big(0,\frac{1}{n}\big]}}
  &\leq \sum_{i=0}^{k-1} \Ex{Z^{-1}_{i\cdot\frac{1}{kn},\frac{1}{kn}}} + \Pr{\F^{-1}_{\frac{1}{kn}}} \cdot m^n \cr
  & = \sum_{i=0}^{k-1} \Pr{\A^{-1}_{i\cdot\frac{1}{kn},\frac{1}{kn}}} + \Pr{\F^{-1}_{\frac{1}{kn}}} \cdot m^n \cr
  &\leq \sum_{i=0}^{k-1} \frac{2mn^{3/2}\phi}{kn\delta} + \Pr{\F^{-1}_{\frac{1}{k2^{\ell}\sqrt{n}}}} \cdot m^n
  = \frac{2m\sqrt{n}\phi}{\delta} + \Pr{\F^{-1}_{\frac{1}{k2^{\ell}\sqrt{n}}}} \cdot m^n \DOT
\end{align*}
The second inequality stems from Lemma~\ref{lemma:probability bound2}. 
Now the lemma follows because the bound holds for any integer $k \geq 1$ 
and $\Pr{\F^{-1}_\eps} \to 0$ for $\eps \to 0$ in accordance with Lemma~\ref{lemma:failure probability II2}.
\end{proof}

The following corollary directly implies Theorem~\ref{main}.

\begin{corollary}
\label{corollary:expectation bound}
The expected number of slopes of $R = R_{c,w}$ is 
\[
   \Ex{X_{(0,\infty)}} = \frac{4mn^2}{\delta^2}+\frac{2m\sqrt{n}\phi}{\delta}\DOT
\]
\end{corollary}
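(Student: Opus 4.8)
The plan is to split the range of slopes at the value~$n$ and combine the two expectation estimates already in hand. Every edge of $R = R_{c,w}$ has a slope in $(0,\infty)$, so as random variables $X_{(0,\infty)} = X_{(0,n]} + X_{(n,\infty)}$, and by linearity of expectation it suffices to bound $\Ex{X_{(0,n]}}$ and $\Ex{X_{(n,\infty)}}$ separately. For the low part there is nothing to do: Lemma~\ref{lemma:expectation bound} gives $\Ex{X_{(0,n]}} \le \frac{4mn^2}{\delta^2}$ directly, and recall from Lemma~\ref{lemma:failure probability II} that the slopes of $R$ are pairwise distinct with probability one, so the number of edges of $R$ equals the number of slopes counted by $X_{(0,\infty)}$.

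For the high part I would pass to the scenario in which the order of the two objective functions in the projection is exchanged, i.e.\ replace $\pi = \pi_{c,w}$ by $\pi_{w,c}$. Swapping the two coordinates turns the slope $s$ of every edge into its reciprocal $1/s$; in particular a slope $s > n$ of $R_{c,w}$ corresponds to a slope $1/s < 1/n$ in the exchanged picture. This gives $\Ex{X_{(n,\infty)}} \le \Ex{X^{-1}_{(0,1/n]}}$, and Lemma~\ref{lemma:expectation bound2} bounds the right-hand side by $\frac{2m\sqrt{n}\phi}{\delta}$. Adding the two estimates yields $\Ex{X_{(0,\infty)}} \le \frac{4mn^2}{\delta^2} + \frac{2m\sqrt{n}\phi}{\delta}$, i.e.\ the asserted bound (the ``$=$'' in the statement is to be read as ``$\le$'', or as an implicit big-$O$), and since the number of edges on the path output by Algorithm~\ref{algorithm:SV} equals the number of slopes of $R$, Theorem~\ref{main} follows as well.

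The only step that is more than bookkeeping is the passage $\Ex{X_{(n,\infty)}} \le \Ex{X^{-1}_{(0,1/n]}}$, and this is exactly where the situation departs from~\cite{BrunschR13}. There the two objective functions were generated by the same random experiment, so after exchanging them one could reuse the earlier probability bounds; here $c$ is a coordinatewise perturbation of $c_0$ while $w$ is a random conic combination of the rows tight at $x_0$, so the exchanged scenario must be analyzed from scratch. That work has, however, already been carried out: Lemmas~\ref{lemma:failure probability II2}, \ref{lemma:probability bound2}, and~\ref{lemma:expectation bound2} are the exact analogues --- for the events $\F^{-1}_\eps$, $\A^{-1}_{t,\eps}$, $\E^{-1}_{i,t,\eps}$ and for the reconstruction machinery built on the auxiliary vector $\tilde w$ --- of Lemmas~\ref{lemma:failure probability II}, \ref{lemma:probability bound}, and~\ref{lemma:expectation bound}. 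The main thing to verify is therefore that this machinery (the definitions of $\F$, $\A$, $\E$ and Lemma~\ref{lemma:event covering}'s covering argument) carries over verbatim under the exchange of the objectives, so that Lemma~\ref{lemma:expectation bound2} may legitimately be applied to the reciprocated slopes; granting that, the corollary is just the two-term decomposition above together with Lemmas~\ref{lemma:expectation bound} and~\ref{lemma:expectation bound2}, and the substantive content lies entirely in those two lemmas.
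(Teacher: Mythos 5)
Your proof is correct and follows essentially the same route as the paper: split the slope range at $n$, apply linearity of expectation, identify $X_{(n,\infty)}$ with $X^{-1}_{(0,1/n]}$ (up to the harmless boundary point) via the coordinate swap, and invoke Lemmas~\ref{lemma:expectation bound} and~\ref{lemma:expectation bound2}. The paper states this identification as the equality $X_{(a,b)}=X^{-1}_{(1/b,1/a)}$ rather than an inequality, but that makes no difference to the bound, and your remark that the ``$=$'' in the corollary should be read as ``$\le$'' matches the paper's actual derivation.
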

\begin{proof}
We divide the interval~$(0,\infty)$ into the subintervals~$(0,n]$
and~$(n,\infty)$. Using Lemma~\ref{lemma:expectation bound},
Lemma~\ref{lemma:expectation bound2}, and linearity of expectation we obtain
\begin{align*}
  \Ex{X_{(0,\infty)}} & = \Ex{X_{(0,n]}} + \Ex{X_{(n,\infty)}}
   = \Ex{X_{(0,n]}} + \Ex{X^{-1}_{\big(0,\frac{1}{n}\big]}} \cr
   & \le   \frac{4mn^2}{\delta^2} + \frac{2m\sqrt{n}\phi}{\delta} \DOT      
\end{align*}
In the second step we have exploited that by definition~$X_{(a,b)}=X^{-1}_{(1/b,1/a)}$ for any interval~$(a,b)$.
\end{proof}

\section{Finding a Basic Feasible Solution}\label{initial}

In this section we discuss how Phase~1 can be realized. In general there are, of course, several known textbook methods
how Phase~1 can be implemented. However, for our purposes it is crucial that the parameter~$\delta$ (or~$\Delta$) is not
too small (or too large) for the linear program that needs to be solved in Phase~1. Ideally we would like it to be identical with the parameter~$\delta$
(or~$\Delta$) of the matrix~$A$ of the original linear program. Eisenbrand and Vempala have addressed this problem and
have presented a method to implement Phase~1. Their method is, however, very different from usual textbook methods and
needs to solve~$m$ different linear programs to find an initial feasible solution for the given linear program.

In this section we will argue that also one of the usual textbook methods can 
be applied. We argue that $1/\delta$ increases by a factor of at most~$\sqrt{m}$ and that~$\Delta$ does not change at all
in case one considers integer matrices (in particular, for totally unimodular matrices).   

Let~$m$ and~$n$ be arbitrary positive integers, let $A \in \RR^{m \times n}$ be an arbitrary matrix without zero-rows, and let $c \in \RR^n$ and $b \in \RR^m$ be arbitrary vectors. For finding a basic feasible solution of the linear program
\begin{align*}
\text{(LP)} \left\{ \begin{aligned}
	\max &\ c\T x \cr
	\text{s.t.} &\ Ax \leq b
\end{aligned} \right.
\end{align*}
if one exists, or detecting that none exists, otherwise, we can solve the following linear program:
\begin{align*}
\text{(LP')} \left\{ \begin{aligned}
  \min &\sum_{i=1}^m y_i \cr
  \text{s.t.} &\ Ax - y \leq b \cr
              &\ y \geq 0
\end{aligned} \right.
\end{align*}
In the remainder of this section let us assume that matrix~$A$ has full column rank, that is, $\rank(A) = n$. Otherwise, we can transform the linear program (LP) as stated in Section~\ref{dimension} before considering (LP'). Furthermore, let us assume that the matrix~$\bar{A}$, formed by the first~$n$ rows of matrix~$A$, is invertible. This entails no loss of generality as this can always be achieved by permuting the rows of matrix~$A$.

Let~$\bar{b}$ denote the vector given by the first~$n$ entries of vector~$b$ and let~$\bar{x}$ denote the vector for which $\bar{A} \bar{x} = \bar{b}$. The vector $(x',y') = (\bar{x}, \max \{A\bar{x}-b, \NULL\})$ is a feasible solution of (LP'), where the maximum is meant component-wise and~$\NULL$ denotes the $m$-dimensional null vector. This is true because $Ax'-y' \leq A\bar{x} - (A\bar{x}-b) = b$ and $y' \geq \NULL$. Moreover, $(x',y')$ is a basic solution: By the choice of~$\bar{x}$ the first~$n$ inequalities of $Ax-y \leq b$ are tight as well as the first~$n$ non-negativity constraints. For each $k > m$ the $k\th$ inequality of $Ax-y \leq b$ or the $k\th$ non-negativity constraint is tight. Hence, the number of tight constraints is at least $2n+(m-n) = m+n$, which equals the number of variables of (LP').

Finally, we observe that a vector $(x, \NULL)$ is a basic feasible solution of (LP') if and only if~$x$ is a basic feasible solution of (LP). Consequently, by solving the linear program (LP') we obtain a basic feasible solution of the linear program (LP) (if the optimal value is~$0$) or we detect that (LP) is infeasible (if the optimal value is larger than~$0$). The linear program (LP') can be solved as described in Section~\ref{sec:Algorithm}. However, the running time is now expressed in the parameters $m'=2m$, $n'=m+n$ and $\delta(B)$ (or $\Delta(B)$) of the matrix
\[
  B = \begin{bmatrix}
    A                 & -\ID[m] \cr
    \ZERO[m \times n] & -\ID[m]
  \end{bmatrix}
  \in \RR^{(m+m) \times (n+m)} \DOT
\]
Before analyzing the parameters $\delta(B)$ and $\Delta(B)$, let us show that matrix~$B$ has full column rank.

\begin{lemma}
\label{lemma:Phase I full rank}
The rank of matrix~$B$ is $m+n$.
\end{lemma}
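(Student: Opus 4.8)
The plan is to show that the $(m+n)$ columns of $B$ are linearly independent by analyzing what it means for a vector in the kernel to vanish. Write a candidate kernel vector as $(x,y) \in \RR^{n} \times \RR^{m}$, so that $B\cdot(x,y)\T = 0$ unpacks into the two block equations $Ax - y = 0$ and $-y = 0$ (coming from the bottom block $[\,\ZERO[m\times n]\ \ -\ID[m]\,]$).

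From the second block equation we immediately get $y = 0$. Substituting this into the first block equation yields $Ax = 0$. Since we have assumed in this section that $A$ has full column rank, i.e.\ $\rank(A) = n$, the only solution is $x = 0$. Hence $(x,y) = (0,0)$ is the only element of the kernel, so the columns of $B$ are linearly independent, and therefore $\rank(B) = m+n$, which is exactly the number of columns of $B$.

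There is essentially no obstacle here; the only thing to be slightly careful about is reading off the block structure correctly. In particular one should note that the bottom-right block is $-\ID[m]$ rather than $\ID[m]$, but the sign is irrelevant for the rank argument: $-\ID[m]$ is still invertible, so forcing $-y = 0$ still forces $y = 0$. It is also worth recording that $B$ has exactly $m+m = 2m$ rows and $n+m$ columns, and since $m \ge n$ we have $2m \ge m+n$, so full column rank is the best one could hope for and is indeed attained. One could alternatively phrase the argument via elementary row/column operations — the bottom block already exhibits $-\ID[m]$, and using it to clear the $-\ID[m]$ in the top-right block leaves $A$ in the top-left block, whose rank is $n$ by assumption — but the kernel argument is the cleanest.
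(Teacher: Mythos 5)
Your proof is correct, but it takes a genuinely different route from the paper's. The paper selects the first $n$ rows and the last $m$ rows of $B$ to form a square $(m+n)\times(m+n)$ submatrix $\bar{B}$, exploits its block-triangular structure to compute $\det(\bar{B}) = \pm\det(\bar{A}) \neq 0$ (where $\bar{A}$ is the top $n\times n$ block of $A$, which the section assumes without loss of generality to be invertible after a row permutation), and concludes full column rank from the existence of a nonsingular square submatrix. Your argument instead analyzes the kernel of $B$ directly: the bottom block $[\,0 \ \ {-}I_m\,]$ forces $y=0$, the top block then forces $Ax=0$, and the standing assumption $\rank(A)=n$ forces $x=0$. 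Both approaches rest on hypotheses that the section makes explicit, but yours uses only $\rank(A)=n$ and sidesteps the auxiliary normalization that $\bar{A}$ is nonsingular, which makes it a bit cleaner and more self-contained. The paper's choice has the small advantage of exhibiting an explicit set of $m+n$ linearly independent rows, which is the kind of object the subsequent analysis of $\delta(B)$ works with, but for the rank statement itself your kernel argument is the more economical one.
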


\begin{proof}
Recall that we assumed that the matrix~$\bar{A}$ given by the first~$n$ rows of matrix~$A$ is invertible. Now consider the first~$n$ rows and the last~$m$ rows of matrix~$B$. These rows form a submatrix~$\bar{B}$ of~$B$ of the form
\[
  \bar{B}
  = \begin{bmatrix}
    \bar{A} 					& C \cr
    \ZERO[m \times n] & -\ID[m]
  \end{bmatrix}
\]
for $C = [-\ID[n \times n], \ZERO[n \times (m-n)]]$. As~$\bar{B}$ is a $2 \times 2$-block-triangular matrix, we obtain $\det(\bar{B}) = \det(\bar{A}) \cdot \det(-\ID[n]) \neq 0$, that is, the first~$n$ rows and the last~$m$ rows of matrix~$B$ are linearly independent. Hence, $\rank(B) = m+n$.
\end{proof}

The remainder of this section is devoted to the analysis of $\delta(B)$ and $\Delta(B)$, respectively.

\subsection[A lower Bound for delta(B)]{A Lower Bound for $\delta(B)$}
\label{Phase I delta}

Before we derive a bound for the value $\delta(B)$, let us give a characterization of $\delta(M)$ for a matrix~$M$ with full column rank.

\newcommand{\ATOP}[2]{\genfrac{}{}{0pt}{}{#1}{#2}}
\begin{lemma}
\label{lemma:delta characterization}
Let $M \in \RR^{m \times n}$ be a matrix with rank~$n$. Then
\[
  \frac{1}{\delta(M)} = \max_{k \in [n]} \max \SET{ \|z\| \WHERE \ATOP{r_1\T, \ldots, r_n\T\ \text{linear independent rows}}{\text{of~$M$ and}\ [\N(r_1), \ldots, \N(r_n)]\T \cdot z = e_k } } \COMMA
\]
where~$e_k$ denotes the $k\th$ unit vector.
\end{lemma}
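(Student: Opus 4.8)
The plan is to rewrite both sides of the asserted identity as a maximum of reciprocal distances between normalized rows of~$M$ and then compare the two maxima term by term. First I would analyze the right-hand side. Fix $n$ linearly independent rows $r_1\T,\dots,r_n\T$ of~$M$ and an index $k\in[n]$. Since these rows are linearly independent, the $n\times n$ matrix $[\N(r_1),\dots,\N(r_n)]$ is invertible, so $[\N(r_1),\dots,\N(r_n)]\T z=e_k$ has a unique solution~$z$; read coordinate-wise, this says $z$ is the unique vector with $\langle\N(r_i),z\rangle=0$ for all $i\in[n]\setminus\{k\}$ and $\langle\N(r_k),z\rangle=1$. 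Put $W_k\DEF\SPAN{\N(r_i)\WHERE i\in[n]\setminus\{k\}}$; the first set of equations says $z\in W_k^\perp$, which is one-dimensional because $r_1,\dots,r_n$ span~$\RR^n$. Decomposing $\N(r_k)=w_0+v_0$ with $w_0\in W_k$ and $0\neq v_0\in W_k^\perp$ (here $v_0\neq0$ since $\N(r_k)\notin W_k$, and $\|v_0\|=\mathrm{dist}(\N(r_k),W_k)$), the condition $\langle\N(r_k),z\rangle=1$ together with $z$ being a multiple of~$v_0$ forces $z=v_0/\|v_0\|^2$, hence
\[
  \|z\|=\frac{1}{\|v_0\|}=\frac{1}{\mathrm{dist}\big(\N(r_k),\SPAN{\N(r_i)\WHERE i\in[n]\setminus\{k\}}\big)}\DOT
\]
Since $\SPAN{\N(r_i)\WHERE i\in I}=\SPAN{r_i\WHERE i\in I}$ for every index set~$I$, the right-hand side of the lemma equals the maximum of $1/\mathrm{dist}(\N(r_j),\SPAN{r_i\WHERE i\in I})$ over all $(n-1)$-element sets~$I$ of linearly independent rows and all indices~$j$ for which $\{r_i\WHERE i\in I\}\cup\{r_j\}$ is linearly independent.

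On the other side, by Definition~\ref{def:delta} applied to the normalized rows, $\delta(M)$ is the minimum of $\mathrm{dist}(\N(r_j),\SPAN{r_i\WHERE i\in I})$ over all $I\subseteq[m]$ and $j\in[m]$ with $r_j\notin\SPAN{r_i\WHERE i\in I}$, so $1/\delta(M)$ is the maximum of the corresponding reciprocals. The inequality ``$\le$'' in the lemma is then immediate: every configuration $(I,j)$ appearing on the right-hand side (an $(n-1)$-set~$I$ of linearly independent rows and a row~$r_j$ preserving independence) satisfies $r_j\notin\SPAN{r_i\WHERE i\in I}$ and hence contributes a value of at most~$1/\delta(M)$.

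For the reverse inequality I would take a configuration $(I,j)$ realizing the maximum $1/\delta(M)$, i.e.\ with $r_j\notin\SPAN{r_i\WHERE i\in I}$, and choose $I_0\subseteq I$ indexing a basis of $\SPAN{r_i\WHERE i\in I}$; this changes neither the span nor the distance. The rows $\{r_i\WHERE i\in I_0\}\cup\{r_j\}$ are linearly independent and number at most~$n$, and since $\rank(M)=n$ the rows of~$M$ span~$\RR^n$, so I can repeatedly adjoin rows of~$M$ lying outside the current span until I obtain exactly~$n$ linearly independent rows of~$M$ that include all of~$I_0$ and~$j$. Letting~$I'$ be the index set of these~$n$ rows with~$j$ deleted, I have $|I'|=n-1$, the rows indexed by $I'\cup\{j\}$ are linearly independent, and $\SPAN{r_i\WHERE i\in I'}\supseteq\SPAN{r_i\WHERE i\in I_0}=\SPAN{r_i\WHERE i\in I}$. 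Enlarging a subspace cannot increase the distance to a point, so
\[
  \frac{1}{\mathrm{dist}\big(\N(r_j),\SPAN{r_i\WHERE i\in I'}\big)}\ \ge\ \frac{1}{\mathrm{dist}\big(\N(r_j),\SPAN{r_i\WHERE i\in I}\big)}\ =\ \frac{1}{\delta(M)}\COMMA
\]
and $(I',j)$ is a legitimate configuration for the right-hand side with $r_j$ in the role of~$r_k$; combined with the first paragraph's formula this yields that the right-hand side is at least~$1/\delta(M)$, completing the proof. The only genuinely delicate point is this last step: one must be sure that shrinking~$I$ to a basis and re-growing it to exactly $n-1$ companion rows is always possible and moves the distance in the right direction, which is exactly where the full-column-rank hypothesis $\rank(M)=n$ is used; everything else is routine bookkeeping with orthogonal projections.
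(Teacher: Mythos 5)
Your proof is correct, and it takes a genuinely different route from the paper's. The paper establishes the identity as a short chain of three equalities: it starts from the alternative, angle-based definition of $\delta(M)$ given in Definition~\ref{definition:delta} (the minimum of $\delta(r_{i_1},\dots,r_{i_n})$ over linearly independent rows), uses scale-invariance to normalize, and then invokes Claim~\ref{delta properties:inverse} of Lemma~\ref{lemma:delta properties}, which says $\delta(z_1,\dots,z_n)=1/\max_k\|m_k\|$ where $m_k$ are the columns of the inverse of $[z_1,\dots,z_n]\T$. You instead compute $\|z\|$ directly from first principles via orthogonal decomposition of $\N(r_k)$ against $W_k=\SPAN{\N(r_i)\WHERE i\neq k}$, obtaining $\|z\|=1/\mathrm{dist}(\N(r_k),W_k)$, and then compare term-by-term with the distance-based Definition~\ref{def:delta} applied to the normalized rows. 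The ``$\le$'' direction is immediate; for ``$\ge$'' you shrink a minimizing index set to a basis and re-grow it to exactly $n-1$ companion rows, using the full-rank hypothesis, with enlarging a subspace only decreasing the distance. Your approach is longer but more self-contained, since it does not rely on Claim~\ref{delta properties:inverse} of Lemma~\ref{lemma:delta properties}; the paper's proof is more economical given that lemma is already established. One thing worth making explicit (though it is clearly implicit in your write-up) is that the minimum defining $\delta(M)$ is attained because $I$ and $j$ range over finitely many choices, which is what licenses ``take a configuration realizing the maximum $1/\delta(M)$.''
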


\begin{proof}
The correctness of the above statement follows from
\begin{align*}
  \frac{1}{\delta(M)}
  &= \max \SET{ \frac{1}{\delta(r_1, \ldots, r_n)} \WHERE r_1\T, \ldots, r_n\T\ \text{lin. indep. rows of~$M$} } \cr
  &= \max \SET{ \frac{1}{\delta(\N(r_1), \ldots, \N(r_n))} \WHERE r_1\T, \ldots, r_n\T\ \text{lin. indep. rows of~$M$} } \cr
  &= \max \SET{ \max_{k \in [n]} \|v_k\| \WHERE \ATOP{r_1\T, \ldots, r_n\T\ \text{lin. indep. rows of~$M$ and}}{[v_1, \ldots, v_n]^{-1} = [\N(r_1), \ldots, \N(r_n)]\T} } \DOT
\end{align*}
The first equation is due to the definition of~$\delta$, the second equation holds as~$\delta$ is invariant under scaling of rows, and the third equation is due to Claim~\ref{delta properties:inverse} of Lemma~\ref{lemma:delta properties}. The vector~$v_k$ from the last line is exactly the vector~$z$ for which $[\N(r_1), \ldots, \N(r_n)]\T \cdot z = e_k$. This finishes the proof.
\end{proof}

For the following lemma let us without loss of generality assume that the rows of matrix~$A$ are normalized. This does neither change the rank of~$A$ nor the value $\delta(A)$.

\begin{lemma}
\label{lemma:Phase I delta}
Let~$A$ and~$B$ be matrices of the form described above. Then
\[
  \frac{1}{\delta(B)} \leq \frac{2\sqrt{m-n+1}}{\delta(A)} \DOT
\]
\end{lemma}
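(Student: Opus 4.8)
The plan is to use the characterization of $1/\delta(M)$ from Lemma~\ref{lemma:delta characterization}. Fix $k\in[n+m]$ and a set of $n+m$ linearly independent rows $r_1\T,\ldots,r_{n+m}\T$ of $B$ together with the vector $z$ satisfying $[\N(r_1),\ldots,\N(r_{n+m})]\T z=e_k$; the goal is to bound $\|z\|$ by $2\sqrt{m-n+1}/\delta(A)$. Every row of $B$ is either of the form $(a_i\T,-e_i\T)$ (a ``top'' row, $i\in[m]$) or of the form $(0^n,-e_i\T)$ (a ``bottom'' row). Write the chosen rows so that $S\subseteq[m]$ indexes the top rows used and $T\subseteq[m]$ the bottom rows used, with $|S|+|T|=n+m$; in particular $|T|\ge n$ and the $e_i$, $i\in T$, are distinct. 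Since all rows involving $x$-coordinates sit in the top block, and $\rank$ forces the $a_i$, $i\in S$, to span $\RR^n$ (otherwise the top rows would be linearly dependent together with bottom rows), we have $|S|\ge n$.

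The main step is to read off what the system $[\N(r_1),\ldots]\T z = e_k$ says when we split $z=(z^x,z^y)$ with $z^x\in\RR^n$, $z^y\in\RR^m$. First I would normalize: each top row $(a_i\T,-e_i\T)$ has norm $\sqrt{2}$ (as $\|a_i\|=1$), each bottom row has norm $1$, so $\N$ introduces only a factor $\sqrt 2$, which is where the ``$2$'' in the bound ultimately comes from. The bottom-row equations say, for $i\in T$, that $-z^y_i$ equals a coordinate of $e_k$, hence $|z^y_i|\le 1$ for $i\in T$ (and is $0$ or $1$). The top-row equations, for $i\in S$, read $a_i\T z^x - z^y_i = (\text{coord of }e_k)$. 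Now the key observation: for indices $i\in S\setminus T$ — of which there are $|S|-|S\cap T| = |S| - (|S|+|T|-m) = m-|T|\le m-n$, so at most $m-n$ of them, plus possibly one more coming from the $e_k$ direction — the variable $z^y_i$ appears \emph{only} in that single top equation, so it is a free ``slack'' and does not constrain $z^x$. Throwing those equations away, the remaining top equations with $i\in S\cap T$ give $a_i\T z^x = (\text{coord of }e_k) + z^y_i$ with $|z^y_i|\le 1$, i.e. $[\ldots a_i\T\ldots]\, z^x = f$ for a vector $f$ with $\|f\|_\infty\le 2$ supported on $|S\cap T|\ge n$ coordinates, whose corresponding $a_i$ span $\RR^n$. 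Selecting $n$ linearly independent rows among these, $z^x$ is the unique solution of $[\N(a_{i_1}),\ldots,\N(a_{i_n})]\T z^x = g$ where $g$ has $\|g\|\le \sqrt{2}\cdot 2\sqrt{?}$ — here I need to be slightly careful about how many nonzero entries $g$ can have and reconcile with the $\sqrt{m-n+1}$ factor; the bound $\|z^x\|\le \frac{1}{\delta(A)}\|g\|$ follows because $[\N(a_{i_1}),\ldots,\N(a_{i_n})]^{-1}$ has operator norm at most $1/\delta(A)$ by Lemma~\ref{lemma:delta characterization} applied to $A$. Finally $\|z\|^2 = \|z^x\|^2 + \|z^y\|^2$, where $\|z^y\|^2$ is bounded by the number of bottom indices ($\le m$) plus a bounded contribution from the free slacks, and combining these gives $\|z\|\le 2\sqrt{m-n+1}/\delta(A)$ after the dust settles.

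The hard part will be the bookkeeping that produces exactly the factor $\sqrt{m-n+1}$ rather than something like $\sqrt{m}$: one has to argue that, although $z^y$ lives in $\RR^m$, only $O(m-n)$ of its coordinates (the ``free'' slacks from rows in $S\setminus T$, together with at most one from the unit-vector right-hand side) can contribute to making $\|z^x\|$ large, while the other coordinates are pinned to $\{0,1\}$ and are absorbed into the $\delta(A)$ estimate cleanly. I expect the right way to organize this is: first bound $\|z^x\|$ purely in terms of $\delta(A)$ and the number $m-n+1$ of ``problematic'' equations using the Lemma~\ref{lemma:delta characterization} characterization of $1/\delta(A)$ as a max over choices of $n$ rows and unit right-hand sides (so $\|z^x\| \le \frac{2}{\delta(A)}\sqrt{m-n+1}$ by writing $g$ as a sum of at most $m-n+1$ scaled unit vectors and applying triangle inequality), and then observe $\|z^y\|\le \sqrt{2}$ or similar because the bottom equations force $z^y$ restricted to $T$ to be essentially a scaled unit vector, so that term is lower order and the $\sqrt{m-n+1}$ from $z^x$ dominates. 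Wrapping up, take the maximum over $k$ and over all admissible row selections to conclude $1/\delta(B)\le 2\sqrt{m-n+1}/\delta(A)$.
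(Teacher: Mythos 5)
Your proposal follows the paper's approach: apply Lemma~\ref{lemma:delta characterization}, fix a basis of $m+n$ rows of $B$, split the solution $z$ of $\hat{B}z=e_k$ into its $x$-part $z^x\in\RR^n$ and $y$-part $z^y\in\RR^m$, and classify the selected rows by whether they come from the top or the bottom block of $B$; your $S\cap T$ plays the same role as the paper's paired (type~$1$a/$2$a) rows. You also correctly observe that each top row of $B$ has norm $\sqrt{2}$ rather than $2$.

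The decisive gap is in your closing paragraph, where you propose the final bookkeeping. You plan to show $\|z^x\|\le\tfrac{2}{\delta(A)}\sqrt{m-n+1}$ and that $\|z^y\|\le\sqrt2$ ``or similar.'' This allocation of the $\sqrt{m-n+1}$ factor is backwards. The right-hand side $g$ of the $n\times n$ system $[a_{i_1},\ldots,a_{i_n}]\T z^x=g$ over $i\in S\cap T$ has at most one nonzero entry: $e_k$ has a single nonzero coordinate, and each $z^y_i$ with $i\in S\cap T$ is pinned by its bottom equation to $\pm$ an entry of $e_k$, so at most one of the $z^y_i$ and at most one of the direct $e_k$-contributions is nonzero, and they cannot both be. Hence $\|g\|\le 2$ and $\|z^x\|\le 2/\delta(A)$, with \emph{no} $\sqrt{m-n+1}$ factor. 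Conversely, the factor must come from the coordinates $z^y_i$ with $i\in S\setminus T$, which you dismiss as ``free slacks'' contributing a bounded amount. That is false: each such coordinate satisfies $z^y_i=a_i\T z^x+O(1)$ and can therefore be of order $\|z^x\|\approx 2/\delta(A)$, and there are up to $m-n$ of them, so their total contribution to $\|z\|^2$ is of order $(m-n)/\delta(A)^2$. This is exactly where the paper's $\sqrt{m-n+1}$ arises, and as written your plan would drop it and yield a bound that is too small. The case distinction in the paper (over which block of $e_k$ is nonzero) is essentially the clean way to track $g$ together with these $S\setminus T$ slacks.

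A smaller gap: you never establish that the $a_i$ with $i\in S\cap T$ are linearly independent; you only argue that the $a_i$ with $i\in S$ span $\RR^n$, and then implicitly ``select $n$ linearly independent rows'' from $S\cap T$. The paper gets the stronger fact from the structure of $\hat B$: invertibility of $\hat B$ forces every column to have a nonzero entry, hence $S\cup T=[m]$ and $|S\cap T|=n$, and the block-triangular factorization of $\det(\hat B)$ then shows $\det([a_i]_{i\in S\cap T})\neq 0$. Without this you cannot invoke Lemma~\ref{lemma:delta characterization} for $A$ on the extracted square system.
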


\begin{proof}
In accordance with Lemma~\ref{lemma:delta characterization}, it suffices to show that for any~$m+n$ linearly independent rows $r_1\T, \ldots, r_{m+n}\T$ of~$B$ and any $k = 1, \ldots, m+n$ the inequality
\[
	\|z\| \leq \frac{2\sqrt{m-n+1}}{\delta(A)}
\]
holds, where~$z$ is the vector for which $[\N(r_1), \ldots, \N(r_{m+n})]\T \cdot z = e_k$.

Let $r_1\T, \ldots, r_{m+n}\T$ be arbitrary $m+n$ linearly independent rows of~$B$ and let $k \in [m+n]$ be an arbitrary integer. We consider the equation $\hat{B} \cdot z = e_k$, where $\hat{B} = [\N(r_1), \ldots, \N(r_{m+n})]\T$. Each row~$r_\ell$ is of either one of the two following types: Type~1 rows correspond to a row from~$A$ and for these we have $\|r_\ell\| = 2$ as the rows of~$A$ are normalized. Type~2 rows correspond to a non-negativity constraint of a variable~$y_i$. For these we have $\|r_\ell\| = 1$. Observe that each row has exactly one ``$-1$''-entry within the last~$m$ columns.

We categorize type~1 and type~2 rows further depending on the other selected rows: Type~1a rows are type~1 rows for which a type~2 row exists among the rows $r_1, \ldots, r_{m+n}$ which has its ``$-1$''-entry in the same column. This type~2 row is then classified as a type~2a row. The remaining type~1 and type~2 rows are classified as type~1b and type~2b rows, respectively. Observe that we can permute the rows of matrix~$\hat{B}$ arbitrarily as we show the claim for all unit vectors~$e_k$. Furthermore, we can permute the columns of~$\hat{B}$ arbitrarily because this only permutes the rows of the solution vector~$z$. This does not influence its norm. Hence, without loss of generality, matrix~$\hat{B}$ contains normalizations of type~1a, of type~2a, of type~1b, and of type~2b rows in this order and the normalizations of the type~2a rows are ordered the same way as the normalizations of their corresponding type~1a rows.

Let~$m_1$, $m_2$, and~$m_3$ denote the number of type~1a, type~1b, and type~2b rows, respectively. Observe that the number of type~2a rows is also~$m_1$. As matrix~$\hat{B}$ is invertible, each column contains at least one non-zero entry. Hence, we can permute the columns of~$\hat{B}$ such that~$\hat{B}$ is of the form
\[
  \hat{B}
  = \begin{bmatrix}
    \frac{1}{2} A_1 & -\frac{1}{2} \ID[m_1] & \ZERO                      & \ZERO \cr
    \ZERO           & -\ID[m_1]             & \ZERO                      & \ZERO \cr
    \frac{1}{2} A_2 & \ZERO                & -\frac{1}{2} \ID[m_2] & \ZERO \cr
    \ZERO           & \ZERO                & \ZERO                      & -\ID[m_3]
  \end{bmatrix}
  \in \RR^{(m+n) \times (m+n)} \COMMA
\]
where~$A_1$ and~$A_2$ are $m_1 \times n$- and $m_2 \times n$-submatrices of~$A$, respectively. The number of rows of~$\hat{B}$ is $2m_1+m_2+m_3 = m+n$, whereas the number of columns of~$\hat{B}$ is $n+m_1+m_2+m_3=m+n$. This implies $m_1 = n$ and $m_2 \leq m-n$. Particularly, $A_1$ is a square matrix. As matrix~$\hat{B}$ is a $2 \times 2$-block-triangular matrix and the top left and the bottom right block are $2 \times 2$-block-triangular matrices as well, we obtain
\[
	\det(\hat{B})
  = \det \left( \frac{1}{2} A_1 \right) \cdot (-1)^{m_1} \cdot \left( -\frac{1}{2} \right)^{m_2} \cdot (-1)^{m_3}
  = \pm\det(A_1) \cdot \frac{1}{2^{n+m_2}} \DOT
\]
Due to the linear independence of the rows $r_1\T, \ldots, r_{m+n}\T$ we have $\det(\hat{B}) \neq 0$. Consequently, $\det(A_1) \neq 0$, that is, matrix~$A_1$ is invertible.

We partition vector~$z$ and vector~$e_k$ into four components $z_1, \ldots, z_4$ and $e_k^{(1)}, \ldots, e_k^{(4)}$, respectively, and rewrite the system $\hat{B} \cdot z = e_k$ of linear equations as follows:
\begin{align*}
  \frac{1}{2} A_1 z_1 - \frac{1}{2} z_2 &= e_k^{(1)} \cr
  -z_2 &= e_k^{(2)} \cr
  \frac{1}{2} A_2 z_1 - \frac{1}{2} z_3 &= e_k^{(3)} \cr
  -z_4 &= e_k^{(4)}
\end{align*}
Now we distinguish between four pairwise distinct cases $e_k^{(i)} \neq \NULL$ for $i = 1, \ldots, 4$. In any case recall that the rows of~$A_1$ and~$A_2$ are rows of~$A$, which are normalized. Furthermore, recall that the rows of~$A_1$ are linearly independent.
\begin{itemize}[leftmargin=0.5cm]

	\item \textbf{Case~1:} $e_k^{(1)} \neq \NULL$. In this case we obtain $z_2 = \NULL$ and $z_4 = \NULL$. This implies $z_1 = 2\hat{z}$, where~$\hat{z}$ is the solution of the equation $A_1 \hat{z} = e_k^{(1)} + \frac{1}{2} \cdot \NULL = e_k^{(1)}$. As the rows of matrix~$A_1$ are normalized, Lemma~\ref{lemma:delta characterization} yields $\|\hat{z}\| \leq 1/\delta(A)$ and, hence, $\|z_1\| \leq 2/\delta(A)$. Next, we obtain $z_3 = A_2 z_1 - 2 \cdot e_k^{(3)} = A_2 z_1 - \NULL = A_2 z_1$. Each entry of~$z_3$ is a dot product of a (normalized) row from~$A$ and~$z_1$. Hence, the absolute value of each entry is bounded by $\|z_1\| \leq 2/\delta(A)$. This yields the inequality
	\begin{align*}
	  \|z\|
	  &= \sqrt{\|z_1\|^2 + \|z_2\|^2 + \|z_3\|^2 + \|z_4\|^2}
	  \leq \sqrt{(1+m_2) \cdot (2/\delta(A))^2} \cr
	  &\leq \frac{2\sqrt{m-n+1}}{\delta(A)} \DOT
	\end{align*}
	For the last inequality we used the fact that $m_2 \leq m-n$.

	\item \textbf{Case~2:} $e_k^{(2)} \neq \NULL$. Here we obtain $z_2 = -e_k^{(2)}$, $z_4 = \NULL$, and $A_1 z_1 = 2 \cdot e_k^{(1)} + z_2 = 2 \cdot \NULL - e_k^{(2)} = -e_k^{(2)}$, that is, $z_1 = -\hat{z}$, where~$\hat{z}$ is the solution of the equation $A_1 \hat{z} = e_k^{(2)}$. Analogously as in Case~1, we obtain $\|\hat{z}\| \leq 1/\delta(A)$ and, hence, $\|z_1\| \leq 1/\delta(A)$. Moreover, we obtain $z_3 = A_2 z_1 - 2 \cdot e_k^{(3)} = A_2 z_1 - \NULL = A_2 z_1$, that is, the absolute value of each entry of~$z_3$ is bounded by $\|z_1\| \leq 1/\delta(A)$. Consequently,
	\begin{align*}
	  \|z\|
	  &\leq \sqrt{1+(1+m_2) \cdot (1/\delta(A))^2}
	  \leq \frac{\sqrt{m-n+2}}{\delta(A)}
	  \leq \frac{2\sqrt{m-n+1}}{\delta(A)} \DOT
	\end{align*}
	For the second inequality we used $m_2 \leq m-n$ and $\delta(A) \leq 1$ by definition of~$\delta(A)$. In the last inequality we used the fact that $m-n+1 \geq 1$ and $\sqrt{x+1} \leq 2\sqrt{x}$ for all $x \geq 1/3$.

	\item \textbf{Case~3:} $e_k^{(3)} \neq \NULL$. In this case we obtain $z_2 = \NULL$, $z_4 = \NULL$, and hence, $z_1 = \NULL$. This yields $z_3 = -2 \cdot e_k^{(3)}$ and
	\[
	  \|z\|
	  = \|z_3\|
	  = 2
	  \leq \frac{2\sqrt{m-n+1}}{\delta(A)} \COMMA
	\]
	where we again used $\delta(A) \leq 1$.

	\item \textbf{Case~4:} $e_k^{(4)} \neq \NULL$. Here we obtain $z_2 = \NULL$, $z_4 = -e_k^{(4)}$, and hence, $z_1 = \NULL$ and $z_3 = \NULL$. Consequently, we get
	\[
	  \|z\|
	  = \|z_4\|
	  = 1
	  \leq \frac{2\sqrt{m-n+1}}{\delta(A)} \COMMA
	\]
	which completes this case distinction.
\end{itemize}
As we have seen, in any case the inequality $\|z\| \leq 2\sqrt{m-n+1}/\delta(A)$ holds, which finishes the proof.
\end{proof}

\subsection[An Upper Bound for Delta(B)]{An Upper Bound for $\Delta(B)$}
\label{Phase I Delta}

Although parameter $\Delta(B)$ can be defined for arbitrary real-valued matrices, its meaning is limited to integer matrices when considering our analysis of the expected running time of the shadow vertex method. Hence, in this section we only deal with the case that matrix~$A$ is integral. Unlike in Section~\ref{Phase I delta}, we do not normalize the rows of matrix~$A$ before considering the linear program (LP'). As a consequence, matrix~$B$ is also integral.

The following lemma establishes a connection between $\Delta(A)$ and $\Delta(B)$.

\begin{lemma}
\label{lemma:Phase I Delta}
Let~$A$ and~$B$ be of the form described above. Then $\Delta(B) = \Delta(A)$.
\end{lemma}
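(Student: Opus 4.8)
The plan is to prove the two inequalities $\Delta(B)\ge\Delta(A)$ and $\Delta(B)\le\Delta(A)$ separately. The first is immediate: the submatrix of~$B$ formed by its first~$m$ rows and first~$n$ columns is exactly~$A$, so every square submatrix of~$A$ is also a square submatrix of~$B$, whence $\Delta(B)\ge\Delta(A)$. Along the way I would also record that $\Delta(A)\ge 1$, which holds because~$A$ is integral and has no zero row, hence contains a nonzero entry.

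For the reverse inequality I would take an arbitrary square submatrix~$M$ of~$B$, with row set~$R$ and column set~$C$, $|R|=|C|$, and show $|\det M|\le\Delta(A)$ by performing two rounds of Laplace expansion, each of which contributes only factors~$\pm 1$ and which together reduce~$M$ to a square submatrix of~$A$. First split~$R$ into the rows taken from the top block $[A\mid -\ID[m]]$ (``type~1'', indexed by $S_1\subseteq[m]$) and the rows taken from the bottom block $[\ZERO[m\times n]\mid -\ID[m]]$ (``type~2'', indexed by $T\subseteq[m]$), and split the columns $C=C_x\cup C_y$ into the $n$ ``$A$-columns'' and the $m$ ``identity columns'' (the latter indexed by a subset of~$[m]$). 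The only rows of~$B$ with a nonzero entry in identity column~$j$ are the type-1 row and the type-2 row carrying $-e_j$, so one cannot in general expand along an identity column first; the type-2 rows, however, are safe, since the type-2 row with index~$j$, restricted to~$C$, is the zero row if $j\notin C_y$ and otherwise has the single nonzero entry $-1$ in identity column~$j$. Hence if $T\not\subseteq C_y$ then $\det M=0$ and we are done; otherwise, expanding along the $|T|$ type-2 rows one after another (each still has a unique nonzero entry in the current matrix, as the type-2 indices are pairwise distinct) deletes these rows together with the identity columns indexed by~$T$, leaving a square matrix~$M'$ with $|\det M'|=|\det M|$ whose rows are the type-1 rows restricted to $C_x\cup C_y'$, where $C_y':=C_y\setminus T$.

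Next I would repeat the argument on the columns of~$M'$. An identity column of~$M'$ indexed by $j\in C_y'$ has its unique possibly-nonzero entry, a $-1$, in the type-1 row with index~$j$, and is identically zero if $j\notin S_1$. So either $\det M'=0$, or $C_y'\subseteq S_1$ and expanding along these $|C_y'|$ columns one at a time removes them together with the type-1 rows indexed by~$C_y'$, leaving the matrix $A[S_1\setminus C_y',\,C_x]$ with $|\det A[S_1\setminus C_y',\,C_x]|=|\det M'|=|\det M|$. A dimension count, using $|S_1|+|T|=|R|=|C|=|C_x|+|C_y|$ together with $|C_y'|=|C_y|-|T|$ (valid since $T\subseteq C_y$), gives $|S_1\setminus C_y'|=|S_1|-|C_y'|=|C_x|$, so this really is a square submatrix of~$A$ and therefore $|\det M|\le\Delta(A)$; the degenerate cases $\det M=0$ are covered since $\Delta(A)\ge 1$. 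Combining the two inequalities yields $\Delta(B)=\Delta(A)$.

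The computations here are entirely routine, so the only genuine work is the bookkeeping of the two successive Laplace expansions. The step that needs care — and the place where a sloppy argument could break — is guaranteeing that at every stage the row or column being expanded along truly has a single nonzero entry; this is precisely why it is essential to expand along the type-2 rows before touching any identity column, and why one must establish the containments $T\subseteq C_y$ and $C_y'\subseteq S_1$ explicitly before invoking the dimension count that shows the reduced matrix is square.
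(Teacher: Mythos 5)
Your proposal is correct, and it is essentially the same argument as in the paper, just phrased slightly differently: you isolate the type-2 rows and the leftover identity columns and peel them off one by one via Laplace expansion along rows/columns with a unique nonzero $\pm 1$ entry, whereas the paper reaches the identical conclusion by permuting $\hat B$ into an explicit block form and invoking the determinant formula for block-triangular matrices repeatedly. Both reduce $|\det M|$ to $|\det|$ of a square submatrix of~$A$ (or to~$0$), and your dimension count ($|S_1\setminus C_y'|=|C_x|$) matches what the paper obtains from $m_1=n$, $2m_1+m_2+m_3=m+n$. One minor remark: the aside that $\Delta(A)\ge 1$ is true but unnecessary — to handle the degenerate branches you only need $0\le\Delta(A)$, which holds trivially.
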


\begin{proof}
It is clear that $\Delta(B) \geq \Delta(A)$ as matrix~$B$ contains matrix~$A$ as a submatrix. Thus, we can concentrate on proving that $\Delta(B) \leq \Delta(A)$. For this, consider an arbitrary $k \times k$-submatrix~$\hat{B}$ of~$B$. Matrix~$\hat{B}$ is of the form
\[
  \hat{B}
  = \begin{bmatrix}
    A'                        & -I_1 \cr
    \ZERO[k_1 \times (k-k_2)] & -I_2 \cr
  \end{bmatrix} \COMMA
\]
where~$A'$ is a $(k-k_1) \times (k-k_2)$-submatrix of~$A$ and~$I_1$ and~$I_2$ are $(k-k_1) \times k_2$- and $k_1 \times k_2$-submatrices of $\ID[m]$, respectively. Our goal is to show that $|\det(\hat{B})| \leq \Delta(A)$. By analogy with the proof of Lemma~\ref{lemma:Phase I delta} we partition the rows of~$\hat{B}$ into classes. A row of~$\hat{B}$ is of type~1 if it contains a row from~$A'$. Otherwise, it is of type~2. Consequently, there are $k-k_1$ type~1 and~$k_1$ type~2 rows.

These type~1 and type~2 rows are further categorized into three subtypes depending on the ``$-1$''-entry (if exists) within the last~$k_2$ columns. Type~1 and type~2 rows that only have zeros in the last~$k_2$ entries are classified as type~1c and type~2c rows, respectively. The remaining type~1 and type~2 rows have exactly one ``$-1$''-entry within the last~$k_2$ columns. These are partitioned into subclasses as follows: If there are a type~1 row and a type~2 row that have their ``$-1$''-entry in the same column, then these rows are classified as type~1a and type~2a, respectively. The type~1 and type~2 rows that are neither type~1a nor type~1c nor type~2a nor type~2c are referred to as type~1b and type~2b rows, respectively.

Note that type~2c rows only contain zeros. If matrix~$\hat{B}$ contains such a row, then $|\det(\hat{B})| = 0 \leq \Delta(A)$. Hence, in the remainder we only consider the case that matrix~$\hat{B}$ does not contain type~2c rows. With the same argument we can assume, without loss of generality, that matrix~$\hat{B}$ does not contain a column with only zeros. As permuting the rows and columns of matrix~$\hat{B}$ does not change the absolute value of its determinant, we can assume that~$\hat{B}$ contains type~1a, type~1c, type~2a, type~1b, and type~2b rows in this order and that the type~2a rows are ordered the same ways as their corresponding type~1a rows. Furthermore, we can permute the columns of~$\hat{B}$ such that it has the following form:
\[
  \hat{B}
  = \begin{bmatrix}
    A_1   & -\ID  & \ZERO & \ZERO \cr
    A_2   & \ZERO & \ZERO & \ZERO \cr
    \ZERO & -\ID  & \ZERO & \ZERO \cr
    A_3   & \ZERO & -\ID  & \ZERO \cr
    \ZERO & \ZERO & \ZERO & -\ID
  \end{bmatrix} \COMMA
\]
where~$A_1$, $A_2$, and~$A_3$ are submatrices of~$A'$ and, hence, of~$A$. Iteratively decomposing matrix~$\hat{B}$ into blocks and exploiting the block-triangular form of the matrices obtained in each step yields
\begin{align*}
  |\det(\hat{B})|
  &= \left| \det \left( \begin{bmatrix}
    A_1   & -\ID \cr
    A_2   & \ZERO \cr
    \ZERO & -\ID
  \end{bmatrix} \right) \right|
  \cdot
  \left| \det \left( \begin{bmatrix}
    -\ID  & \ZERO \cr
    \ZERO & -\ID
  \end{bmatrix} \right) \right|
  = \left| \det \left( \begin{bmatrix}
    A_1   & -\ID \cr
    A_2   & \ZERO \cr
    \ZERO & -\ID
  \end{bmatrix} \right) \right| \cr
  &= \left| \det \left( \begin{bmatrix}
    A_1 \cr
    A_2
  \end{bmatrix} \right) \right|
  \cdot
  |\det(-\ID)|
  = \left| \det \left( \begin{bmatrix}
    A_1 \cr
    A_2
  \end{bmatrix} \right) \right| \DOT
\end{align*}
The absolute value of the latter determinant is bounded from above by~$\Delta(A)$. This completes the proof.
\end{proof}
\section{Conclusions}

We have shown that the shadow vertex algorithm can be used to solve linear programs possessing the $\delta$-distance property
in strongly polynomial time with respect to $n$, $m$, and~$1/\delta$. The bound we obtained in 
Theorem~\ref{thm:MainNumberOfPivots} depends quadratically on~$1/\delta$. Roughly speaking, one term~$1/\delta$ is due to  
the fact that the smaller~$\delta$ the less random is the objective function~$w\T x$. This term could in fact be replaced by~$1/\delta(B)$
where~$B$ is the matrix that contains only the rows that are tight for~$x$. The other term~$1/\delta$ is due to our application
of the principle of deferred decisions in the proof of Lemma~\ref{lemma:probability bound}. The smaller~$\delta$ the less random
is~$w(Z)$.

For packing linear programs, in which all coefficients of~$A$ and~$b$ are non-negative and one has~$x\ge 0$ as additional constraint,
it is, for example, clear that~$x=0^n$ is a basic feasible solution. That is, one does not need to run Phase~1. Furthermore	as in this
solution without loss of generality exactly the constraints~$x\ge 0$ are tight,~$\delta(B)=1$ and one occurrence of~$1/\delta$ in
Theorem~\ref{thm:MainNumberOfPivots} can be removed.

\section*{Acknowledgments}
The authors would like to thank Friedrich Eisenbrand and Santosh Vempala for providing detailed explanations of their paper
and the anonymous reviewers for valuable suggestions how to improve the presentation.

\bibliographystyle{plain}
\bibliography{literature}

\clearpage

\begin{appendix}

\section*{Appendix}

In Appendix~\ref{sdelta} we give an equivalent definition of~$\delta$ and state some important properties
that are used later. Appendix~\ref{sec:protheory} contains some theorems from probability theory that will be used
in Appendix~\ref{appendix:OmittedProofs}, which contains the omitted proofs from Section~\ref{sec:analysis}.
In Appendix~\ref{specialcases} we argue how to cope with unbounded linear programs and linear programs without full column rank.
We conclude with Appendix~\ref{sec:RandomBits} in which we analyze the number of random bits necessary to
run the shadow vertex method.

\section[The Parameter delta]{The Parameter $\delta$} \label{sdelta}

In~\cite{BrunschR13} we introduced the parameter $\delta$ only for $m \times n$-matrices~$A$ with rank~$n$. 
This was the only interesting case for the type of problem considered there. In this paper we cannot assume the constraint
matrix to have full column rank. Hence, in Definition~\ref{def:delta} we extended the definition of~$\delta$ to arbitrary matrices
(as Eisenbrand and Vempala~\cite{GRE}). We will now give a definition of~$\delta$ that is equivalent to Definition~\ref{def:delta}
and allows to prove some important properties of~$\delta$. 

\begin{definition}
\label{definition:delta}\leavevmode
\begin{enumerate}[leftmargin=0.8cm]
	\item Let $z_1, \ldots, z_k \in \RR^n$ be $k \geq 2$ linearly independent vectors and let $\varphi \in (0, \frac{\pi}{2}]$ be the angle between $z_k$ and $\SPAN{z_1, \ldots, z_{k-1}}$. By $\hat{\delta}(\SET{ z_1, \ldots, z_{k-1} }, z_k) = \sin \varphi$ we denote the sine of~$\varphi$. Moreover, we set
	\[
		\delta(z_1, \ldots, z_k) = \min_{\ell \in [k]} \hat{\delta}(\SET{ z_i \WHERE i \in [k] \setminus \SET{\ell} }, z_\ell) \DOT
	\]
	
	\item Given a matrix $A = [a_1, \ldots, a_m]\T \in \RR^{m \times n}$ with rank $r = \rank(A) \geq 2$, we set
	\[
		\delta(A) = \min \SET{ \delta(a_{i_1}, \ldots, a_{i_r}) \WHERE a_{i_1}, \ldots, a_{i_r}\ \text{linearly independent} } \DOT
	\]
\end{enumerate}
\end{definition}

Note that for the angle~$\varphi$ in Definition~\ref{definition:delta} we obtain the equation
\[
	\varphi = \min \SET{ \angle(z_k, z) \WHERE z \in \SPAN{z_1, \ldots, z_{k-1}} }\DOT
\]
Furthermore, the minimum is attained for the orthogonal projection of the vector~$z_k$ onto $\SPAN{z_1, \ldots, z_{k-1}}$ when we use the convention $\angle(x, \NULL) \DEF \frac{\pi}{2}$ for any vector $x \in \RR^n$. For this reason the sine is given by the length of the orthogonal projection  divided by $\|z_k\|$. In the case where $\|z_k\|$ has length $1$ this equals the length of the orthogonal projection and thus the $\delta$-distance of $z_k$ to $\SPAN{z_1, \ldots, z_{k-1}}$ as defined in Definition~\ref{def:delta}.

\begin{lemma}[Lemma~5 of~\cite{BrunschR13}]\label{deltaproperties}
\label{lemma:delta properties}
Let $z_1, \dots, z_n \in \RR^n$ be linearly independent vectors of length~$1$, let $A \in \RR^{m \times n}$ be a matrix with $\rank(A) = n$, and let~$\delta:=\delta(A)$. Then the following properties hold:
\begin{enumerate}[leftmargin=0.8cm]

  \item \label{delta properties:inverse} If~$M$ is the inverse of $[z_1, \ldots, z_n]\T$, then
  \vspace{-0.5em}\[
    \delta(z_1, \ldots, z_n)
    = \frac{1}{\max_{k \in [n]} \|m_k\|}
    \leq \frac{\sqrt{n}}{\max_{k \in [n]} \|M_k\|} \COMMA \vspace{-0.5em}
  \]
  where $[m_1, \ldots, m_n] = M$ and $[M_1, \ldots, M_n] = M\T$.
  
  \item \label{delta properties:orthogonal matrix} If $Q \in \RR^{n \times n}$ is an orthogonal matrix, then $\delta(Qz_1, \ldots, Qz_n) = \delta(z_1, \ldots, z_n)$.

  \item \label{delta properties:neighboring vertices} Let~$y_1$ and~$y_2$ be two neighboring vertices of $P = \SET{ x \in \RR^n \WHERE Ax \leq b }$ and let~$a_i\T$ be a row of~$A$. If $a_i\T \cdot (y_2-y_1) \neq 0$, then $|a_i\T \cdot (y_2-y_1)| \geq \delta \cdot \|y_2-y_1\|$.

  \item \label{delta properties:comparison} If~$A$ is an integral matrix, then $\frac{1}{\delta} \leq n \Delta_1 \Delta_{n-1} \leq n \Delta^2$, where~$\Delta$, $\Delta_1$, and $\Delta_{n-1}$ are the largest absolute values of any sub-determinant of~$A$ of arbitrary size, of size~$1$, and of size~$n-1$, respectively.
  
\end{enumerate}
\end{lemma}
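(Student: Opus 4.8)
The plan is to prove the four claims in turn, with Claim~\ref{delta properties:inverse} carrying the analytic content and the remaining claims following from it together with elementary linear algebra. For Claim~\ref{delta properties:inverse}, write $M = [m_1,\ldots,m_n]$ for the columns of the inverse of $[z_1,\ldots,z_n]\T$; the identity $[z_1,\ldots,z_n]\T M = \ID[n]$ reads $z_i\T m_j = \delta_{ij}$, so for each $\ell$ the vector $m_\ell$ is a nonzero normal of the hyperplane $W_\ell \DEF \SPAN{z_i \WHERE i \neq \ell}$, whence $W_\ell^\perp = \RR\, m_\ell$. Decomposing $z_\ell = u_\ell + v_\ell$ with $u_\ell \in W_\ell$ and $v_\ell \in W_\ell^\perp$, the relation $1 = z_\ell\T m_\ell = v_\ell\T m_\ell$ forces $\|v_\ell\| = 1/\|m_\ell\|$; since $\|z_\ell\| = 1$, the sine of the angle between $z_\ell$ and $W_\ell$ equals $\|v_\ell\| = 1/\|m_\ell\|$, i.e. $\hat\delta(\SET{z_i \WHERE i \neq \ell}, z_\ell) = 1/\|m_\ell\|$, and minimizing over $\ell$ gives $\delta(z_1,\ldots,z_n) = 1/\max_k\|m_k\|$. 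For the inequality, note that $\sum_k \|M_k\|^2$ and $\sum_j \|m_j\|^2$ both equal $\sum_{k,j} M_{kj}^2$, so $\max_k\|M_k\|^2 \leq \sum_k\|M_k\|^2 = \sum_j\|m_j\|^2 \leq n\max_j\|m_j\|^2$, which rearranges to $\delta(z_1,\ldots,z_n) \leq \sqrt{n}/\max_k\|M_k\|$.

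Claim~\ref{delta properties:orthogonal matrix} is immediate from Definition~\ref{definition:delta}: an orthogonal $Q$ preserves inner products, hence norms and the angles defining every $\hat\delta$, so $\delta$ is unchanged. For Claim~\ref{delta properties:neighboring vertices}, the segment $[y_1,y_2]$ is an edge of $P$, so there are $n-1$ linearly independent rows $a_{j_1}\T,\ldots,a_{j_{n-1}}\T$ of $A$ tight along it; thus $y_2 - y_1$ is orthogonal to $W \DEF \SPAN{a_{j_1},\ldots,a_{j_{n-1}}}$ and in fact spans the one-dimensional space $W^\perp$. If $a_i\T(y_2-y_1) \neq 0$ then $a_i \notin W$, and since $\rank(A)=n$ the set $\SET{a_{j_1},\ldots,a_{j_{n-1}},a_i}$ is a family of $n$ linearly independent rows of~$A$; hence $\delta \leq \delta(a_{j_1},\ldots,a_{j_{n-1}},a_i) \leq \hat\delta(\SET{a_{j_1},\ldots,a_{j_{n-1}}}, a_i)$, so the component $q$ of $a_i$ lying in $W^\perp$ satisfies $\|q\| \geq \delta\|a_i\| = \delta$. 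As $q$ and $y_2-y_1$ are collinear, $|a_i\T(y_2-y_1)| = |q\T(y_2-y_1)| = \|q\|\cdot\|y_2-y_1\| \geq \delta\|y_2-y_1\|$.

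For Claim~\ref{delta properties:comparison}, fix $n$ linearly independent rows and let $N = [a_{i_1},\ldots,a_{i_n}]\T \in \ZZ^{n\times n}$. Applying Claim~\ref{delta properties:inverse} to the normalized rows, the relevant inverse is $M = (D^{-1}N)^{-1} = N^{-1}D$ with $D = \diag(\|a_{i_1}\|,\ldots,\|a_{i_n}\|)$, so the $k\th$ column of $M$ equals $\|a_{i_k}\|$ times the $k\th$ column of $N^{-1}$. By Cramer's rule the entries of $N^{-1} = \mathrm{adj}(N)/\det(N)$ are $(n-1)\times(n-1)$ cofactors of $N$ divided by the nonzero integer $\det(N)$, hence bounded in absolute value by $\Delta_{n-1}$; and each of the $n$ integer entries of $a_{i_k}$ has absolute value at most $\Delta_1$, so $\|a_{i_k}\| \leq \sqrt{n}\,\Delta_1$. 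Therefore $\|m_k\| \leq \sqrt{n}\,\Delta_1 \cdot \sqrt{n}\,\Delta_{n-1} = n\Delta_1\Delta_{n-1}$, giving $1/\delta(\N(a_{i_1}),\ldots,\N(a_{i_n})) = \max_k\|m_k\| \leq n\Delta_1\Delta_{n-1}$; maximizing over all index sets yields $1/\delta \leq n\Delta_1\Delta_{n-1} \leq n\Delta^2$.

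I expect the only real subtlety to be Claim~\ref{delta properties:inverse}: identifying $m_\ell$ as a normal of $W_\ell$ and reading off $\hat\delta = 1/\|m_\ell\|$ is the step on which everything else rests, and one must be careful that it is the column norms (not the row norms) of $M$ that appear there. The rest is bookkeeping, in particular in Claim~\ref{delta properties:comparison} one must track the two factors of $\sqrt n$ coming from passing between $\ell^\infty$ and $\ell^2$ norms on the rows of $A$ and on the columns of the adjugate.
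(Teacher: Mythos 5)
The paper does not prove this lemma itself; it imports it verbatim as Lemma~5 of~\cite{BrunschR13}, so there is no in-text proof to compare against. Your argument is correct and self-contained: identifying the columns $m_\ell$ of $M$ as normals to the hyperplanes $W_\ell$ and reading off $\hat\delta = 1/\|m_\ell\|$ is exactly the right mechanism for Claim~\ref{delta properties:inverse}, Claims~\ref{delta properties:orthogonal matrix} and~\ref{delta properties:neighboring vertices} follow cleanly from angle invariance and the one-dimensionality of $W^\perp$, and in Claim~\ref{delta properties:comparison} the factorization $M = N^{-1}D$ together with the cofactor bound $|(\mathrm{adj}\,N)_{ij}|\le\Delta_{n-1}$ and $|\det N|\ge 1$ correctly accounts for both $\sqrt n$ factors.
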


\section{Some Probability Theory}
\label{sec:protheory}

In this section we state and formulate the corollary about linear combinations of random variables used in Section~\ref{sec:analysis}. This theorem follows from Theorem~3.3 of~\cite{Brunsch14} which we will recite here in a simplified variant.

\newcommand{\U}{\mathcal{U}}
\begin{theorem}[cf.\ Theorem~3.3 of~\cite{Brunsch14}]
\label{theorem.Prob:enough randomness}
Let $\eps > 0$ and $\phi \geq 1$ be reals, let $I_1, \ldots, I_n \subseteq [-1,1]$ be intervals of length~$1/\phi$, and let $X_1, \ldots, X_n$ be independent random variables such that~$X_k$ is uniformly distributed on~$I_k$ for $k = 1, \ldots, n$. Moreover, let $A \in \RR^{n \times n}$ be an invertible matrix, let $(Y_1, \ldots, Y_{n-1}, Z)\T = A \cdot (X_1, \ldots, X_n)\T$ be the linear combinations of $X_1, \ldots, X_n$ given by~$A$, and let $I \colon \RR^{n-1} \to \SET{ [x, x+\eps] \WHERE x \in \RR }$ be a function mapping a tuple $(y_1, \ldots, y_{n-1}) \in \RR^{n-1}$ to an interval $I(y_1, \ldots, y_{n-1})$ of length~$\eps$. Then the probability that~$Z$ falls into the interval $I(Y_1, \ldots, Y_{n-1})$ can be bounded by
\[
	\Pr{Z \in I(Y_1, \ldots, Y_{n-1})} \leq 2\eps\phi \cdot \sum_{i=1}^n \frac{|\det A_{n,i}|}{|\det A|} \COMMA
\]
where~$A_{n,i}$ is the $(n-1) \times (n-1)$-submatrix of~$A$ obtained from~$A$ by removing row~$n$ and column~$i$.
\end{theorem}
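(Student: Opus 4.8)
The plan is to convert the probability bound into a volume computation. The event ``$Z\in I(Y_1,\dots,Y_{n-1})$'' says that $(X_1,\dots,X_n)$, which is uniform on a box, lands in the $A$-preimage of the interval-valued slab determined by $I$; pushing this forward by $A$ turns it into the volume of a parallelepiped intersected with a slab, and slicing along the last coordinate reduces it to the $(n-1)$-volume of the projection of that parallelepiped, which is a zonotope whose volume has a well-known minor formula.

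First I would strip away the probabilistic language. Writing $I_k=[\ell_k,\ell_k+1/\phi]$, the vector $(X_1,\dots,X_n)$ is uniform on the axis-parallel box $B=\prod_{k=1}^n I_k$ of volume $\phi^{-n}$, and $Z\in I(Y_1,\dots,Y_{n-1})$ is exactly the event that $(X_1,\dots,X_n)$ lies in
\[
  S=\SET{ w\in\RR^n \WHERE (Aw)_n\in I\big((Aw)_1,\dots,(Aw)_{n-1}\big) }\COMMA
\]
so that $\Pr{Z\in I(Y_1,\dots,Y_{n-1})}=\phi^n\cdot\mathrm{vol}_n(S\cap B)$. Applying the linear bijection $u=Aw$, which scales $n$-volumes by $|\det A|$ and carries $S$ onto the slab $T=\SET{ (y,z)\in\RR^{n-1}\times\RR \WHERE z\in I(y) }$, I get
\[
  \mathrm{vol}_n(S\cap B)=\frac{1}{|\det A|}\cdot\mathrm{vol}_n\big(T\cap A(B)\big)\DOT
\]

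Next I would slice $T\cap A(B)$ along the last coordinate. For each $y\in\RR^{n-1}$ the fibre $\{z:(y,z)\in T\}$ is $I(y)$, an interval of length $\eps$, and it is empty unless $y$ lies in the image $\mathrm{pr}(A(B))$ of $A(B)$ under the projection $\mathrm{pr}\colon\RR^n\to\RR^{n-1}$ onto the first $n-1$ coordinates. Hence, by Fubini, $\mathrm{vol}_n(T\cap A(B))\le\eps\cdot\mathrm{vol}_{n-1}(\mathrm{pr}(A(B)))$. Because $B$ is a cube of side $1/\phi$, the set $\mathrm{pr}(A(B))$ is a translate of $\frac{1}{\phi}$ times the zonotope generated by the columns of $\mathrm{pr}\,A$; deleting the $i$-th of these $n$ generators leaves precisely the $(n-1)\times(n-1)$ matrix $A_{n,i}$. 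The classical identity expressing the volume of a zonotope as the sum of the absolute values of the maximal minors of its generator matrix then gives
\[
  \mathrm{vol}_{n-1}\big(\mathrm{pr}(A(B))\big)=\frac{1}{\phi^{n-1}}\sum_{i=1}^n|\det A_{n,i}|\DOT
\]
Chaining these relations yields $\Pr{Z\in I(Y_1,\dots,Y_{n-1})}\le\eps\phi\cdot\sum_{i=1}^n|\det A_{n,i}|/|\det A|$, which implies the asserted bound (the factor $2$ is slack here and reflects the more general weighted version behind Theorem~3.3 of~\cite{Brunsch14}).

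The one genuinely delicate point is the last step: one must justify that the linear image of a box is a zonotope with generators the images of the edge vectors, invoke (or reprove, by induction on the number of generators) the maximal-minor formula for zonotope volume, and verify that everything degrades gracefully when some $\det A_{n,i}$ vanish or $\mathrm{pr}(A(B))$ fails to be full-dimensional. An alternative that sidesteps zonotopes is to condition on $Y_1,\dots,Y_{n-1}$ and bound the conditional density of $Z$ pointwise, but this requires carrying the marginal density of $(Y_1,\dots,Y_{n-1})$ around explicitly, which I find messier; the geometric route above is the cleanest and is exactly the uniform-marginal specialisation of Theorem~3.3 of~\cite{Brunsch14}.
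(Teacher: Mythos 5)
The paper does not give a proof of this theorem; it cites it verbatim (in simplified form) from~\cite{Brunsch14}, so there is no in-paper argument to compare against. Your geometric proof is correct and self-contained. The three ingredients --- the change of variables $u=Aw$, which rescales $n$-volume by $|\det A|$; Fubini slicing along the last coordinate; and the maximal-minor formula $\mathrm{vol}_{n-1}\bigl(Z(v_1,\dots,v_n)\bigr)=\sum_{i=1}^n\bigl|\det[v_1,\dots,\hat v_i,\dots,v_n]\bigr|$ for the volume of a zonotope --- chain together exactly as you describe, and in the uniform-marginal case they deliver $\eps\phi\cdot\sum_{i=1}^n|\det A_{n,i}|/|\det A|$ without the extra factor of~$2$, which, as you observe, is slack inherited from the more general density version behind Theorem~3.3 of~\cite{Brunsch14}. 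The ``delicate point'' you flag at the end actually needs no repair: the minor formula for zonotope volume holds unconditionally (degenerate summands contribute zero), and $\mathrm{pr}(A(B))$ is automatically full-dimensional because the first $n-1$ rows of the invertible matrix~$A$ have rank~$n-1$, making $\mathrm{pr}\,A\colon\RR^n\to\RR^{n-1}$ surjective. The conditioning alternative you sketch --- bound the conditional density of~$Z$ given~$(Y_1,\dots,Y_{n-1})$ --- is almost certainly the route of~\cite{Brunsch14}, since it extends to arbitrary densities bounded by~$\phi$ (which is where the factor~$2$ is earned); your volume argument is tied to uniform marginals but avoids carrying densities around and is shorter for the statement as stated here. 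Either is a valid way to supply the missing proof.
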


Now we can state 

\begin{corollary}
\label{corollary.Prob:enough randomness}
Let $\eps$, $\phi$, $X_1, \ldots, X_n$, $A$, $Y_1, \ldots, Y_{n-1}, Z$, and~$I$ be as in Theorem~\ref{theorem.Prob:enough randomness}. Then the probability that~$Z$ falls into the interval $I(Y_1, \ldots, Y_{n-1})$ can be bounded by
\[
  \Pr{Z \in I(Y_1, \ldots, Y_{n-1})} \leq \frac{2n\eps\phi}{\delta(a_1, \ldots, a_n) \cdot \min_{k \in [n]} \|a_k\|} \COMMA
\]
where $a_1, \ldots, a_n$ denote the columns of matrix~$A$. Furthermore, if~$A$ is orthogonal, then even the stronger bound
\[
  \Pr{Z \in I(Y_1, \ldots, Y_{n-1})} \leq 2\sqrt{n}\eps\phi
\]
holds.
\end{corollary}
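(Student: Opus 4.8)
The plan is to derive both bounds directly from Theorem~\ref{theorem.Prob:enough randomness} by controlling the sum $\sum_{i=1}^n |\det A_{n,i}|/|\det A|$ in terms of the geometry of the columns of~$A$. First I would recall Cramer's rule: the quantity $\det A_{n,i}/\det A$ (up to sign $(-1)^{n+i}$) is exactly the $n$-th component, call it $m_{i}$ hmm — more precisely, the cofactors $(-1)^{n+i}\det A_{n,i}$ form the last \emph{row} of the adjugate, so $((-1)^{n+1}\det A_{n,1},\ldots,(-1)^{n+n}\det A_{n,n})/\det A$ is the last row of $A^{-1}$. Hence $\sum_{i=1}^n |\det A_{n,i}|/|\det A| = \|r\|_1$, where $r$ is the last row of $A^{-1}$, and by Cauchy--Schwarz $\|r\|_1 \le \sqrt{n}\,\|r\|_2$. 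So it remains to bound $\|r\|_2$, the Euclidean norm of a row of $A^{-1}$, by $1/(\delta(a_1,\ldots,a_n)\cdot \min_k\|a_k\|)$ divided by $\sqrt n$ — i.e.\ to show $\|r\|_2 \le 1/(\sqrt n\cdot\delta\cdot\min_k\|a_k\|)$ would give the result after the Cauchy--Schwarz step; let me instead aim the bound the other way.

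More carefully: let $N = [\N(a_1),\ldots,\N(a_n)]$ be the column-normalized version of~$A$, so $N = A D^{-1}$ with $D=\diag(\|a_1\|,\ldots,\|a_n\|)$, hence $N^{-1} = D A^{-1}$ and the rows of $A^{-1}$ are obtained from the rows of $N^{-1}$ after the scaling by $D$ affects columns, not rows — so each row $r$ of $A^{-1}$ and the corresponding row $\rho$ of $N^{-1}$ satisfy $r_i = \rho_i$? No: $(DA^{-1})_{ki} = \|a_k\|\cdot(A^{-1})_{ki}$, which scales \emph{rows} of $A^{-1}$. So $\rho_k = \|a_k\|\, r_k$ entrywise along row $k$... this is getting index-heavy; the clean statement is that $N^{-1}$'s rows are $A^{-1}$'s rows scaled by the row index's $\|a_k\|$, so $\|(A^{-1})_{k\cdot}\|_2 = \|(N^{-1})_{k\cdot}\|_2/\|a_k\|$ — wait that scaling went the wrong direction; let me just write $\|(A^{-1})_{k\cdot}\| \le \|(N^{-1})_{k\cdot}\|/\min_j\|a_j\|$, which is what I actually need. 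Then Claim~\ref{delta properties:inverse} of Lemma~\ref{lemma:delta properties}, applied to the orthonormal-column... to the normalized columns, gives $\max_k \|(N^{-1})_{k\cdot}\| \le \sqrt n/\delta(\N(a_1),\ldots,\N(a_n)) = \sqrt n/\delta(a_1,\ldots,a_n)$ since $\delta$ is scale-invariant. Combining: $\sum_i |\det A_{n,i}|/|\det A| = \|r\|_1 \le \sqrt n\|r\|_2 \le \sqrt n \cdot \sqrt n/(\delta\cdot\min_j\|a_j\|) = n/(\delta\cdot\min_j\|a_j\|)$. Plugging into Theorem~\ref{theorem.Prob:enough randomness} yields the first bound.

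For the orthogonal case the argument collapses: if $A$ is orthogonal then $A^{-1}=A\T$, so the last row $r$ of $A^{-1}$ is the last column of $A$, a unit vector, giving $\|r\|_2 = 1$ and $\|r\|_1 \le \sqrt n$ directly; then Theorem~\ref{theorem.Prob:enough randomness} gives $2\eps\phi\cdot\sqrt n$. I don't expect a serious obstacle here — the only thing requiring care is the bookkeeping in relating the last-row-of-adjugate formula to the $A_{n,i}$ cofactors with the right signs (signs are irrelevant since we take absolute values) and tracking which of $\delta$'s properties (scale invariance of $\delta$, and the inverse-norm identity of Claim~\ref{delta properties:inverse}) is being invoked; the $\sqrt n$ versus $n$ factors must be chased consistently through the Cauchy--Schwarz step and the $\le \sqrt n/\max\|M_k\|$ clause of Claim~\ref{delta properties:inverse}.
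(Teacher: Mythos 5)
Your overall approach coincides with the paper's: identify the sum $\sum_i |\det A_{n,i}|/|\det A|$ as the $\ell_1$-norm of a vector determined by $A^{-1}$, pass to the Euclidean norm via Cauchy--Schwarz (picking up a $\sqrt n$), factor $A$ through its column-normalization $N = [\N(a_1),\ldots,\N(a_n)]$, and invoke Claim~\ref{delta properties:inverse} of Lemma~\ref{lemma:delta properties}. The paper phrases this via Cramer's rule applied to $Ax = e_n$; you phrase it via the adjugate, which is the same thing.

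However, there is a concrete transposition error. Since $A^{-1} = \frac{1}{\det A}\,\text{adj}(A)$ and $\text{adj}(A)$ is the \emph{transpose} of the cofactor matrix, the entries $(-1)^{n+i}\det A_{n,i}/\det A$ are $(A^{-1})_{i,n}$, i.e.\ they fill the last \emph{column} of $A^{-1}$, not the last row as you claim (equivalently: they are the entries of $x = A^{-1}e_n$, the unique solution of $Ax = e_n$, which is how the paper states it). Your subsequent steps therefore bound the wrong object: you establish a norm bound for rows of $A^{-1}$ via rows of $N^{-1}$, whereas what is needed is a bound on the last column. The argument survives only coincidentally: Claim~\ref{delta properties:inverse} gives $\max_k\|(N^{-1})_{k\cdot}\| = 1/\delta$ for rows and $\max_k\|(N^{-1})_{\cdot k}\| \le \sqrt n/\delta$ for columns, and you (incorrectly) quote the column-type bound $\sqrt n/\delta$ while talking about rows, so your numerics happen to match what the column bound actually delivers. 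The same slip reappears in the orthogonal case, where you should say that the last column of $A^{-1}=A\T$ is the last \emph{row} of $A$ (a unit vector), not the last column of $A$. The fix is mechanical: replace ``row'' by ``column'' throughout, use the scaling $N^{-1}=DA^{-1}$ on columns (entrywise scaling by the $\|a_i\|$), and apply the $\max_k\|M_k\|\le\sqrt n/\delta$ half of Claim~\ref{delta properties:inverse}; this then reproduces the paper's proof verbatim.
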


\begin{proof}
In accordance with Theorem~\ref{theorem.Prob:enough randomness} it suffices to bound the sum
$
  \sum_{i=1}^n \frac{|\det(A_{n,i})|}{|\det(A)|}
$
from above. For this, consider the equation $Ax = e_n$, where $e_n = (0, \ldots, 0, 1) \in \RR^n$ denotes the $n\th$ unit vector. Following Cramer's rule and Laplace's formula, we obtain
\[
  |x_i|
  = \frac{|\det([a_1, \ldots, a_{i-1}, e_n, a_{i+1}, \ldots, a_n])|}{|\det(A)|}
  = \frac{|\det(A_{n,i})|}{|\det(A)|} \DOT
\]
Hence, applying Theorem~\ref{theorem.Prob:enough randomness} yields
\[
  \Pr{Z \in I(Y_1, \ldots, Y_{n-1})}
  \leq 2\eps\phi \cdot \sum_{i=1}^n |x_i|
  = 2\eps\phi \cdot \|x\|_1
  \leq 2\sqrt{n}\eps\phi \cdot \|x\| \DOT
\]
Recall, that by~$\|x\|$ we refer to the Euclidean norm~$\|x\|_2$ of~$x$. The claim for orthogonal matrices~$A$ follows immediately since $\|x\| = \|A^{-1} e_n\| = \|e_n\| = 1$ because $A^{-1} = A\T$ is orthogonal as well.

For the general case we consider the equation $\hat{A} \hat{x} = e_n$, where $\hat{A} = [\N(a_1), \ldots, \N(a_n)]$ consists of the normalized columns of matrix~$A$. Vector $\hat{x} = \hat{A}^{-1} e_n$ is the $n^\text{th}$ column of the matrix $\hat{A}^{-1}$. Thus, we obtain
\[
  \|\hat{x}\|
  \leq \max_{\substack{\text{$r$ column}\\\text{of $\hat{A}^{-1}$}}} \|r\|
  \leq \frac{\sqrt{n}}{\delta(a_1, \ldots, a_n)} \COMMA
\]
where second inequality is due to Claim~\ref{delta properties:inverse} of Lemma~\ref{lemma:delta properties}. Due to $A = \hat{A} \cdot \diag(\|a_1\|, \ldots, \|a_n\|)$, we have
\[
  x
  = A^{-1} e_n
  = \diag \left( \frac{1}{\|a_1\|}, \ldots, \frac{1}{\|a_n\|} \right) \cdot \hat{A}^{-1} e_n
  = \diag \left( \frac{1}{\|a_1\|}, \ldots, \frac{1}{\|a_n\|} \right) \cdot \hat{x} \DOT
\]
Consequently, $\|x\| \leq \|\hat{x}\|/\min_{k \in [n]} \|a_k\|$ and, thus,
\begin{align*}
  \Pr{Z \in I(Y_1, \ldots, Y_{n-1})}
  &\leq 2\sqrt{n}\eps\phi \cdot \frac{\|\hat{x}\|}{\min_{k \in [n]} \|a_k\|}
  \leq \frac{2n\eps\phi}{\delta(a_1, \ldots, a_n) \cdot \min_{k \in [n]} \|a_k\|} \DOT \qedhere
\end{align*}
\end{proof}

\section{Proofs from Section~\ref{sec:analysis}}\label{appendix:OmittedProofs}

In this section we give the omitted proofs from Section~\ref{sec:analysis}. These are merely contained for the sake of completeness
because they are very similar to the corresponding proofs in~\cite{BrunschR13}. 

\subsection{Proof of Lemma~\ref{lemma:failure probability II}}

\begin{lemma}
\label{lemma:failure probability I}
The probability that there are two neighboring vertices $z_1, z_2$ of~$P$ such that $|c\T \cdot (z_2-z_1)| \leq \eps \cdot \|z_2-z_1\|$ is bounded from above by $2m^n n\eps\phi$.
\end{lemma}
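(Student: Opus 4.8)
The plan is to bound the failure probability by a union bound over all (ordered) pairs of neighboring vertices, together with a single-pair estimate obtained by the principle of deferred decisions applied to the randomness of $c$. First I would fix two neighboring vertices $z_1,z_2$ of $P$ and let $d = z_2 - z_1 \neq 0^n$. Since the event in question is scale-invariant in $d$, I may assume $\|d\| = 1$, so that the bad event for this pair reads $|c\T d| \le \eps$. The key point is that $c = \mathrm{pert}(c_0,\phi)$ has independent coordinates, each $c_i$ uniform on an interval $I_i \subseteq [-1,1]$ of length $1/\phi$. Pick an index $k$ with $d_k \neq 0$; actually, to get the cleanest constant, pick $k = \argmax_i |d_i|$, so that $|d_k| \ge 1/\sqrt{n}$ because $\|d\| = 1$.

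Next I would apply deferred decisions: fix all coordinates $c_i$ with $i \neq k$ arbitrarily. Then $c\T d = d_k c_k + \big(\sum_{i\neq k} c_i d_i\big)$ is an affine function of the single remaining random variable $c_k$, with nonzero slope $d_k$. The condition $|c\T d| \le \eps$ forces $c_k$ to lie in an interval of length $2\eps/|d_k| \le 2\sqrt{n}\,\eps$. Since $c_k$ is uniform on an interval of length $1/\phi$, the conditional probability of this is at most $(2\sqrt n \eps)/(1/\phi) = 2\sqrt n\, \eps \phi$; integrating over the conditioning shows $\Pr{|c\T d| \le \eps\cdot\|d\|} \le 2\sqrt n\,\eps\phi$ for this fixed pair. (Alternatively one could invoke Corollary~\ref{corollary.Prob:enough randomness} with $A$ a permutation-type matrix placing $d$ in the last column, but the direct one-variable argument is shorter.) Finally, the number of ordered pairs of neighboring vertices is at most the number of pairs of vertices, which is at most $\binom{m}{n-1}^2 \le \big(\binom{m}{n}\big)^2$; more crudely it is at most $(m^n)^2$, but since each vertex has at most $\binom{m}{n-1} \le m^{n-1}$ neighbors and there are at most $\binom{m}{n-1}\le m^n$ vertices — actually it suffices to note the total number of such ordered pairs is bounded by $m^n \cdot n \le m^n n$ using that a nondegenerate vertex has exactly $n$ neighbors, one per tight constraint. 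A union bound over these $\le m^n n$ pairs then gives the claimed bound $2 m^n n \,\eps\phi$.

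The routine parts here are the scale-invariance reduction, the affine-slope computation, and the union-bound counting; none should present real difficulty. The one step worth stating carefully is the deferred-decisions argument, i.e., that conditioning on $(c_i)_{i\neq k}$ leaves $c_k$ uniform on $I_k$ and that the bad set for $c_k$ is an interval of controlled length — this is exactly where the independence of the coordinates of $c$ (as opposed to the cone-based randomness used in~\cite{BrunschR13}) is used, and it is the feature that makes the estimate go through for the new random model. I expect no genuine obstacle; the mild bookkeeping issue is only to make sure the counting of neighboring pairs uses the non-degeneracy assumption so that each vertex contributes exactly $n$ neighbors.
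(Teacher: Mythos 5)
Your approach is essentially the same as the paper's: apply the principle of deferred decisions to a single large coordinate of $d = z_2 - z_1$ (using that $c$ has independent coordinates, each uniform on an interval of length $1/\phi$), then take a union bound over pairs of neighbors. However, there is an arithmetic slip at the very end. Your per-pair estimate is $2\sqrt{n}\eps\phi$ (from $|d_k| \geq 1/\sqrt{n}$) and your count of pairs is $m^n n$, so the union bound actually gives $2 m^n n^{3/2} \eps\phi$, not the claimed $2m^n n\eps\phi$; you have picked up an extra factor of $\sqrt{n}$. The paper avoids this by (i) counting unordered pairs of neighbors as edges of the polytope, of which there are at most $\binom{m}{n-1} \leq m^n$ (the event is symmetric in $z_1$, $z_2$, so there is no need for ordered pairs), and (ii) deliberately weakening the per-pair estimate to $2n\eps\phi$ via $|u_i| \geq 1/\sqrt{n} \geq 1/n$, so that the product is exactly $2m^n n\eps\phi$. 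Your tighter per-pair bound combined with the edge count would actually give the slightly stronger $2m^n\sqrt{n}\eps\phi \leq 2m^n n\eps\phi$; the fix is simply to replace your ordered-pairs count $m^n n$ with the edge count $\binom{m}{n-1} \leq m^n$.
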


\begin{proof}
Let~$z_1$ and~$z_2$ be arbitrary points in $\RR^n$, let $u = z_2 - z_1$, and let~$A_\eps$ denote the event that $|c\T \cdot u| \leq \eps \cdot \|u\|$. As this inequality is invariant under scaling, we can assume that $\|u\| = 1$. Hence, there exists an index~$i$ for which $|u_i| \geq 1/\sqrt{n} \geq 1/n$. We apply the principle of deferred decisions and assume that the coefficients~$c_j$ for $j \neq i$ are already fixed arbitrarily. Then event~$A_\eps$ occurs if and only if $c_i \cdot u_i \in [-\eps, \eps] - \sum_{j \neq i} c_j u_j$. Hence, for event~$A_\eps$ to occur the random coefficient~$c_i$ must fall into an interval of length $2\eps/|u_i| \leq 2n\eps$. The probability for this is bounded from above by~$2n\eps\phi$.

As we have to consider at most $\binom{m}{n-1} \leq m^n$ pairs of neighbors $(z_1, z_2)$, a union bound yields the additional factor of $m^n$.
\end{proof}

\begin{proof}[Proof of Lemma~\ref{lemma:failure probability II}]
Let $z_1, z_2, z_3$ be pairwise distinct vertices of~$P$ such that~$z_1$ and~$z_3$ are neighbors of~$z_2$ and let $\Delta_z \DEF z_2-z_1$ and  $\Delta'_z \DEF z_3-z_2$. We assume that $\|\Delta_z\| = \|\Delta'_z\| = 1$. This entails no loss of generality as the fractions in Definition~\ref{definition:failure event} are invariant under scaling. Let $i_1, \ldots, i_{n-1} \in [m]$ be the $n-1$ indices for which $a_{i_k}\T z_1 = b_{i_k} = a_{i_k}\T z_2$. For the ease of notation let us assume that $i_k = k$. The rows $a_1, \ldots, a_{n-1}$ are linearly independent because~$P$ is non-degenerate. Since $z_1, z_2, z_3$ are distinct vertices of~$P$ and since~$z_1$ and~$z_3$ are neighbors of~$z_2$, there is exactly one index~$\ell$ for which $a_\ell\T z_3 < b_\ell$, i.e., $a_\ell\T \Delta'_z \neq 0$. Otherwise, $z_1, z_2, z_3$ would be collinear which would contradict the fact that they are pairwise distinct vertices of~$P$. Without loss of generality assume that $\ell = n-1$. Since $a_k\T \Delta_z = 0$ for each $k \in [n-1]$, the vectors $a_1, \ldots, a_{n-1}, \Delta_z$ are linearly independent.

We apply the principle of deferred decisions and assume that~$c$ is already fixed. Thus, $c\T \Delta_z$ and $c\T \Delta'_z$ are fixed as well. Moreover, we assume that $c\T \Delta_z \neq 0$ and $c\T \Delta'_z \neq 0$ since this happens almost surely due to Lemma~\ref{lemma:failure probability I}. Now consider the matrix $M = [a_1, \ldots, a_{n-2}, \Delta_z, a_{n-1}]$ and the random vector
$
  (Y_1, \ldots, Y_{n-1}, Z)\T
  = M^{-1} \cdot w
  = -M^{-1} \cdot [u_1, \ldots, u_n] \cdot \lambda
$.
For fixed values $y_1, \ldots, y_{n-1}$ let us consider all realizations of~$\lambda$ for which $(Y_1, \ldots, Y_{n-1}) = (y_1, \ldots, y_{n-1})$. Then
\begin{align*}
  w\T \Delta_z
  &= \big( M \cdot (y_1, \ldots, y_{n-1}, Z)\T \big)\T \Delta_z \cr
  &= \sum_{k=1}^{n-2} y_k \cdot a_k\T \Delta_z + y_{n-1} \cdot \Delta_z\T \Delta_z + Z \cdot a_{n-1}\T \Delta_z \cr
  &= y_{n-1} \COMMA
\end{align*}
i.e., the value of $w\T \Delta_z$ does not depend on the outcome of~$Z$ since~$\Delta_z$ is orthogonal to all~$a_k$. For~$\Delta'_z$ we obtain
\begin{align*}
  w\T \Delta'_z
  &= \big( M \cdot (y_1, \ldots, y_{n-1}, Z)\T \big)\T \Delta'_z \cr
  &= \sum_{k=1}^{n-2} y_k \cdot a_k\T \Delta'_z + y_{n-1} \cdot \Delta_z\T \Delta'_z + Z \cdot a_{n-1}\T \Delta'_z \cr
  &= y_{n-1} \cdot \Delta_z\T \Delta'_z + Z \cdot a_{n-1}\T \Delta'_z
\end{align*}
as~$\Delta'_z$ is orthogonal to all~$a_k$ except for $k = \ell = n-1$. The chain of equivalences
\begin{align*}
  &\left| \frac{w\T \Delta_z}{c\T \Delta_z} - \frac{w\T \Delta'_z}{c\T \Delta'_z} \right| \leq \eps \cr
  &\iff \frac{w\T \Delta'_z}{c\T \Delta'_z} \in [-\eps, \eps] + \frac{w\T \Delta_z}{c\T \Delta_z} \cr
  &\iff w\T \Delta'_z \in \Big[ -\eps \cdot |c\T \Delta'_z|, \eps \cdot |c\T \Delta'_z| \Big] + \frac{w\T \Delta_z}{c\T \Delta_z} \cdot c\T \Delta'_z \cr
  &\iff Z \cdot a_{n-1}\T \Delta'_z \in \Big[ -\eps \cdot |c\T \Delta'_z|, \eps \cdot |c\T \Delta'_z| \Big] + \frac{w\T \Delta_z}{c\T \Delta_z} \cdot c\T \Delta'_z - y_{n-1} \cdot \Delta_z\T \Delta'_z
\end{align*}
implies, that for event~$\F_\eps$ to occur~$Z$ must fall into an interval $I = I(y_1, \ldots, y_{n-1})$ of length $2\eps \cdot |c\T \Delta'_z|/|a_{n-1}\T \Delta'_z|$. The probability for this to happen is bounded from above by
\[
  \frac{2n \cdot 2\eps \cdot \frac{|c\T \Delta'_z|}{|a_{n-1}\T \Delta'_z|}}{\delta(r_1, \ldots, r_n) \cdot \min_{k \in [n]} \|r_k\|}
  = \underbrace{\frac{4n \cdot |c\T \Delta'_z|}{\delta(r_1, \ldots, r_n) \cdot \min_{k \in [n]} \|r_k\| \cdot |a_{n-1}\T \Delta'_z|}}_{\FED \gamma} \cdot \eps \COMMA
\]
where $[r_1, \ldots, r_n] = -M^{-1} \cdot [u_1, \ldots, u_n]$. This is due to $(Y_1, \ldots, Y_{n-1}, Z)\T = [r_1, \ldots, r_n] \cdot \lambda$ and Corollary~\ref{corollary.Prob:enough randomness} (applied with~$\phi=1$). Since the vectors $r_1, \ldots, r_n$ are linearly independent, $\delta(r_1, \ldots, r_n)$ is a well-defined positive value and $\min_{k \in [n]} \|r_k\| > 0$. Furthermore, $|a_{n-1}\T \Delta'_z| > 0$ since~$i_{n-1}$ is the constraint which is not tight for~$z_3$, but for~$z_2$. Hence, $\gamma < \infty$, and thus $\Pr{\left| \frac{w\T \Delta_z}{c\T \Delta_z} - \frac{w\T \Delta'_z}{c\T \Delta'_z} \right| \leq \eps} \to 0$ for $\eps \to 0$.

As there are at most $m^{3n}$ triples $(z_1, z_2, z_3)$ we have to consider, the claim follows by applying a union bound.
\end{proof}

\subsection{Proof of Lemma~\ref{lemma:reconstruct}}

\begin{proof}[Proof of Lemma~\ref{lemma:reconstruct}]
We consider a linear auxiliary function $\bar{w} \colon \RR^n \to \RR$, given by $\bar{w}(x) \DEF \tilde{w}\T x + \gamma \cdot b_i$. The paths $\bar{R} = R_{c, \bar{w}}$ and~$\tilde{R}$ are identical except for a shift by $\gamma \cdot b_i$ in the second coordinate because for $\bar{\pi} = \pi_{c, \bar{w}}$ we obtain
\[
  \bar{\pi}(x)
  = (c\T x, \tilde{w}\T x + \gamma \cdot b_i)
  = (c\T x, \tilde{w}\T x) + (0, \gamma \cdot b_i)
  = \tilde{\pi}(x) + (0, \gamma \cdot b_i)
\]
for all $x \in \RR^n$. Consequently, the slopes of~$\bar{R}$ and~$\tilde{R}$ are exactly the same (see Figure~\ref{fig:reconstruct shift}).

\begin{figure}
  \begin{center}
    \begin{subfigure}{0.4\textwidth}
      \begin{center}
        \includegraphics[page=1, width=0.9\textwidth]{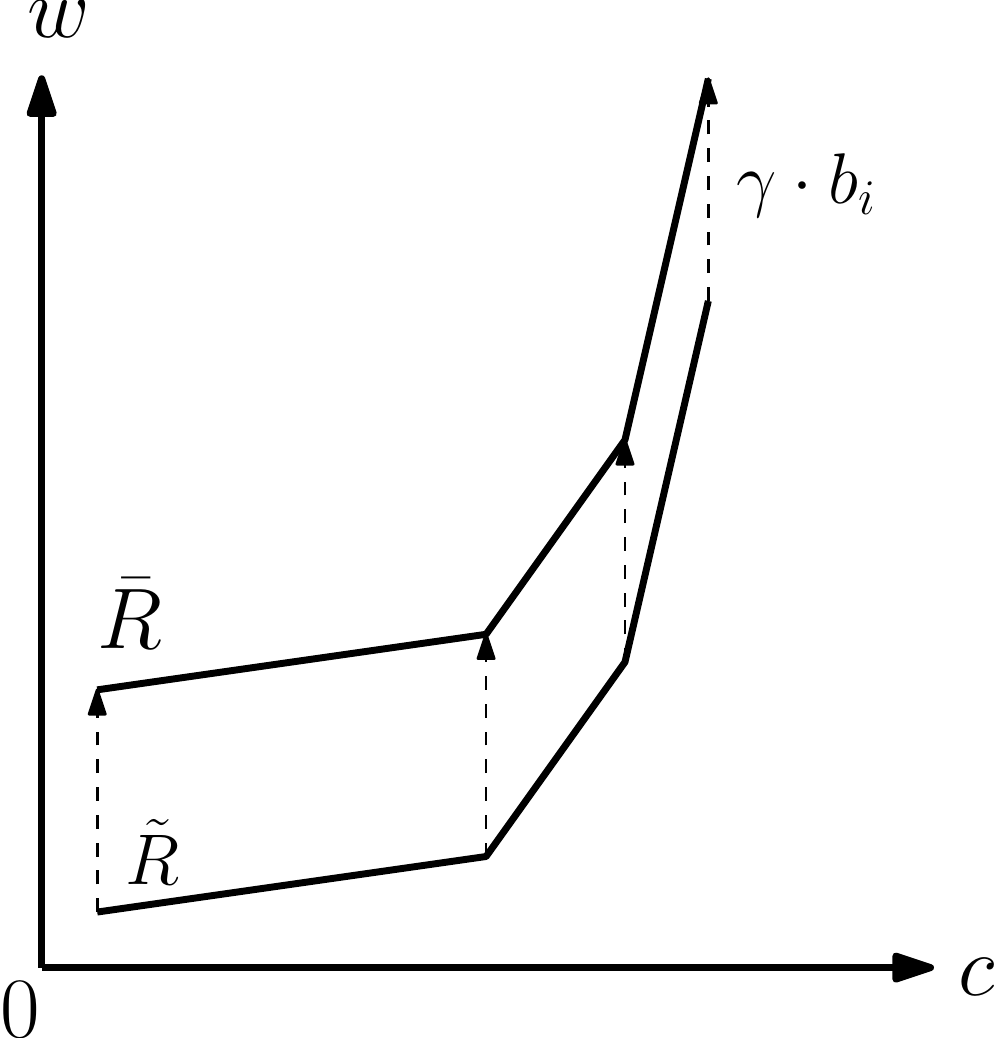}
        \caption{Relation between $\bar{R}$ and $\tilde{R}$}
        \label{fig:reconstruct shift}
      \end{center}
    \end{subfigure}
    \hspace{10ex}
    \begin{subfigure}{0.4\textwidth}
      \begin{center}
        \includegraphics[page=2, width=0.9\textwidth]{imgReconstruct.pdf}
        \caption{Relation between $\bar{R}$ an $R$}
        \label{fig:reconstruct below}
      \end{center}
    \end{subfigure}
  \end{center}
  \caption{Relations between $R$, $\tilde{R}$, and $\bar{R}$}
\end{figure}

Let $x \in P$ be an arbitrary point from the polytope~$P$. Then,
$
  \tilde{w}\T x
  = w\T x - \gamma \cdot a_i\T x
  \geq w\T x - \gamma \cdot b_i
$.
The inequality is due to $\gamma \geq 0$ and $a_i\T x \leq b_i$ for all $x \in P$. Equality holds, among others, for $x = x^\star$ due to the choice of~$a_i$. Hence, for all points $x \in P$ the two-dimensional points $\pi(x)$ and $\bar{\pi}(x)$ agree in the first coordinate while the second coordinate of $\pi(x)$ is at most the second coordinate of $\bar{\pi}(x)$ as $\bar{w}(x) = \tilde{w}\T x + \gamma \cdot b_i \geq w\T x$. Additionally, we have $\pi(x^\star) = \bar{\pi}(x^\star)$. Thus, path~$\bar{R}$ is above path~$R$ but they have point $p^\star = \pi(x^\star)$ in common. Hence, the slope of~$\bar{R}$ to the left (right) of~$p^\star$ is at most (at least) the slope of~$R$ to the left (right) of~$p^\star$ which is at most (greater than)~$t$ (see Figure~\ref{fig:reconstruct below}). Consequently, $p^\star$ is the rightmost vertex of~$\bar{R}$ whose slope does not exceed~$t$. Since~$\bar{R}$ and~$\tilde{R}$ are identical up to a shift of $(0, \gamma \cdot b_i)$, $\tilde{\pi}(x^\star)$ is the rightmost vertex of~$\tilde{R}$ whose slope does not exceed~$t$, i.e., $\tilde{\pi}(x^\star) = \tilde{p}^\star$.
\end{proof}

\subsection{Proof of Lemma~\ref{lemma:probability bound}}

\begin{proof}[Proof of Lemma~\ref{lemma:probability bound}]
Due to Lemma~\ref{lemma:event covering} it suffices to show that
\[
  \Pr{\E_{i, t, \eps}}
  \leq \frac{1}{m} \cdot \frac{2mn^2\eps}{\max \SET{ \frac n2, t } \cdot \delta^2}
  = \frac{2n^2\eps}{\max \SET{ \frac n2, t } \cdot \delta^2}
\]
for any index $i \in [m]$.

We apply the principle of deferred decisions and assume that vector~$c$ is already fixed. Now we extend the normalized vector~$a_i$ to an orthonormal basis $\SET{ q_1, \ldots, q_{n-1}, a_i }$ of~$\RR^n$ and consider the random vector $(Y_1, \ldots, Y_{n-1}, Z)\T = Q\T w$ given by the matrix vector product of the transpose of the orthogonal matrix $Q = [q_1, \ldots, q_{n-1}, a_i]$ and the vector $w = -[u_1, \ldots, u_n] \cdot \lambda$. For fixed values $y_1, \ldots, y_{n-1}$ let us consider all realizations of~$\lambda$ such that $(Y_1, \ldots, Y_{n-1}) = (y_1, \ldots, y_{n-1})$. Then,~$w$ is fixed up to the ray
\[
  w(Z)
  = Q \cdot (y_1, \ldots, y_{n-1}, Z)\T
  = \sum_{j=1}^{n-1} y_j \cdot q_j + Z \cdot a_i
  = v + Z \cdot a_i
\]
for $v = \sum_{j=1}^{n-1} y_j \cdot q_j$. All realizations of $w(Z)$ that are under consideration are mapped to the same value~$\tilde{w}$ by the function $w \mapsto \tilde{w}(w, i)$, i.e., $\tilde{w}(w(Z), i) = \tilde{w}$ for any possible realization of~$Z$. In other words, if $w = w(Z)$ is specified up to this ray, then the path $R_{c, \tilde{w}(w, i)}$ and, hence, the vectors~$y^\star$ and~$\hat{y}$ from the definition of event $\E_{i, t, \eps}$, are already determined.

Let us only consider the case that the first condition of event~$\E_{i, t, \eps}$ is fulfilled. Otherwise, event~$\E_{i, t, \eps}$ cannot occur. Thus, event $\E_{i, t, \eps}$ occurs iff
\[
  (t, t+\eps]
  \ni \frac{w\T \cdot (\hat{y} - y^\star)}{c\T \cdot (\hat{y} - y^\star)}
  = \underbrace{\frac{v\T \cdot (\hat{y} - y^\star)}{c\T \cdot (\hat{y} - y^\star)}}_{\FED \alpha} + Z \cdot \underbrace{\frac{a_i\T \cdot (\hat{y} - y^\star)}{c\T \cdot (\hat{y} - y^\star)}}_{\FED \beta} \DOT
\]
The next step in this proof will be to show that the inequality $|\beta| \geq \max \SET{ \frac n2, t } \cdot \frac{\delta}{n}$ is necessary for event~$\E_{i, t, \eps}$ to happen. For the sake of simplicity let us assume that $\|\hat{y} - y^\star\| = 1$ since~$\beta$ is invariant under scaling. If event~$\E_{i, t, \eps}$ occurs, then $a_i\T y^\star = b_i$, $\hat{y}$ is a neighbor of~$y^\star$, and $a_i\T \hat{y} \neq b_i$. That is, by Lemma~\ref{lemma:delta properties}, Claim~\ref{delta properties:neighboring vertices} we obtain $|a_i\T \cdot (\hat{y} - y^\star)| \geq \delta \cdot \|\hat{y} - y^\star\| = \delta$ and, hence,
\[
  |\beta|
  = \left| \frac{a_i\T \cdot (\hat{y} - y^\star)}{c\T \cdot (\hat{y} - y^\star)} \right|
  \geq \frac{\delta}{|c\T \cdot (\hat{y} - y^\star)|} \DOT
\]
On the one hand we have $|c\T \cdot (\hat{y} - y^\star)| \leq \|c\| \cdot \|\hat{y}-y^\star\| \leq \Big(1+\frac{\sqrt{n}}{\phi}\Big) \cdot 1 \le 2$,
where the second inequality is due to the choice of~$c$ as perturbation of the unit vector~$c_0$ and the third inequality is due to
the assumption~$\phi\ge\sqrt{n}$. 
On the other hand, due to $\frac{w\T \cdot (\hat{y} - y^\star)}{c\T \cdot (\hat{y} - y^\star)} \geq t$ we have
\[
  |c\T \cdot (\hat{y} - y^\star)|
  \leq \frac{|w\T \cdot (\hat{y} - y^\star)|}{t}
  \leq \frac{\|w\| \cdot \|\hat{y} - y^\star\|}{t}
  \leq \frac{n}{t} \DOT
\]
Consequently,
\[
  |\beta|
  \geq \frac{\delta}{\min \SET{ 2, \frac{n}{t} }}
  = \max \SET{ \frac n2, t } \cdot \frac{\delta}{n} \DOT
\]
Summarizing the previous observations we can state that if event~$\E_{i, t, \eps}$ occurs, then $|\beta| \geq \max \SET{ \frac n2, t } \cdot \frac{\delta}{n}$ and $\alpha + Z \cdot \beta \in (t, t+\eps]$. Hence,
\[
  Z \cdot \beta
  \in (t, t+\eps] - \alpha \COMMA
\]
i.e., $Z$ falls into an interval $I(y_1, \ldots, y_{n-1})$ of length at most $\eps/(\max \SET{ \frac n2, t } \cdot \delta/n) = n \eps/(\max \SET{ \frac n2, t } \cdot \delta)$ that only depends on the realizations $y_1, \ldots, y_{n-1}$ of $Y_1, \ldots, Y_{n-1}$. Let~$\B_{i, t, \eps}$ denote the event that~$Z$ falls into the interval $I(Y_1, \ldots, Y_{n-1})$. We showed that $\E_{i, t, \eps} \subseteq \B_{i, t, \eps}$. Consequently,
\[
  \Pr{\E_{i, t, \eps}}
  \leq \Pr{\B_{i, t, \eps}}
  \leq \frac{2n \cdot \frac{n \eps}{\max \sSET{ \frac n2, t } \cdot \delta}}{\delta(Q\T u_1, \ldots, Q\T u_n)}
  \leq \frac{2n^2\eps}{\max \sSET{ \frac n2, t } \cdot \delta^2} \COMMA
\]
where the second inequality is due to Corollary~\ref{corollary.Prob:enough randomness} (applied with~$\phi=1$): 
By definition, we have
\[
  (Y_1, \ldots, Y_{n-1}, Z)\T
  = Q\T w
  = Q\T \cdot -[u_1, \ldots, u_n] \cdot \lambda
  = [-Q\T u_1, \ldots, -Q\T u_n] \cdot \lambda \DOT
\]
The third inequality stems from the fact that
$
  \delta(-Q\T u_1, \ldots, -Q\T u_n)
  = \delta(u_1, \ldots, u_n)
  \geq \delta
$,
where the equality is due to the orthogonality of~$-Q$ (Claim~\ref{delta properties:orthogonal matrix} of Lemma~\ref{lemma:delta properties}).
\end{proof}

\section{Justification of Assumptions}\label{specialcases}

We assumed the matrix $A \in \RR^{m \times n}$ to have full column rank and we assumed the polyhedron $ \SET{ x \in \RR^n \WHERE Ax \leq b }$
to be bounded. In this section we show that this entails no loss of generality by giving transformations of arbitrary linear programs into linear programs
with full column rank whose polyhedra of feasible solutions are bounded.

\subsection{Raising the Rank of Matrix A} \label{dimension}

For the algorithm we have assumed that the matrix $A$ determining the polyhedron $P= \lbrace x \in \RR^n \WHERE Ax \leq b \rbrace$ has full column rank. In this section we provide a solution if this condition is not met. For this, we describe the transformation of $A$ into a matrix $A'$ with full column rank by adding new linearly independent rows (we will ensure that the $\delta$-distance property respectively the value of $\Delta$ is not violated by the transformation of~$A$ into~$A'$). 

\subsubsection{Transformation with respect to $\delta$}
Assume that we have an arbitrary matrix $A=[a_1,\dots,a_m]\T \in \RR^{m \times n}$ with rank $r=\rank(A)<n$. This implies that the polyhedron $P=\lbrace x\WHERE Ax\leq b \rbrace$ has no vertices. Let $c \in \RR^n$ be an arbitrary vector. Then the linear program $\max \lbrace c\T x \WHERE Ax \leq b \rbrace$ has either no solution (this is true if~$P$ is empty or~$c\T x$ is unbounded) or infinitely many solutions. We distinguish two different cases.

\textbf{Case 1: $c \in \SPAN{a_1, \dots, a_m}$}\\
Let $\SPAN{a_1, \dots, a_m}^{\perp}$ denote the orthogonal complement of $\SPAN{a_1, \dots, a_m}$. 
Furthermore let~$o_1, \dots, o_{n-r}$ be an orthonormal basis of $\SPAN{a_1, \dots, a_m}^{\perp}$. 
Then the set of solutions $\mathbb L= \lbrace \argmax c\T x \WHERE Ax \leq b\rbrace$ equals the set 
\[
   \tilde{\mathbb L}= \lbrace v+\argmax c\T x \WHERE Ax \leq b, \tilde{A}x=\mathbb O, v \in\SPAN{a_1, \dots, a_m}^{\perp} \rbrace,
\] 
where $\tilde A=[o_1,\dots, o_{n-r}]$. Thus we can add rows $[o_1,-o_1, \dots, o_{n-r}, -o_{n-r}]$ and extend the vector~$b$ by zero 
entries and calculate the set of solutions (note that the $\delta$-distance-property does not change under this extension of $A$ by 
Lemma \ref{newrows}). This equals the case where $n-r$ basis variables are known and we can proceed as in Section \ref{reduct} by reducing the polyhedron to dimension $r$. \\ 

\textbf{Case 2: $c \not\in \SPAN{a_1, \dots, a_m}$}\\
We maintain the notation from above. Then we have a linear combination 
\begin{align*}
c=\sum\limits_{i=1}^{r} \ell_i\cdot a_i +\sum\limits_{i=1}^{n-r} \tilde \ell_i\cdot o_i
\end{align*}
where $\tilde \ell_k \neq 0$ for at least one $k \in [n-r]$. Without loss of generality we may assume that $\tilde \ell_k > 0$. But $x$ is not bounded for direction $o_k$ by $A$ and thus $o_k$'s coefficient in the linear combination of $x$ may be chosen arbitrarily large. Thus $\max \lbrace c\T x \WHERE Ax \leq b \rbrace$ is unbounded.

Finally we prove that adding rows from the orthogonal complement of $\langle a_1, \dots, a_m \rangle$ to~$A$ does not change the $\delta$-distance property. 

\begin{lemma}\label{newrows}
Let $A=[a_1,\dots,a_m]\T \in \RR^{m \times n}$ be an arbitrary matrix of rank $r \leq n-1$ 
and~$\|a_i\|=1$ for $i\in [m]$. Let $v \in \RR^n$ be a vector such that $\|v\|=1$ and $\langle v, a_i \rangle=0$ for $i \in [m]$.
Then $\rank(A')=r+1$ and furthermore $\delta(A')=\delta(A)$ where $A'=[a_1,\dots,a_m,v]\T$ is defined by adding the new row $v\T$ to matrix $A$. 
\end{lemma}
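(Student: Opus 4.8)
The plan is to derive both assertions directly from the equivalent Definition~\ref{definition:delta}. The rank claim is immediate: $v$ is a unit vector orthogonal to every $a_i$, hence orthogonal to the $r$-dimensional subspace $\SPAN{a_1, \ldots, a_m}$, so $v \notin \SPAN{a_1, \ldots, a_m}$ and $\rank(A') = r + 1$.

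For $\delta(A') = \delta(A)$, the first step is to describe the maximal linearly independent subsets of the rows of $A'$. Since $a_1, \ldots, a_m$ span only an $r$-dimensional space, no $r+1$ of these rows can be linearly independent, so every linearly independent family of $r+1$ rows of $A'$ must contain $v$; conversely, if $a_{i_1}, \ldots, a_{i_r}$ are linearly independent rows of $A$, then $a_{i_1}, \ldots, a_{i_r}, v$ are linearly independent because $v$ is orthogonal to their span. Hence
\[
  \delta(A') = \min \SET{ \delta(a_{i_1}, \ldots, a_{i_r}, v) \WHERE a_{i_1}, \ldots, a_{i_r}\ \text{lin.\ indep.\ rows of}\ A } ,
\]
and it suffices to show $\delta(a_{i_1}, \ldots, a_{i_r}, v) = \delta(a_{i_1}, \ldots, a_{i_r})$ for each such choice. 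Writing $z_j = a_{i_j}$, Definition~\ref{definition:delta}(1) expresses $\delta(z_1, \ldots, z_r, v)$ as the minimum of $r+1$ numbers: the term for $v$, namely $\hat{\delta}(\SET{z_1, \ldots, z_r}, v) = \sin \frac{\pi}{2} = 1$ since $v \perp \SPAN{z_1, \ldots, z_r}$; and for each $\ell \in [r]$ the term $\hat{\delta}(\SET{z_1, \ldots, z_{\ell-1}, z_{\ell+1}, \ldots, z_r, v}, z_\ell)$. Because $\|z_\ell\| = 1$, this last number equals the distance from $z_\ell$ to the span of the vectors $z_j$, $j \in [r] \setminus \SET{\ell}$, together with $v$; and since $z_\ell$ lies in $\SPAN{a_1, \ldots, a_m}$ while $v$ is orthogonal to this subspace, the nearest point of that span to $z_\ell$ has no component along $v$ and therefore coincides with the nearest point of $\SPAN{z_j \WHERE j \in [r] \setminus \SET{\ell}}$ to $z_\ell$. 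Thus the $\ell$-th term is unchanged by adjoining $v$, i.e.\ it equals $\hat{\delta}(\SET{z_1, \ldots, z_{\ell-1}, z_{\ell+1}, \ldots, z_r}, z_\ell)$. Taking the minimum over all $\ell$ and using that $\delta(z_1, \ldots, z_r) \le 1$ (being a minimum of sines of angles in $(0, \frac{\pi}{2}]$) gives $\delta(z_1, \ldots, z_r, v) = \min \SET{1, \delta(z_1, \ldots, z_r)} = \delta(z_1, \ldots, z_r)$, and substituting this into the displayed formula yields $\delta(A') = \delta(A)$.

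The only step that needs genuine care is the geometric claim just used: adjoining the direction $v$ to a spanning set of a subspace $U \subseteq \SPAN{a_1, \ldots, a_m}$ does not change the distance from any vector of $\SPAN{a_1, \ldots, a_m}$ to $U$. The cleanest way to see this is to write the enlarged subspace as the orthogonal direct sum $U \oplus \SPAN{v}$ (orthogonal since $v \perp a_i$ for all $i$), decompose $z_\ell = u + w$ with $u \in U$ and $w \perp U$, and observe that the $v$-coordinate of the orthogonal projection of $z_\ell$ equals $\langle z_\ell, v \rangle \cdot v / \|v\|^2 = 0$, so the projection onto $U \oplus \SPAN{v}$ is again $u$ and the distance is $\|w\|$ either way. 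One should also dispose of the degenerate cases $r \le 1$, for which Definition~\ref{definition:delta}(2) does not literally apply to $A$: there $\delta(A) = 1$ reads off directly from Definition~\ref{def:delta} (a unit row has distance $1$ from the trivial subspace, and every nontrivial span of rows already contains the row under consideration), and likewise $\delta(A') = \delta(z_1, v) = \min \SET{1, 1} = 1$, so the asserted equality still holds.
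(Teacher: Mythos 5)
Your proof is correct, and its overall skeleton matches the paper's: identify every maximal linearly independent family of rows of $A'$ as a maximal family of $A$ together with $v$, then check that the $\hat{\delta}$-term for $v$ equals $1$ (so it cannot be the minimum) while the remaining $\hat{\delta}$-terms are unchanged by adjoining $v$. Where you differ is in the technical device for that last step. You argue directly with the geometric reading of $\hat{\delta}$ from Definition~\ref{definition:delta}(1): decompose the enlarged span as an orthogonal direct sum $U \oplus \SPAN{v}$ and note that the projection of $z_\ell$ onto it has vanishing $v$-component (since $z_\ell \in \SPAN{a_1,\dots,a_m} \perp v$), so the distance from $z_\ell$ to the span is unaffected. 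The paper instead expresses each $\hat{\delta}$-term as $1/\|x\|$ for the solution $x$ of a full $n\times n$ system obtained by completing the selected rows to a basis with $v, o_2, \ldots, o_{n-r}$, and observes that the relevant systems (hence the $\|x\|$'s) for $A$ and $A'$ coincide, with the extra system contributed by $v$ yielding $\|x\|=1$. Your geometric route is the cleaner of the two and avoids the paper's slightly confusing ``minimum norm'' bookkeeping. You also explicitly dispose of the boundary case $r \le 1$, where Definition~\ref{definition:delta}(2) does not literally apply to $A$; the paper skips this silently, so including it is appropriate.
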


\begin{proof}
First choose $r$ linearly independent rows of $A$. Without loss of generality we may assume $a_1, \dots, a_r$. 
To calculate $\delta(\lbrace a_1 \dots, a_{r-1} \rbrace, a_r)$ we choose a vertex $x \in \SPAN{a_1, \dots, a_r}$ 
with $x\cdot a_i=0$ for $i \in [r-1]$ and $x\cdot a_r=1$. Let $\alpha$ be the angle
between $a_r$ and $x$. Then $\delta(\lbrace a_1 \dots, a_{r-1} \rbrace, a_r)=\sin (\pi/2- \alpha)=\cos (\alpha)=\frac{\|a_r\|}{\|x\|}=\frac{1}{\|x\|}$.

Moreover let $v, o_2, \dots, o_{n-r}$ be an orthonormal basis of the orthogonal complement $\SPAN{a_1, \dots, a_r}^{\bot}$.
Then $x\cdot v=0$ and $x\cdot o_{i+1}=0$ for $i \in [n-r-1]$ because of $x \in \SPAN{a_1, \dots, a_r}$. Thus, $x$ is the unique solution of the 
system of linear equations
\begin{align*}
[a_1, \dots, a_r, v, o_2, \dots, o_{n-r}]\T x=e_r,
\end{align*}
where $e_r \in \RR^n$ denotes the $r^{\text{th}}$ canonical unit vector.

In accordance with Definition~\ref{definition:delta} and Lemma~\ref{lemma:delta properties} 1, 
we obtain $\delta(A)$ by choosing a solution with minimum norm over all such systems of linear equations and all vectors $e_i\in \RR^n$ with $i \in [r]$. 
Consider now the matrix $A'$ which is obtained by adding the row $v$ to $A$. To calculate $\delta (A')$ we have to calculate the minimal norm $\frac{1}{\|x\|}$ of the set of solutions of the systems of linear equations of the form 
\begin{align*}
[a'_1, \dots, a'_r, v, o_2, \dots, o_{n-r}]\T x=e_i,
\end{align*}
with $i \in [r+1]$. In the case where $i \leq r$ the set of systems of linear equations equals the set of systems of linear equation
from the case where we calculate $\delta(A)$. Thus the minimum norm does not change and we obtain $\delta(A')=\delta(A)$.

In the case where $i=r+1$ the solution of the systems of linear equations is given by $x=v$ and we obtain $1/\|x\|=1$. But this is the maximum norm, which can be reached by a solution $x$ and thus the minimum norm does not change at all which completes the proof.
\end{proof}

\subsubsection{Transformation with respect to $\Delta$}

If we want to ensure that the value $\Delta(A)$ does not change under the tranformation (which means $\Delta(A')=\Delta(A)$) we have to consider a slight modification of the above transformation. Especially, we will add vectors $e_i$ for $i \in [n]$ which are part of the canonical basis of~$\RR^n$ such that $e_i \not \in \SPAN{a_1, \dots, a_m}$. \\
Again, we know that in Case 1 the polyhedron is either empty or has infinitely many solutions. Thus, if we find a solution 
\[
   x'\in \lbrace \argmax c\T x \WHERE Ax \leq b, \tilde{A}x=\mathbb O \rbrace,
\] 
where $\tilde A=[e_{i_1},\dots, e_{i_{n-r}}]$ we already know that $x'$ also maximizes $c$ with respect to $P$. Furthermore if we are in Case 2 which means $c \not\in \SPAN{a_1, \dots, a_m}$ then the function $c \T x$ is unbounded for elements $x \in P$.
It remains to show that $\Delta$ does not change by adding rows $e_i$ to $A$.

\begin{lemma}
Let $A=[a_1,\dots,a_m]\T \in \ZZ^{m \times n}$ be an arbitrary matrix of rank $r \leq n-1$. 
Let $e_i \in \RR^n$ be a vector part of the canonical basis of $\RR^n$ such that $e_i \not \in \SPAN{a_1,\dots, a_m}$.
Then $\rank(A')=r+1$ and furthermore $\Delta(A')=\Delta(A)$ where $A'=[a_1,\dots,a_m,e_i]\T$ is defined by adding the new row $e_i \T$ to matrix $A$. 
\end{lemma}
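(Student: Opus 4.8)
The plan is to establish the two assertions separately. The rank statement is immediate: by hypothesis $e_i\notin\SPAN{a_1,\dots,a_m}$, so the row $e_i\T$ is linearly independent from the rows of~$A$, and therefore $\rank(A')=\rank(A)+1=r+1$.

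For the determinant statement, one inequality comes for free: since $A$ is a submatrix of~$A'$, every square submatrix of~$A$ is also a square submatrix of~$A'$, so $\Delta(A')\ge\Delta(A)$. The work lies in the reverse inequality $\Delta(A')\le\Delta(A)$, which I would prove in the same spirit as Lemma~\ref{lemma:Phase I Delta} but with a much shorter argument. Let $M$ be an arbitrary $k\times k$ submatrix of~$A'$, indexed by some set of rows and some set~$J$ of columns. If $M$ does not use the new row, then $M$ is a submatrix of~$A$ and $|\det M|\le\Delta(A)$. If $M$ does use the new row, the corresponding row of~$M$ is the restriction of $e_i\T$ to the columns in~$J$. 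When $i\notin J$ this is the zero row, so $\det M=0\le\Delta(A)$. When $i\in J$ this row has exactly one nonzero entry, namely a~$1$ in the column position corresponding to index~$i$; expanding $\det M$ by cofactors along this row gives $|\det M|=|\det M'|$, where $M'$ is obtained from~$M$ by deleting this row and the column indexed by~$i$. All remaining rows of~$M'$ are rows of~$A$, so $M'$ is a $(k-1)\times(k-1)$ submatrix of~$A$, whence $|\det M|=|\det M'|\le\Delta(A)$ (in the degenerate case $k=1$ we have $M=[1]$ and $|\det M|=1$). Taking the maximum over all such~$M$ yields $\Delta(A')\le\max\{1,\Delta(A)\}$, and since $A$ is a nonzero integer matrix we have $\Delta(A)\ge1$, so $\Delta(A')\le\Delta(A)$. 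Together with the previous inequality this gives $\Delta(A')=\Delta(A)$.

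I do not expect any genuine obstacle here; the argument is a strict simplification of the block-triangular computation used for Lemma~\ref{lemma:Phase I Delta}, because the single added row $e_i\T$ is far simpler than the identity blocks appearing there. The only point that needs a word of care is the degenerate $1\times1$ case, which is covered by the observation that $\Delta(A)\ge1$ for any integer matrix with a nonzero entry (in particular, for any integer matrix without zero rows, as we assume throughout Section~\ref{initial}).
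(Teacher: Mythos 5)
Your proof is correct and follows essentially the same route as the paper: split into cases according to whether the submatrix uses the added row, note that a zero restriction of $e_i\T$ kills the determinant, and otherwise expand by cofactors along the single nonzero entry to reduce to a submatrix of~$A$. You are slightly more complete than the paper, explicitly noting the trivial direction $\Delta(A')\ge\Delta(A)$ and handling the degenerate $1\times1$ case via $\Delta(A)\ge1$, but the underlying argument is the same.
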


\begin{proof}
Let $B$ be a submatrix of $A'$. Then either $B$ contains no entries from row $e_i \T$ (which means $\det(B) \leq \Delta(A)$) or one row of $B$ is a subvector $e'$ of $e_i \T$.
We distinguish two different cases: 

\textbf{Case 1:} $e'=0$. Then $B$ has a zero row and thus $\det(B)=0\le\Delta(A)$.

\textbf{Case 2:} $e'\neq 0$. In this case $B$ has a row ${e'}\T$, which is element of the canonical basis. Then $B$ has the form
\begin{align*}
\begin{pmatrix}
\tilde A \\
{e'}\T
\end{pmatrix}.
\end{align*}
Using the Laplace expansion, the absolute value of the determinant of $B$ is at most the absolute value of a determinant of a submatrix of $\tilde A$ which is a submatrix of $A$. We obtain $\det(B) \leq \Delta(A)$ which concludes the proof.
\end{proof}
\subsection{Translation into a Bounded Polyhedron} \label{unbounded}

\newcommand{\enc}{\textnormal{enc}}
\newcommand{\lcm}{\textnormal{lcm}}

For the algorithm we have assumed that the polyhedron $P= \lbrace x \in \RR \WHERE Ax \leq b \rbrace$ is bounded. This may be done because in the case where $P$ is unbounded we tranform~$P$ into a polytope~$P'$ and run the algorithm for~$P'$.  If the optimum solution is unique and not a vertex of $P$, then we assert that the linear program $\max \lbrace c\T x \WHERE Ax \leq b \rbrace$ is unbounded.
To transform $P$ we use the construction applied in \cite{GRE}. First we choose $n$ linearly independent rows of $A$. Without loss of generality we may assume the rows are given by $a_1, \dots, a_n$. If we find a ball $B(0)$ with radius $r$ which contains all vertices of $P$, then we define a parallelpiped
\begin{align*}
Z=\lbrace x \in \RR^n \WHERE -r \leq a_i \cdot x \leq r, i \in [n] \rbrace,
\end{align*}
which contains all vertices of $P$ and does not violate the $\delta$-distance property since it is defined by rows of $A$.
Finally, set $P'= P \cap Z$ and start the algorithm on polytope $P'$. Note that $P'$ has $\delta$-distance since the set of rows of $A$ did not change during the transformation.

To construct a ball with the desired properties we have to assume $A \in \mathbb Q^{m \times n}$ such as $b\in \mathbb Q^m$, which means no loss of generality for the implementation of the algorithm. By a slight generalization of Lemma~3.1.33 of \cite{GLS} all vertices of the polyhedron $P$ are contained in a ball $B(0)$ with radius $r=\sqrt{n}\cdot 2^{\enc(A,b)-n^2}\cdot \lcm(A)^n$ if $A \in \mathbb Q$, where the function $\enc$ returns the encoding length and the function $\lcm(M)$ for a rational matrix $M \in \mathbb{Q}^{m \times n}$ returns the least common multiple of the denominators of the entries of~$M$. As a convention, the denominator of~$0$ is defined as~$1$.
\begin{lemma}
If $P= \lbrace x \in \mathbb R^n \WHERE Ax \leq b \rbrace$ for  $A \in \mathbb Q^{m \times n}$ and $b \in \mathbb Q^m$, then all vertices of $P$ are contained in a ball $B$ around $0$ with radius $r=\sqrt{n}\cdot 2^{\enc(A,b)-n^2}\cdot \lcm(A)^n$.
\end{lemma}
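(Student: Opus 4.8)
The plan is to combine the standard description of a vertex as the unique solution of a nonsingular square subsystem of $Ax\le b$ with Cramer's rule, and then to estimate the numerator and denominator determinants separately; this is exactly the route behind Lemma~3.1.33 of~\cite{GLS}.

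First I would fix a vertex $v$ of $P$. Since $v$ is a vertex, the rows $a_i\T$ of $A$ for which $a_i\T v=b_i$ span $\RR^n$, so we may select $n$ of them that are linearly independent; let $A_v\in\mathbb{Q}^{n\times n}$ be the resulting nonsingular submatrix of $A$ and $b_v\in\mathbb{Q}^n$ the corresponding subvector of $b$, so that $v$ is the unique solution of $A_vx=b_v$. By Cramer's rule, $v_k=\det(A_v^{(k)})/\det(A_v)$ for every $k\in[n]$, where $A_v^{(k)}$ is obtained from $A_v$ by replacing its $k$-th column with $b_v$.

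Next I would bound the two determinants. For the denominator, every entry of $A$ is a rational number whose denominator divides $\lcm(A)$, so $\lcm(A)\cdot A_v$ is an integer matrix and $\lcm(A)^n\cdot\det(A_v)=\det(\lcm(A)\cdot A_v)\in\ZZ\setminus\{0\}$; hence $|\det(A_v)|\ge\lcm(A)^{-n}$. For the numerator, all entries of $A_v^{(k)}$ occur among the entries of $A$ and $b$, so its encoding length is at most $\enc(A,b)$, and the standard bound for determinants of rational matrices (see~\cite{GLS}; alternatively, Hadamard's inequality together with the estimate of the row norms in terms of encoding length) gives $|\det(A_v^{(k)})|\le 2^{\enc(A,b)-n^2}$. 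Combining the two estimates yields $|v_k|\le 2^{\enc(A,b)-n^2}\cdot\lcm(A)^n$ for every $k$, and therefore $\|v\|\le\sqrt{n}\cdot\|v\|_\infty\le\sqrt{n}\cdot 2^{\enc(A,b)-n^2}\cdot\lcm(A)^n=r$, as claimed.

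The only delicate point is matching the precise constants — the $-n^2$ in the exponent and the power $\lcm(A)^n$ — which is a matter of carefully accounting for encoding lengths rather than of any new idea; this bookkeeping is precisely what the cited (slight generalization of the) GLS lemma provides, so I would invoke it directly rather than re-derive the determinant estimate from scratch. Everything else (the vertex characterization, Cramer's rule, and clearing denominators) is routine.
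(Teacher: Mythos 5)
Your proof is correct and follows essentially the same route as the paper's: identify each vertex as the solution of a nonsingular $n\times n$ subsystem, apply Cramer's rule, lower-bound $|\det(A_v)|$ by $\lcm(A)^{-n}$ via clearing denominators, and upper-bound $|\det(A_v^{(k)})|$ by $2^{\enc(A,b)-n^2}$ using the standard GLS encoding-length bound for rational determinants. The only cosmetic difference is that the paper works with $\lcm(B)$ for the chosen submatrix and only bounds it by $\lcm(A)$ at the end, whereas you use $\lcm(A)$ from the outset.
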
 

\begin{proof}
We have to calculate an upper bound for the length of all vertices. Thus, for each submatrix $B$ of $A$ of rank $n$ and the corresponding subvector $b'$ of $b$ we have to bound the length of the solution $x$ of $Bx=b'$. Applying Cramer's rule, the components $x_i$ of $x$ are given by 
\[
	x_i=\frac{\det (B_i)}{\det (B)} \COMMA
\]
where $B_i$ equals $B$ after replacing the $i\th$ column of $B$ by $b'$. We obtain $\lcm(B)^n \cdot \det (B)=\det(\lcm(B)\cdot B) \geq 1$ since $\lcm(B)\cdot B$ is integral and non-singular. All together we obtain 
\begin{align*}
x_i=\frac{\det (B_i)}{\det (B)} &\leq \det (B_i)\cdot \lcm(B)^n \\ &\leq 2^{\enc(B_i)-n^2}\cdot \lcm(B)^n \\ &\leq 2^{\enc(B,b')-n^2}\cdot \lcm(B)^n.
\end{align*}
Thus, choosing $r=\sqrt{n}\cdot 2^{\enc(A,b)-n^2}\cdot \lcm(A)^n$ the ball $B(0)$ with radius $r$ contains all vertices of $P$.
\end{proof}

\section{An Upper Bound on the Number of Random Bits}\label{sec:RandomBits}

For our analysis we assumed that we can draw continuous random variables. In practice it is, however, more realistic to assume that we can draw a finite number of random bits. In this section we will show that our algorithm only needs to draw $\text{poly}(\log m, n, \log(1/\delta))$ bits in order to obtain the expected running time stated in Theorem~\ref{thm:MainNumberOfPivots}. 
However, if the parameter~$\delta$ is not known to our algorithm, we have to
modify the shadow vertex algorithm. This will give us an additional factor of $O(n)$ in the expected running time.

Let us assume that we want to approximate a uniform random draw~$X$ from the interval $[0, 1)$ with~$k$ random bits $Y_1, \ldots, Y_k \in \SET{ 0, 1 }$. (A draw from an arbitrary interval $[a, b)$ can be simulated by drawing a random variable from $[0, 1)$ and then applying the affine linear function $x \mapsto a + (b-a) \cdot x$.) We consider the random variable $Z = \sum_{\ell=1}^k Y_\ell \cdot 2^{-\ell}$. We observe that the random variable~$Z$ has the same distribution as the random variable $g(X)$, where $g(x) = \lfloor x \cdot 2^k \rfloor/2^k$. Note that $|g(X)-X| \leq 2^{-k}$. Hence, instead of considering discrete variables and going through the whole analysis again, we will argue that, with high probability, the number of slopes of the shadow vertex polygon does not change if each random variable is perturbed by not more than a sufficiently small~$\eps$. If we have proven such a statement, this implies that we can approximate our continuous uniform random draws as discussed above by using $O \big( \log(1/\eps) \big)$ bits for each draw. Recall that our algorithm draws two random vectors $\lambda \in (0, 1]^n$ and $c \in [-1, 1]^n$ that we have to deal with in this section.

For a vector $x \in \RR^n$ and a real $\eps > 0$ let $U_\eps(x) \subseteq [-1,1]^n$ denote the set of vectors~$x'\in[-1,1]^n$ for which $\|x'-x\|_\infty \leq \eps$, that is, $x'$ and~$x$ differ in each component by at most~$\eps$. In the remainder let us only consider values $\eps \in (0, 1]$.

Whenever a vector $c \in [-1,1]^n$ and a vector $\hat{c} \in U_\eps(c)$ are defined, then by~$\Delta_c$ we refer to the difference $\Delta_c \DEF \hat{c} - c$. Observe that $\|\Delta_c\| \leq \sqrt{n} \eps$. The same holds for the vectors $\lambda \in (0, 1]^n$, $\hat{\lambda} \in U_\eps(\lambda)$, and $\Delta_\lambda \DEF \hat{\lambda} - \lambda$. When the vectors~$\lambda$ and~$\hat{\lambda}$ are defined, then the vectors~$w$ and~$\hat{w}$ are defined as $w \DEF -[u_1, \ldots, u_n] \cdot \lambda$ and $\hat{w} \DEF -[u_1, \ldots, u_n] \cdot \hat{\lambda}$ (cf.\ Algorithm~\ref{algorithm:SV}). Furthermore, the vector~$\Delta_w$ is defined as $\Delta_w \DEF \hat{w} - w$. Note that $\|w\| = \|[u_1, \ldots, u_n] \cdot \lambda\| \leq \sum_{\ell=1}^n \|u_\ell\| \leq n$ as the rows $u_1\T, \ldots, u_n\T$ of matrix~$A$ are normalized. Similarly, $\|\hat{w}\| \leq n$ and $\|\Delta_w\| \leq n \eps$. We will frequently make use of these inequalities without discussing their correctness again.

If~$P$ denotes the non-degenerate bounded polyhedron $\SET{ x \in \RR^n \WHERE Ax \leq b }$, then we denote by $V_k(P)$ the set of all $k$-tuples $(z_1, \ldots, z_k)$ of pairwise distinct vertices $z_1, \ldots z_k$ of~$P$ such that for any $i = 1, \ldots, k-1$ the vertices~$z_i$ and~$z_{i+1}$ are neighbors, that is, they share exactly $n-1$ tight constraints. In other words, $V_k(P)$ contains the set of all simple paths of length $k-1$ of the edge graph of~$P$. Note that $|V_k(P)| \leq \binom{m}{n} \cdot n^{k-1} \leq m^n n^{k-2}$. For our analysis only $V_2(P)$ and $V_3(P)$ are relevant.

The following lemma is an adaption of Lemma~\ref{lemma:failure probability I} for our needs in this section and follows from Lemma~\ref{lemma:failure probability I}.

\begin{lemma}
\label{lemma:swap c}
The probability that there exist a pair $(z_1, z_2) \in V_2(P)$ and a vector $\hat{c} \in U_\eps(c)$ for which $\hat{c}\T \cdot (z_2-z_1) = 0$ is bounded from above by $2m^n n^{3/2} \eps \phi$.
\end{lemma}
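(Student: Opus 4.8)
The plan is to reduce the statement directly to Lemma~\ref{lemma:failure probability I} by absorbing the existential quantifier over $\hat{c} \in U_\eps(c)$ into a worst-case bound on the size of the perturbation. First I would record that any $\hat{c} \in U_\eps(c)$ satisfies $\|\hat{c} - c\| \leq \sqrt{n}\,\eps$, since $\hat{c}$ and $c$ differ in each of the $n$ coordinates by at most~$\eps$; this is precisely the estimate $\|\Delta_c\| \leq \sqrt{n}\,\eps$ already noted for $\Delta_c \DEF \hat{c} - c$.

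Next, suppose that for some pair $(z_1, z_2) \in V_2(P)$ there is a vector $\hat{c} \in U_\eps(c)$ with $\hat{c}\T \cdot (z_2 - z_1) = 0$. Writing $\hat{c} = c + \Delta_c$ and rearranging gives $c\T \cdot (z_2 - z_1) = -\Delta_c\T \cdot (z_2 - z_1)$, so by the Cauchy--Schwarz inequality
$
  |c\T \cdot (z_2 - z_1)| = |\Delta_c\T \cdot (z_2 - z_1)| \leq \|\Delta_c\| \cdot \|z_2 - z_1\| \leq \sqrt{n}\,\eps \cdot \|z_2 - z_1\|.
$
Since a pair $(z_1, z_2) \in V_2(P)$ is by definition a pair of neighboring vertices of~$P$, this shows that the event in the lemma is contained in the event of Lemma~\ref{lemma:failure probability I} with~$\eps$ replaced by $\sqrt{n}\,\eps$. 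Applying that lemma with parameter $\sqrt{n}\,\eps$ bounds the probability by $2 m^n n \cdot (\sqrt{n}\,\eps) \cdot \phi = 2 m^n n^{3/2} \eps \phi$, which is the claimed bound.

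I do not expect any genuine obstacle here; the lemma is really just a rescaling of Lemma~\ref{lemma:failure probability I}. The one point that deserves a sentence of care is that the existential quantifier over~$\hat{c}$ collapses: the containment of events derived above holds for \emph{every} $\hat{c} \in U_\eps(c)$, so the ``bad'' event is in fact a deterministic condition on the random vector~$c$ alone (namely that some neighboring pair $(z_1,z_2)$ has $|c\T(z_2-z_1)| \le \sqrt{n}\,\eps\,\|z_2-z_1\|$), and Lemma~\ref{lemma:failure probability I} applies to it verbatim.
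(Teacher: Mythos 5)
Your proposal is correct and matches the paper's proof essentially verbatim: both bound $|c\T(z_2-z_1)|$ by $\sqrt{n}\,\eps\,\|z_2-z_1\|$ using $\|\Delta_c\|\le\sqrt{n}\,\eps$ and then invoke Lemma~\ref{lemma:failure probability I} with $\eps$ replaced by $\sqrt{n}\,\eps$. Your added remark that the existential over $\hat{c}$ collapses into a deterministic condition on $c$ alone is a useful clarification of what the paper states more tersely.
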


\begin{proof}
Let $c \in [-1,1]^n$ be a vector such that there exists a vector $\hat{c} \in U_\eps(c)$ for which $\hat{c}\T \cdot (z_2-z_1) = 0$ for an appropriate pair $(z_1, z_2) \in V_2(P)$. Then
\begin{align*}
	|c\T \cdot (z_2-z_1)|
	&= |\hat{c}\T \cdot (z_2-z_1) - \Delta_c\T \cdot (z_2-z_1)| \cr
	&\leq \|\Delta_c\| \cdot \|z_2-z_1\| \cr
	&\leq \sqrt{n} \eps \cdot \|z_2-z_1\| \DOT
\end{align*}
In accordance with Lemma~\ref{lemma:failure probability I}, the probability of this event is bounded from above by $2m^n n^{3/2} \eps \phi$.
\end{proof}

A similar statement as Lemma~\ref{lemma:swap c} can be made for the objective~$w$. However, for our purpose we need a slightly stronger statement.

\begin{lemma}
\label{lemma:swap w}
The probability that there exist a pair $(z_1, z_2) \in V_2(P)$ and a vector $\hat{\lambda} \in U_\eps(\lambda)$ for which $|\hat{w}\T \cdot (z_2-z_1)| \leq n \eps^{1/3} \cdot \|z_2-z_1\|$, where $\hat{w} = -[u_1, \ldots, u_n] \cdot \hat{\lambda}$ (cf.\ Algorithm~\ref{algorithm:SV}), is bounded from above by $4m^n n^2 \eps^{1/3}/\delta$.
\end{lemma}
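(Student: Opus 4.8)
The plan is to reduce the stated event --- which quantifies over all perturbations $\hat\lambda\in U_\eps(\lambda)$ --- to a ``clean'' event depending only on the random vector~$w$, and then to bound the probability of that clean event exactly as in the proof of Lemma~\ref{lemma:failure probability I}. First I would observe that if $\hat\lambda\in U_\eps(\lambda)$ and $(z_1,z_2)\in V_2(P)$ witness the event, then $\|\hat w-w\| = \|[u_1,\ldots,u_n]\cdot(\hat\lambda-\lambda)\| \le \sum_{k=1}^n\|u_k\|\cdot|\hat\lambda_k-\lambda_k| \le n\eps \le n\eps^{1/3}$, using $\eps\in(0,1]$ and that the rows $u_k\T$ of~$A$ are normalized. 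Hence, by the triangle inequality,
\[
  |w\T(z_2-z_1)| \le |\hat w\T(z_2-z_1)| + \|\hat w-w\|\cdot\|z_2-z_1\| \le 2n\eps^{1/3}\cdot\|z_2-z_1\|\DOT
\]
So the event of the lemma is contained in the event that there exists a pair $(z_1,z_2)\in V_2(P)$ with $|w\T(z_2-z_1)| \le 2n\eps^{1/3}\cdot\|z_2-z_1\|$, and it suffices to bound the probability of the latter, which no longer involves~$\hat\lambda$.

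Next I would fix a pair $(z_1,z_2)\in V_2(P)$, set $u \DEF z_2-z_1$, and --- exploiting scale invariance --- assume $\|u\|=1$. The key step is to show $\max_{k\in[n]}|\langle u_k,u\rangle| \ge \delta/n$. Let $m_1,\ldots,m_n$ be the columns of $\big([u_1,\ldots,u_n]\T\big)^{-1}$, which exists since the $u_k\T$ are linearly independent rows of~$A$. Then $u = \sum_{k=1}^n \langle u_k,u\rangle\cdot m_k$, and by Claim~\ref{delta properties:inverse} of Lemma~\ref{lemma:delta properties} we have $\|m_k\| \le 1/\delta(u_1,\ldots,u_n) \le 1/\delta$ for every~$k$ (the last inequality because $\{u_1,\ldots,u_n\}$ is one of the subsets over which $\delta=\delta(A)$ takes a minimum). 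Therefore
\[
  1 = \|u\| \le \sum_{k=1}^n |\langle u_k,u\rangle|\cdot\|m_k\| \le \frac{1}{\delta}\sum_{k=1}^n|\langle u_k,u\rangle| \le \frac{n}{\delta}\cdot\max_{k\in[n]}|\langle u_k,u\rangle|\COMMA
\]
which yields the claimed lower bound.

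Finally I would apply the principle of deferred decisions. Write $w\T u = -\sum_{k=1}^n\lambda_k\langle u_k,u\rangle$, pick an index $k^\star$ with $|\langle u_{k^\star},u\rangle| \ge \delta/n$, and condition on arbitrary fixed values of the remaining coordinates $\lambda_k$, $k\ne k^\star$. Then $|w\T u| \le 2n\eps^{1/3}$ can hold only if $\lambda_{k^\star}$ falls into an interval of length $\frac{4n\eps^{1/3}}{|\langle u_{k^\star},u\rangle|} \le \frac{4n^2\eps^{1/3}}{\delta}$; since $\lambda_{k^\star}$ is uniform on $(0,1]$, this has probability at most $\frac{4n^2\eps^{1/3}}{\delta}$. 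A union bound over the at most $|V_2(P)| \le m^n$ pairs of neighboring vertices then gives the bound $\frac{4m^n n^2\eps^{1/3}}{\delta}$.

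I expect the only genuine subtlety to be the reduction in the first paragraph --- correctly absorbing the quantifier over $\hat\lambda$ into a constant-factor loosening of the threshold (and into replacing $n\eps$ by $n\eps^{1/3}$, which is the reason for the cube root in the statement) --- together with invoking the right characterization of~$\delta$ via Claim~\ref{delta properties:inverse}; the rest closely parallels the arguments already carried out for the objective~$c$.
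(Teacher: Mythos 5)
Your proof is correct and follows essentially the same route as the paper's: first absorb the supremum over $\hat\lambda\in U_\eps(\lambda)$ by showing $\|\hat w-w\|\le n\eps\le n\eps^{1/3}$, then lower-bound the relevant coefficient via Claim~\ref{delta properties:inverse} of Lemma~\ref{lemma:delta properties}, apply the principle of deferred decisions to the single coordinate $\lambda_{k^\star}$, and conclude with a union bound over $|V_2(P)|\le m^n$. The only cosmetic difference is that you obtain $\max_k|\langle u_k,u\rangle|\ge\delta/n$ directly from the decomposition $u=\sum_k\langle u_k,u\rangle m_k$, whereas the paper passes through the intermediate operator-norm bound $\|M\|\le\sqrt n/\delta$ and $\|U\T u\|\ge\delta/\sqrt n$; both rest on the same inequality $\|m_k\|\le 1/\delta$.
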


\begin{proof}
Fix a pair $(z_1, z_2) \in V_2(P)$ and let $\Delta_z \DEF z_2-z_1$. Without loss of generality let us assume that $\|\Delta_z\| = 1$. The event $\hat{w}\T \Delta_z \in [-n \eps^{1/3}, n \eps^{1/3}]$ is equivalent to
\[
	w\T \Delta_z \in [-n \eps^{1/3}, n \eps^{1/3}] - \Delta_w\T \Delta_z \DOT
\]
This interval is a subinterval of $[-2n \eps^{1/3}, 2n \eps^{1/3}]$ as
\[
	|\Delta_w\T \Delta_z|
	\leq \|\Delta_w\| \cdot \|\Delta_z\|
	\leq n \eps \cdot 1
	\leq n \eps^{1/3}
\]
when recalling that $\eps \leq 1$. Since
\begin{align*}
	w\T \Delta_z \in [-2n \eps^{1/3}, 2n \eps^{1/3}]
	&\iff (U \lambda)\T \Delta_z \in [-2n \eps^{1/3}, 2n \eps^{1/3}] \cr
	&\iff \lambda\T y \in [-2n \eps^{1/3}, 2n \eps^{1/3}]
\end{align*}
for $U = [u_1, \ldots, u_n]$ and $y = U\T \Delta_z$, in the next part of this proof we will derive a lower bound for $\|y\|$. Particularly, we will show that $\|y\| \geq \delta/\sqrt{n}$.

Let $M \DEF [m_1, \ldots, m_n] \DEF (U\T)^{-1}$. Due to $\Delta_z = My$, we obtain $1 = \|\Delta_z\| \leq \|M\| \cdot \|y\|$, which implies $\|y\| \geq 1/\|M\|$. In accordance with Lemma~\ref{lemma:delta properties}, Claim~\ref{delta properties:inverse}, we obtain
\[
	\max_{k \in [n]} \|m_k\|
	= \frac{1}{\delta(u_1, \ldots, u_n)}
	\leq \frac{1}{\delta} \DOT
\]
Consequently,
\[
	\|Mx\|
	\leq \sum_{k=1}^n \|m_k\| \cdot |x_k|
	\leq \sum_{k=1}^n \frac{1}{\delta} \cdot |x_k|
	= \frac{\|x\|_1}{\delta}
	\leq \frac{\sqrt{n} \cdot \|x\|}{\delta}
\]
for any vector $x \neq \NULL$, i.e., $\|M\| = \sup_{x \neq \NULL} \|Mx\|/\|x\| \leq \sqrt{n}/\delta$. Summarizing the previous observations, we obtain $\|y\| \geq 1/\|M\| \geq \delta/\sqrt{n}$.

For the last part of the proof we observe that there exists an index $i \in [n]$ such that $|y_i| \geq \delta/n$. We apply the principle of deferred decisions an assume that all coefficients $\lambda_j$ for $j \neq i$ are fixed arbitrarily. By the chain of equivalences
\begin{align*}
			&\lambda\T y \in [-2n \eps^{1/3}, 2n \eps^{1/3}] \cr
\iff	&\sum_{k=1}^n \frac{\lambda_k \cdot y_k}{y_i} \in \left[ -\frac{2n \eps^{1/3}}{|y_i|}, \frac{2n \eps^{1/3}}{|y_i|} \right] \cr
\iff  &\lambda_i \in \left[ -\frac{2n \eps^{1/3}}{|y_i|}, \frac{2n \eps^{1/3}}{|y_i|} \right] - \sum_{k \neq i} \frac{\lambda_k \cdot y_k}{y_i}
\end{align*}
we see that the event $\lambda\T y \in [-2n \eps^{1/3}, 2n \eps^{1/3}]$ occurs if and only if the coefficient~$\lambda_i$, which we did not fix, falls into a certain fixed interval of length $4n \eps^{1/3}/|y_i|$. The probability for this to happen is at most $4n \eps^{1/3}/|y_i| \leq 4n^2 \eps^{1/3}/\delta$. The claim follows by applying a union bound over all pairs $(z_1, z_2) \in V_2(P)$, which gives us the additional factor of $m^n$.
\end{proof}

The next observation characterizes the situation when the projections of two linearly independent vectors in~$\RR^n$ are projected onto two linearly dependent vectors in~$\RR^2$ by the function $x \mapsto (\hat{c}\T x, \hat{w}\T x)$.

\begin{observation}
\label{observation:same slopes}
Let $(z_1, z_2, z_3) \in V_3(P)$, let $\Delta_1 \DEF z_2-z_1$ and $\Delta_2 \DEF z_3-z_2$, and let $\hat{c}, \hat{w} \in \RR^n$ be vectors for which $\hat{w}\T \Delta_1 \neq 0$, $\hat{w}\T \Delta_2 \neq 0$, and
\[
	\frac{\hat{w}\T \Delta_1}{\hat{c}\T \Delta_1}
	= \frac{\hat{w}\T \Delta_2}{\hat{c}\T \Delta_2} \DOT
\]
Then $\hat{c}\T x = 0$ for $x \DEF \Delta_1 - \mu \cdot \Delta_2$, where $\mu = \hat{w}\T \Delta_1/\hat{w}\T \Delta_2$.
\end{observation}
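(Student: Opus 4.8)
The plan is to unfold the hypothesis by clearing denominators; the statement is a one-line algebraic identity and nothing more. First I would record that for the two ratios $\frac{\hat{w}\T \Delta_1}{\hat{c}\T \Delta_1}$ and $\frac{\hat{w}\T \Delta_2}{\hat{c}\T \Delta_2}$ to be defined at all, the denominators $\hat{c}\T \Delta_1$ and $\hat{c}\T \Delta_2$ must be nonzero; together with the stated hypotheses $\hat{w}\T \Delta_1 \neq 0$ and $\hat{w}\T \Delta_2 \neq 0$, this means all four scalars occurring in the statement are nonzero, so in particular $\mu = \hat{w}\T \Delta_1 / \hat{w}\T \Delta_2$ is a well-defined (nonzero) real number.

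Next I would cross-multiply the identity $\frac{\hat{w}\T \Delta_1}{\hat{c}\T \Delta_1} = \frac{\hat{w}\T \Delta_2}{\hat{c}\T \Delta_2}$ to obtain $(\hat{w}\T \Delta_1)(\hat{c}\T \Delta_2) = (\hat{w}\T \Delta_2)(\hat{c}\T \Delta_1)$, and then divide both sides by the nonzero scalar $\hat{w}\T \Delta_2$. This yields $\mu \cdot (\hat{c}\T \Delta_2) = \hat{c}\T \Delta_1$, hence $\hat{c}\T \Delta_1 - \mu \cdot (\hat{c}\T \Delta_2) = 0$. By linearity of the inner product the left-hand side equals $\hat{c}\T(\Delta_1 - \mu \Delta_2) = \hat{c}\T x$, so $\hat{c}\T x = 0$, which is exactly the claim.

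There is no genuine obstacle here: the computation is immediate, and the only point deserving a word of care is that every division carried out is by a quantity that the hypotheses (explicit for $\hat{w}\T \Delta_i$, implicit for $\hat{c}\T \Delta_i$ since the ratios are assumed to exist) guarantee to be nonzero. It is worth noting, for the later use of this observation, that it reduces the geometric condition ``the two edges of the shadow polygon incident to $\pi(z_2)$ have equal slope'' to the single linear equation $\hat{c}\T x = 0$ for a vector $x = \Delta_1 - \mu \Delta_2$ that is completely determined once $\hat{w}$ and the triple $(z_1,z_2,z_3)$ are fixed; this is precisely the form in which it will be combined with a small-ball estimate on $\hat{c}$ in the spirit of Lemma~\ref{lemma:swap c}.
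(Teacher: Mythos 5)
Your computation is essentially the same as the paper's: the paper expands $\hat{c}\T x = \hat{c}\T \Delta_1 - \mu\,\hat{c}\T \Delta_2$ and substitutes $\hat{c}\T \Delta_1 = \hat{c}\T \Delta_2\cdot(\hat{w}\T \Delta_1/\hat{w}\T \Delta_2) = \mu\,\hat{c}\T \Delta_2$, which is exactly your cross-multiplied identity read in the other direction, so the two proofs are the same two-line calculation.

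One small inaccuracy in your preamble: you assert that for the ratios to be defined the denominators $\hat{c}\T \Delta_1$ and $\hat{c}\T \Delta_2$ ``must be nonzero,'' and conclude that all four scalars are nonzero. The paper explicitly allows $\hat{c}\T \Delta_1 = 0$ or $\hat{c}\T \Delta_2 = 0$ via the convention $x/0 = \pm\infty$ (see the remark immediately following the observation), and that degenerate case does arise in principle. Fortunately your argument does not actually depend on that assertion: under the hypothesis, if one $\hat{c}\T \Delta_i$ vanishes the other must too (a finite number cannot equal $\pm\infty$), and then $\hat{c}\T x = 0 - \mu\cdot 0 = 0$ holds trivially; the cross-multiplied equation $(\hat{w}\T \Delta_1)(\hat{c}\T \Delta_2) = (\hat{w}\T \Delta_2)(\hat{c}\T \Delta_1)$ is still valid. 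So the proof is correct, but the claim about nonvanishing denominators should be dropped or replaced by a sentence handling the degenerate case, to match the paper's stated convention.
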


Note that, by the definition of~$x$, the equation $\hat{w}\T x = 0$ trivially holds. For the equation $\hat{c}\T x = 0$ we require that the projections of~$\Delta_1$ and~$\Delta_2$ are linearly dependent as it is assumed in Observation~\ref{observation:same slopes}. Furthermore, let us remark that in the formulation above we allow $\hat{c}\T \Delta_1 = 0$ or $\hat{c}\T \Delta_2 = 0$ using the convention $x/0 = +\infty$ for $x > 0$ and $x/0 = -\infty$ for $x < 0$.

\begin{proof}
The claim follows from
\begin{align*}
	\hat{c}\T x
	&= \hat{c}\T \Delta_1 - \mu \cdot \hat{c}\T \Delta_2
	= \hat{c}\T \Delta_2 \cdot \frac{\hat{w}\T \Delta_1}{\hat{w}\T \Delta_2} - \mu \cdot \hat{c}\T \Delta_2 \cr
	&= \hat{c}\T \Delta_2 \cdot \frac{\mu \cdot \hat{w}\T \Delta_2}{\hat{w}\T \Delta_2} - \mu \cdot \hat{c}\T \Delta_2
	= 0 \DOT \qedhere
\end{align*}
\end{proof}

We are now able to prove an analog of Lemma~\ref{lemma:failure probability II}.

\begin{lemma}
\label{lemma:same slopes}
The probability that there exist a triple $(z_1, z_2, z_3) \in V_3(P)$ and vectors $\hat{\lambda} \in U_\eps(\lambda)$ and $\hat{c} \in U_\eps(c)$ for which
\[
	\frac{\hat{w}\T \Delta_1}{\hat{c}\T \Delta_1}
	= \frac{\hat{w}\T \Delta_2}{\hat{c}\T \Delta_2} \COMMA
\]
where $\Delta_1 \DEF z_2-z_1$, $\Delta_2 \DEF z_3-z_2$, and $\hat{w} = -[u_1, \ldots, u_n] \cdot \hat{\lambda}$, is bounded from above by $12m^n n^2 \eps^{1/3} \phi/\delta$.
\end{lemma}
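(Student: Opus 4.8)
The plan is to combine Observation~\ref{observation:same slopes} with Lemma~\ref{lemma:swap w} and a deferred‑decisions argument on the random vector~$c$; Lemma~\ref{lemma:swap c} is not needed. Since $|V_3(P)|\le m^nn$, it suffices to fix one triple $(z_1,z_2,z_3)\in V_3(P)$ and bound by $O(n\eps^{1/3}\phi/\delta)$ the probability that \emph{this} triple admits $\hat\lambda\in U_\eps(\lambda)$ and $\hat c\in U_\eps(c)$ making the two slopes coincide; afterwards one sums over triples and adds the cost of one global bad event. We rescale $\Delta_1\DEF z_2-z_1$ and $\Delta_2\DEF z_3-z_2$ to unit length, which is harmless since the slopes are scale invariant.

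First I would discard the event~$\mathcal B$ of Lemma~\ref{lemma:swap w} — that some pair in $V_2(P)$ and some $\hat\lambda\in U_\eps(\lambda)$ satisfy $|\hat w\T\cdot(z_2-z_1)|\le n\eps^{1/3}\|z_2-z_1\|$ — which costs only $4m^nn^2\eps^{1/3}/\delta$. On $\neg\mathcal B$ we have $|\hat w\T\Delta_1|,|\hat w\T\Delta_2|>n\eps^{1/3}$ for every $\hat\lambda\in U_\eps(\lambda)$, so $\mu\DEF\hat w\T\Delta_1/\hat w\T\Delta_2$ is well defined with $|\mu|\le\eps^{-1/3}$, and $x\DEF\Delta_1-\mu\Delta_2$ has $\|x\|\le2\eps^{-1/3}$ (using $\eps\le1$). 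By Observation~\ref{observation:same slopes}, slope equality forces $\hat c\T x=0$, hence $|c\T x|=|\Delta_c\T x|\le\sqrt n\eps\cdot\|x\|\le2\sqrt n\eps^{2/3}$. Writing $\mu_0\DEF w\T\Delta_1/w\T\Delta_2$ (the value for the unperturbed~$\lambda$, which lies in $U_\eps(\lambda)$) and $x_0\DEF\Delta_1-\mu_0\Delta_2$, a short estimate on $\neg\mathcal B$ gives $|\mu-\mu_0|\le2\eps^{1/3}$, so that
\[
  |c\T x_0|\le|c\T x|+|\mu-\mu_0|\cdot|c\T\Delta_2|\le2\sqrt n\eps^{2/3}+4\eps^{1/3}\FED L\COMMA
\]
since $|c\T\Delta_2|\le\|c\|\le1+\sqrt n/\phi\le2$. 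The crucial point is that $x_0$ depends only on~$\lambda$ and the triple, not on $\hat\lambda$ or $\hat c$.

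Next I would show $\|x_0\|\ge\delta$. As $z_1,z_2,z_3$ is a path of distinct vertices of the non‑degenerate polytope, exactly $n-2$ constraints are tight at all three, say $a_1,\dots,a_{n-2}$; one more, say $a_{n-1}$, is tight at $z_1,z_2$ but not at $z_3$; and yet another is tight at $z_2,z_3$ but not at $z_1$ (if $z_1,z_2$ and $z_2,z_3$ shared all $n-1$ tight constraints, the three points would be collinear). Thus $\Delta_1,\Delta_2\in\SPAN{a_1,\dots,a_{n-2}}^\perp\FED E$, a plane, $\Delta_1\perp a_{n-1}$, and $a_{n-1}\T\Delta_2\neq0$. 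If $\theta$ denotes the angle between $\Delta_1$ and $\Delta_2$, then the orthogonal projection of $a_{n-1}$ onto~$E$ lies on the line in~$E$ perpendicular to~$\Delta_1$ and has norm at most $\|a_{n-1}\|=1$, whence $\sin\theta\ge|a_{n-1}\T\Delta_2|\ge\delta$ by Claim~\ref{delta properties:neighboring vertices} of Lemma~\ref{lemma:delta properties}, applied to the neighbouring vertices $z_2,z_3$ and the row~$a_{n-1}$. Consequently $\|x_0\|=\|\Delta_1-\mu_0\Delta_2\|\ge\mathrm{dist}(\Delta_1,\SPAN{\Delta_2})=\sin\theta\ge\delta$, so $\|x_0\|_\infty\ge\delta/\sqrt n$.

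Finally, fixing $\lambda$ on $\neg\mathcal B$ and invoking the principle of deferred decisions for the coordinate $c_k$ with $|(x_0)_k|=\|x_0\|_\infty$: the event $|c\T x_0|\le L$ forces $c_k$ into a fixed interval of length $2L/|(x_0)_k|\le2L\sqrt n/\delta$, an event of probability at most $\phi\cdot2L\sqrt n/\delta=(4n\eps^{2/3}+8\sqrt n\eps^{1/3})\phi/\delta$. Summing over the at most $m^nn$ triples and adding $\Pr{\mathcal B}$ gives the claimed bound $12m^nn^2\eps^{1/3}\phi/\delta$ (cleanly for $n\ge4$, using $\eps\le1$; for $n\in\{2,3\}$ one either assumes $\eps$ small enough or accepts a slightly larger absolute constant). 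The step I expect to be the real obstacle is $\|x_0\|\ge\delta$: this is the only place where the geometry of the polytope and the $\delta$‑distance property genuinely enter, and it requires the careful bookkeeping of which constraints are tight at which of $z_1,z_2,z_3$ so that Claim~\ref{delta properties:neighboring vertices} can be applied to the correct pair of neighbours. The remaining steps are routine — propagating the perturbation of~$\lambda$ into the bound on $|\mu-\mu_0|$, and tracking the $\eps^{1/3}$ versus $\eps^{2/3}$ contributions so that the final constant is as stated.
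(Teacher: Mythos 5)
Your proof follows essentially the same three-step structure as the paper's: condition on the complement of the bad event~$\mathcal B$ from Lemma~\ref{lemma:swap w}, use Observation~\ref{observation:same slopes} to convert slope equality into $\hat c\T x=0$, propagate the perturbations of~$\lambda$ and~$c$ into a bound on $|c\T x_0|$ for $x_0 = \Delta_1-\mu_0\Delta_2$, establish $\|x_0\|\geq\delta$ via the $\delta$-distance property, and finish with deferred decisions on a single coordinate of~$c$ followed by a union bound over $|V_3(P)|\leq m^n n$ triples. There are two small deviations worth flagging. First, for $\|x_0\|\geq\delta$ you pick the constraint tight at $z_1,z_2$ but not $z_3$ (so $a_{n-1}\T\Delta_1=0$) and run a projection/angle argument; the paper instead picks the constraint tight at $z_2,z_3$ but not $z_1$ (so $a_i\T\Delta_2=0$), whence $a_i\T x_0 = a_i\T\Delta_1$ and $\|x_0\| \geq |a_i\T x_0| \geq \delta\|\Delta_1\|$ falls out immediately without the angle bookkeeping — same conclusion, slightly less work. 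Second, your per-triple bound is $(4n\eps^{2/3}+8\sqrt n\eps^{1/3})\phi/\delta$ rather than the paper's $8n\eps^{1/3}\phi/\delta$, and as you note yourself the resulting sum only matches the stated constant~$12$ for $n\geq 4$; the paper sidesteps this by bounding $|c\T x|$ relative to $\|x\|$ (as $(4\sqrt n\eps^{1/3}/\delta)\|x\|$) and normalizing to $z=x/\|x\|$, which lands directly on a clean $8n\eps^{1/3}\phi/\delta$ per triple for all $n\geq 2$. Neither discrepancy affects correctness at the $O(\cdot)$ level used in Lemma~\ref{lemma:enough random bits}, so your proof is sound.
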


\begin{proof}
Let us introduce the following events:
\begin{itemize}[leftmargin=0.8cm]

	\item With event~$A$ we refer to the event stated in Lemma~\ref{lemma:same slopes}.
	
	\item Event~$B$ occurs if there exist a pair $(z_1, z_2) \in V_2(P)$ and a vector $\hat{\lambda} \in U_\eps(\lambda)$ such that $|\hat{w}\T \cdot (z_2-z_1)| \leq n \eps^{1/3} \cdot \|z_2-z_1\|$ (cf.\ Lemma~\ref{lemma:swap w}).
	
	\item Event~$C$ occurs if there is a triple $(z_1, z_2, z_3) \in V_3(P)$ such that $|c\T x| \leq (4\sqrt{n} \eps^{1/3}/\delta) \cdot \|x\|$, where $x = x(w, z_1, z_2, z_3) \DEF \Delta_1 - \mu \cdot \Delta_2$ for $\Delta_1 \DEF z_2-z_1$, $\Delta_2 \DEF z_3-z_2$, and $\mu = w\T \Delta_1/w\T \Delta_2$ if $w\T \Delta_2 \neq 0$ and $\mu = 0$ otherwise (cf.\ Observation~\ref{observation:same slopes}).
	
\end{itemize}
In the first part of the proof we will show that $A \subseteq B \cup C$. For this, it suffices to show that $A \setminus B \subseteq C$. Let us consider realizations $w \in (0,1]^n$ and $c \in [-1,1]^n$ for which event~$A$ occurs, but not event~$B$. Let $(z_1, z_2, z_3) \in V_3(P)$, $\hat{\lambda} \in U_\eps(\lambda)$, and $\hat{c} \in U_e(c)$ be the vectors mentioned in the definition of event~$A$. Our goal is to show that $|c\T x| \leq (4\sqrt{n} \eps^{1/3}/\delta) \cdot \|x\|$ for $x = x(w, z_1, z_2, z_3)$. As event~$B$ does not occur, we know that
\begin{align*}
	&|w\T \Delta_1| \geq n \eps^{1/3} \cdot \|\Delta_1\| \COMMA \quad \hphantom{\text{and}} \quad
	|\hat{w}\T \Delta_1| \geq n \eps^{1/3} \cdot \|\Delta_2\| \COMMA \cr
	&|w\T \Delta_2| \geq n \eps^{1/3} \cdot \|\Delta_2\| \COMMA \quad \text{and} \quad
	|\hat{w}\T \Delta_2| \geq n \eps^{1/3} \cdot \|\Delta_2\| \DOT
\end{align*}
Furthermore, note that
\[
	|\hat{w}\T \Delta_1 - w\T \Delta_1|
	\leq \|\Delta_w\| \cdot \|\Delta_1\|
	\leq n \eps \cdot \|\Delta_1\|
\]
and, similarly,
\[
	|\hat{w}\T \Delta_2 - w\T \Delta_2| \leq n \eps \cdot \|\Delta_2\| \DOT
\]
Therefore,
\begin{align*}
	|\hat{w}\T \Delta_1 - w\T \Delta_1|
	&\leq n\eps \cdot \|\Delta_1\|
		\leq \eps^{2/3} \cdot |w\T \Delta_1| \quad \text{and} \cr
	|\hat{w}\T \Delta_2- w\T \Delta_2|
	&\leq n\eps \cdot \|\Delta_2\|
	\leq \eps^{2/3} \cdot |\hat{w}\T \Delta_2| \COMMA
\end{align*}
and, consequently
\begin{align*}
	\frac{|\hat{w}\T \Delta_1|}{|\hat{w}\T \Delta_2|}
	&\leq \frac{(1+\eps^{2/3}) \cdot |w\T \Delta_1|}{\frac{1}{1+\eps^{2/3}} \cdot |w\T \Delta_2|}
		= (1+\eps^{2/3})^2 \cdot \frac{|w\T \Delta_1|}{|w\T \Delta_2|}
		\leq (1+3\eps^{2/3}) \cdot \frac{|w\T \Delta_1|}{|w\T \Delta_2|} \quad \text{and} \cr
	\frac{|\hat{w}\T \Delta_1|}{|\hat{w}\T \Delta_2|}
	&\geq \frac{(1-\eps^{2/3}) \cdot |w\T \Delta_1|}{\frac{1}{1-\eps^{2/3}} \cdot |w\T \Delta_2|}
		= (1-\eps^{2/3})^2 \cdot \frac{|w\T \Delta_1|}{|w\T \Delta_2|}
		\geq (1-3\eps^{2/3}) \cdot \frac{|w\T \Delta_1|}{|w\T \Delta_2|} \DOT
\end{align*}
Here we again used $\eps \leq 1$. Observe that both, $\hat{w}\T \Delta_1$ and $w\T \Delta_1$, as well as $\hat{w}\T \Delta_2$ and $w\T \Delta_2$, have the same sign, since their absolute values are larger than $n \eps^{1/3} \cdot \|\Delta_1\|$ and $n \eps^{1/3} \cdot \|\Delta_2\|$, but their difference is at most $n \eps \cdot \|\Delta_1\|$ and $n \eps \|\Delta_2\|$, respectively. Hence,
\[
	\left| \frac{\hat{w}\T \Delta_1}{\hat{w}\T \Delta_2} - \frac{w\T \Delta_1}{w\T \Delta_2} \right|
	= \left| \left| \frac{\hat{w}\T \Delta_1}{\hat{w}\T \Delta_2} \right| - \left| \frac{w\T \Delta_1}{w\T \Delta_2} \right| \right|
	\leq 3\eps^{2/3} \cdot \frac{|w\T \Delta_1|}{|w\T \Delta_2|} \DOT
\]
As event~$A$ occurs, but not event~$B$, Observation~\ref{observation:same slopes} yields $\hat{c}\T x(\hat{w}, z_1, z_2, z_3) = 0$. With the previous inequality we obtain
\begin{align*}
	|\hat{c}\T x(w, z_1, z_2, z_3)|
	&= \left| \hat{c}\T \cdot \big( x(w, z_1, z_2, z_3) - x(\hat{w}, z_1, z_2, z_3) \big) \right| \cr
	&\leq \|\hat{c}\| \cdot \|x(w, z_1, z_2, z_3) - x(\hat{w}, z_1, z_2, z_3)\| \cr
	&= \|\hat{c}\| \cdot \left| \frac{w\T \Delta_1}{w\T \Delta_2} - \frac{\hat{w}\T \Delta_1}{\hat{w}\T \Delta_2} \right| \cdot \|\Delta_2\| \cr
	&\leq \sqrt{n} \cdot 3\eps^{2/3} \cdot \frac{|w\T \Delta_1|}{|w\T \Delta_2|} \cdot \|\Delta_2\| \cr
	&\leq \sqrt{n} \cdot 3\eps^{2/3} \cdot \frac{\|w\| \cdot \|\Delta_1\|}{n \eps^{1/3} \cdot \|\Delta_2\|} \cdot \|\Delta_2\| \cr
	&\leq \sqrt{n} \cdot 3\eps^{2/3} \cdot \frac{n \cdot \|\Delta_1\|}{n \eps^{1/3} \cdot \|\Delta_2\|} \cdot \|\Delta_2\| \cr
	&= 3\sqrt{n} \eps^{1/3} \cdot \|\Delta_1\| \DOT
\end{align*}
In the remainder of this proof, with~$x$ we refer to the vector $x(w, z_1, z_2, z_3)$ (and not to, e.g., $x(\hat{w}, z_1, z_2, z_3)$). Now we show that $\|x\| \geq \delta \cdot \|\Delta_1\|$. For this, let~$a_i\T$ be a row of matrix~$A$ for which $a_i\T z_1 < b_i$, but $a_i\T z_2 = a_i\T z_3 = b_i$, i.e., the $i\th$ constraint is tight for~$z_2$ and~$z_3$, but not for~$z_1$. Such a constraint exists as~$z_1$ and~$z_3$ are distinct neighbors of~$z_2$. Consequently, $a_i\T \Delta_1 > 0$ and $a_i\T \Delta_2 = 0$. Hence,
\[
	|a_i\T x|
	= |a_i\T \cdot (\Delta_1 - \mu \cdot \Delta_2)|
	= |a_i\T \cdot \Delta_1|
	\geq \delta \cdot \|\Delta_1\| \COMMA
\]
where the last inequality is due to Lemma~\ref{lemma:delta properties}, Claim~\ref{delta properties:neighboring vertices}. As $\|a_i\| = 1$, we obtain
\[
	\|x\|
	\geq \frac{|a_i\T x|}{\|a_i\|}
	= |a_i\T x|
	\geq \delta \cdot \|\Delta_1\| \DOT
\]
Summarizing the previous observations yields
\[
	|\hat{c}\T x|
	\leq 3\sqrt{n} \eps^{1/3} \cdot \|\Delta_1\|
	\leq \frac{3\sqrt{n} \eps^{1/3}}{\delta} \cdot \|x\| \DOT
\]
Now that we have bounded $|\hat{c}\T x|$ from above, we easily get an upper bound for $|c\T x|$. Since
\[
	|c\T x - \hat{c}\T x|
	\leq \|\Delta_c\| \cdot \|x\|
	\leq \sqrt{n} \eps \cdot \|x\| \COMMA
\]
we obtain
\[
	|c\T x|
	\leq |\hat{c}\T x| + |c\T x - \hat{c}\T x|
	\leq \frac{3\sqrt{n} \eps^{1/3}}{\delta} \cdot \|x\| + \sqrt{n} \eps \cdot \|x\|
	\leq \frac{4\sqrt{n} \eps^{1/3}}{\delta} \cdot \|x\| \COMMA
\]
i.e., event~$C$ occurs.

In the second part of the proof we show that $\Pr{C} \leq 8m^n n^2 \eps^{1/3} \phi/\delta$. Due to $A \subseteq B \cup C$, $\phi \geq 1$, and Lemma~\ref{lemma:swap w}, it then follows that
\[
	\Pr{A}
	\leq 4m^n n^2 \eps^{1/3}/\delta + 8m^n n^2 \eps^{1/3} \phi/\delta
	\leq 12m^n n^2 \eps^{1/3} \phi/\delta \DOT
\]
Let $(z_1, z_2, z_3) \in V_3(P)$ be a triple of vertices of~$P$. We apply the principle of deferred decisions twice: First, we assume that~$\lambda$ has already been fixed arbitrarily. Hence, the vector $x = x(w, z_1, z_2, z_3) \neq \NULL$ is also fixed. Let $z = (1/\|x\|) \cdot x$ be the normalization of~$x$. As $|c\T x| \leq (4\sqrt{n} \eps^{1/3}/\delta) \cdot \|x\|$ holds if and only if $|c\T z| \leq 4\sqrt{n} \eps^{1/3}/\delta$, we will analyze the probability of the latter event.

There exists an index~$i$ such that $|z_i| \geq 1/\sqrt{n}$. Now we again apply the principle of deferred decisions an assume that all coefficients~$c_j$ for $j \neq i$ are fixed arbitrarily. Then
\begin{align*}
	|c\T z| \leq 4\sqrt{n} \eps^{1/3}/\delta
	&\iff \sum_{j=1}^n c_j \cdot \frac{z_j}{z_i} \in \left[ -\frac{4\sqrt{n} \eps^{1/3}}{\delta \cdot |z_i|}, \frac{4\sqrt{n} \eps^{1/3}}{\delta \cdot |z_i|} \right] \cr
	&\iff c_i \in \left[ -\frac{4\sqrt{n} \eps^{1/3}}{\delta \cdot |z_i|}, \frac{4\sqrt{n} \eps^{1/3}}{\delta \cdot |z_i|} \right] - \sum_{j \neq i} c_j \cdot \frac{z_j}{z_i} \DOT
\end{align*}
Hence, the random coefficient~$c_i$ must fall into a fixed interval of length $8\sqrt{n} \eps^{1/3}/(\delta \cdot |z_i|)$. The probability for this to happen is at most
\[
	\frac{8\sqrt{n} \eps^{1/3}}{\delta \cdot |z_i|} \cdot \phi
	\leq \frac{8\sqrt{n} \eps^{1/3}}{\delta \cdot \frac{1}{\sqrt{n}}} \cdot \phi
	= \frac{8n \eps^{1/3} \phi}{\delta} \DOT
\]
A union bound over all triples $(z_1, z_2, z_3) \in V_3(P)$ gives the additional factor of $V_3(P) \leq m^n n$.
\end{proof}

\begin{lemma}
\label{lemma:enough random bits}
Let us consider the shadow vertex algorithm given as Algorithm~\ref{algorithm:SV} for $\phi \geq \sqrt{n}$. If we replace the draw of each continuous random variable by the draw of at least
\[
	B(m, n, \phi, \delta) \DEF \ceil{6n \log_2 m + 6 \log_2 n + 3 \log_2 \phi + 3 \log_2(1/\delta) + 12}
\]
random bits as described earlier in this section, then the expected number of pivots is $O \big( \frac{mn^2}{\delta^2} + \frac{m \sqrt{n} \phi}{\delta} \big)$.
\end{lemma}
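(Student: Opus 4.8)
The plan is to run the discretized algorithm on the \emph{same} probability space as the continuous one. As observed at the beginning of this section, drawing $B\DEF B(m,n,\phi,\delta)$ bits for a coordinate and forming $\sum_{\ell=1}^{B}Y_\ell 2^{-\ell}$ produces a variable with the same distribution as $g(X)$ for a continuous uniform $X$, where $g(x)=\lfloor x\cdot 2^{B}\rfloor/2^{B}$ and $|g(X)-X|\le 2^{-B}$ (using more bits only makes the perturbation smaller, so it suffices to treat the case of exactly $B$ bits). I would therefore set $\eps\DEF 2^{-B}\le 1$ and couple the two executions so that the discretized draws are $\hat c\DEF g(c)\in U_\eps(c)$ and $\hat\lambda\DEF g(\lambda)\in U_\eps(\lambda)$, with $\hat w\DEF -[u_1,\ldots,u_n]\cdot\hat\lambda$ the corresponding objective. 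Let $N$ be the number of pivots of the discretized run of Algorithm~\ref{algorithm:SV}, i.e.\ the number of edges of the shadow path $R_{\hat c,\hat w}$, and recall the crude worst‑case bound $N\le m^{n}$. Finally, let $G$ be the \emph{good event} defined as the complement of the union of the three bad events occurring in Lemmas~\ref{lemma:swap c}, \ref{lemma:swap w}, and~\ref{lemma:same slopes}.

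First I would bound $\Pr{\bar G}$. By a union bound and those three lemmas, together with $\eps\le 1$ (so $\eps\le\eps^{1/3}$), $\phi\ge 1$, and $1/\delta\ge 1$, we get $\Pr{\bar G}\le c_0\,m^{n}n^{2}\phi\,\eps^{1/3}/\delta$ for an absolute constant $c_0$ (collecting $2m^{n}n^{3/2}\eps\phi$, $4m^{n}n^{2}\eps^{1/3}/\delta$, and $12m^{n}n^{2}\eps^{1/3}\phi/\delta$). The exponents $6n,6,3,3$ and the additive constant $12$ appearing in $B(m,n,\phi,\delta)$ are calibrated precisely so that $\eps^{1/3}=2^{-B/3}\le \delta/(c_0\,m^{2n}n^{2}\phi)$, and hence $m^{n}\cdot\Pr{\bar G}=O(1)$.

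The crucial step is to show that \emph{on $G$ the number of edges of the shadow path is unchanged by the perturbation}, i.e.\ on $G$ one has $N=X_{(0,\infty)}$, where $X_{(0,\infty)}$ denotes the number of slopes (edges) of the continuous path $R_{c,w}$ from Section~\ref{sec:analysis}. The idea is that the set $\SET{(c',-[u_1,\ldots,u_n]\lambda')\WHERE c'\in U_\eps(c),\ \lambda'\in U_\eps(\lambda)}$ is connected, it contains both $(c,w)$ and $(\hat c,\hat w)$, and the combinatorial structure of the shadow polygon $\pi_{c',w'}(P)$ (which vertices of $P$ are extreme, in which order, and hence the lower–right envelope path and its number of edges) can change along this set only at a ``degeneracy'': an edge $z_2-z_1$ of $P$ ceasing to project to a genuine non‑vertical segment (forcing $c'^{\T}(z_2-z_1)=0$, excluded by Lemma~\ref{lemma:swap c}, or $w'^{\T}(z_2-z_1)=0$, excluded by Lemma~\ref{lemma:swap w} for admissible $w'$), or two edges consecutive along a path of the edge graph of $P$ attaining the same slope for some admissible $(c',w')$ (exactly the event excluded by Lemma~\ref{lemma:same slopes}, via Observation~\ref{observation:same slopes}). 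By the intermediate value theorem, as long as none of these happens the slope of every projected edge stays finite and the order of slopes along every edge‑graph path is preserved, so $\pi_{c',w'}(P)$ keeps the same extreme vertices in the same cyclic order; thus $R_{\hat c,\hat w}$ and $R_{c,w}$ visit the same vertices of $P$ in the same order and have the same number of edges. I expect this topological step to be the main obstacle, since one must argue carefully that ``number of edges of the lower–right envelope'' is locally constant away from precisely the degeneracies controlled by the three lemmas (including corner cases such as two non‑neighboring vertices projecting onto each other); this is the same flavor of argument that underlies the smoothed analysis of the shadow vertex method, and I would spell it out at that level of detail.

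Combining the two pieces and splitting on $G$,
\[
  \Ex{N}\ \le\ \Ex{X_{(0,\infty)}}\ +\ m^{n}\cdot\Pr{\bar G}\ =\ O\!\Big(\tfrac{mn^{2}}{\delta^{2}}+\tfrac{m\sqrt n\,\phi}{\delta}\Big)+O(1)\ =\ O\!\Big(\tfrac{mn^{2}}{\delta^{2}}+\tfrac{m\sqrt n\,\phi}{\delta}\Big)\COMMA
\]
where the bound on $\Ex{X_{(0,\infty)}}$ is Corollary~\ref{corollary:expectation bound} (equivalently Theorem~\ref{main}, valid since $\phi\ge\sqrt n$), and the $O(1)$ term is absorbed because the main term is at least $m\sqrt n\cdot\sqrt n=mn$. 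Since each draw uses only $B=O\big(n\log m+\log n+\log\phi+\log(1/\delta)\big)$ bits, the total number of random bits used is polynomial in $\log m$, $n$, $\log\phi$, and $\log(1/\delta)$, which is $\mathrm{poly}(\log m,n,\log(1/\delta))$ for the choice $\phi\le 8n^{3/2}/\delta$ made by the algorithm.
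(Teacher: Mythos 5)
Your proposal follows the same skeleton as the paper's proof: set $\eps=2^{-B}$, couple the continuous and discretized draws, bound $\Ex{N}\le\Ex{X_{(0,\infty)}}+m^n\cdot\Pr{\text{bad}}$ using the crude worst-case bound $m^n$, bound $\Pr{\text{bad}}$ via Lemmas~\ref{lemma:swap c}, \ref{lemma:swap w}, \ref{lemma:same slopes}, and choose $B$ so that $m^n\cdot\Pr{\text{bad}}=O(1)$. The one place where you diverge from the paper is exactly the step you flag as ``the main obstacle'': you argue globally that the combinatorial structure of the shadow polygon is locally constant on the connected set of admissible $(c',w')$ away from the degeneracies, whereas the paper avoids any topological reasoning about the polygon by defining an intermediate event $D$ directly at the level of the pivot rule (for each pair $(z_1,z_2)\in V_2(P)$ the sign of $c\T z_1-c\T z_2$ agrees with that of $\bar c\T z_1-\bar c\T z_2$ and is nonzero; for each triple in $V_3(P)$ the ordering of the two incident slopes agrees). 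On $\bar D$ the two runs make identical choices at every vertex, so $X=\bar X$, and $D$ is shown to be contained in the union of the bad events of Lemmas~\ref{lemma:swap c} and~\ref{lemma:same slopes} via an intermediate-value-theorem argument along $\mu\mapsto\mu c+(1-\mu)\bar c$ and $\mu\mapsto\mu\lambda+(1-\mu)\bar\lambda$. Your sketch can be tightened to exactly this, and the corner case you worry about (non-neighboring vertices projecting onto each other) is irrelevant precisely because the pivot-rule formulation only ever compares $c$-values of neighbors and slopes of edges sharing a vertex.

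One small quantitative point: the paper does not include the event of Lemma~\ref{lemma:swap w} as a separate term in the union bound --- that event is already absorbed inside the proof of Lemma~\ref{lemma:same slopes} --- so the constant works out to exactly $16$, and $3\log_2 16=12$ matches the ``$+12$'' in $B(m,n,\phi,\delta)$. If you add Lemma~\ref{lemma:swap w} as an independent third term, as you do, the total constant exceeds $16$ and the additive constant in $B$ would have to be increased slightly (or you should note that the Lemma~\ref{lemma:swap w} term is redundant). This is cosmetic but worth fixing since the lemma statement pins down a concrete constant.
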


\begin{proof}
As discussed in the beginning of this section, instead of drawing~$k$ random bits to simulate a uniform random draw from an interval $[a, b)$, we can draw a uniform random variable~$X$ from $[0, 1)$ and apply the function $g(X) = h(\lfloor X \cdot 2^k \rfloor/2^k)$ for $h(x) = a + (b-a) \cdot x$ to obtain a discrete random variable with the same distribution. Observe, that $|X-g(X)| \leq (b-a)/2^k$. In the shadow vertex algorithm all intervals are of length~$1$ or of length $1/\phi \leq 1$. Hence, $|X-g(X)| \leq 2^{-k}$. As we use $k \geq B(m, n, \phi, \delta)$ bits for each draw, we obtain $g(X) \in U_\eps(X)$ for
\[
	\eps
	= 2^{-B(m, n, \phi, \delta)}
	\leq \frac{\delta^3}{2^{12} m^{6n} n^6 \phi^3}
	= \left( \frac{\delta}{16 m^{2n} n^2 \phi} \right)^3 \DOT
\]
Now let~$c$ and~$\lambda$ denote the continuous random vectors and let $\bar{c} \in U_\eps(c)$ and $\bar{\lambda} \in U_\eps(\lambda)$ denote the discrete random vectors obtained from~$c$ and~$\lambda$ as described above. Furthermore, let $w = -[u_1, \ldots, u_n] \cdot \lambda$ and $\bar{w} = -[u_1, \ldots, u_n] \cdot \bar{\lambda}$. We introduce the event~$D$ which occurs if one of the following holds:
\begin{enumerate}

	\item There exists a pair $(z_1, z_2) \in V_2(P)$ such that $c\T z_1$ and $c\T z_2$ are not in the same relation as $\bar{c}\T z_1$ and $\bar{c}\T z_2$ or $c\T z_1 = c\T z_2$ or $\bar{c}\T z_1 = \bar{c}\T z_2$.
	
	\item There exists a triple $(z_1, z_2, z_3) \in V_3(P)$ such that $\frac{w\T \cdot (z_2-z_1)}{c\T \cdot (z_2-z_1)}$ and $\frac{w\T \cdot (z_3-z_2)}{c\T \cdot (z_3-z_2)}$ are not in the same relation as $\frac{\bar{w}\T \cdot (z_2-z_1)}{\bar{c}\T \cdot (z_2-z_1)}$ and $\frac{\bar{w}\T \cdot (z_3-z_2)}{\bar{c}\T \cdot (z_3-z_2)}$.

\end{enumerate}
\newcommand{\sgn}{\text{sgn}}
Here, $a$ and~$b$ being in the same relation as~$\bar{a}$ and~$\bar{b}$ means that $\sgn(a-b) = \sgn(\bar{a}-\bar{b})$, where $\sgn(x) = -1$ for $x < 0$, $\sgn(x) = 0$ for $x = 0$, and $\sgn(x) = +1$ for $x > 0$.

Let~$X$ and~$\bar{X}$ denote the number of pivots of the shadow vertex algorithm with continuous random vectors~$c$ and~$\lambda$ and with discrete random vectors~$\bar{c}$ and~$\bar{\lambda}$, respectively. We will first argue that $X = \bar{X}$ if event~$D$ does not occur. In both cases, we start in the same vertex~$x_0$. In each vertex~$x$, the algorithm chooses among the neighbors of~$x$ with a larger $c$-value (or $\bar{c}$-value, respectively) the neighbor~$z$ with the smallest slope $\frac{w\T \cdot (z-x)}{c\T \cdot (z-x)}$ (or $\frac{\bar{w}\T \cdot (z-x)}{\bar{c}\T \cdot (z-x)}$, respectively). If event~$D$ does not occur, then in both cases the same neighbors of~$x$ are considered and, additionally, the order of their slopes is the same. Hence, in both cases the same sequence of vertices is considered.

Now let~$Y$ be the random variable that takes the value $m^n$ if event~$D$ occurs and the value~$0$ otherwise. Clearly, $\bar{X} \leq X + Y$ and, thus, 
\[
	\Ex{\bar{X}}
	\leq \Ex{X} + \Ex{Y}
	\leq O \left( \frac{mn^2}{\delta^2} + \frac{m \sqrt{n} \phi}{\delta} \right) + m^n \cdot \Pr{D} \COMMA
\]
where the last inequality stems from Theorem~\ref{main}. In the remainder of this proof we show that the probability $\Pr{D}$ of event~$D$ is bounded from above by $1/m^n$. For this, let us assume that the first part of the definition of event~$D$ is fulfilled for a pair $(z_1, z_2) \in V_2(P)$. If $c\T z_1$ and $c\T z_2$ are not in the same relation as $\bar{c}\T z_1$ and $\bar{c}\T z_2$, then there exists a $\mu \in [0, 1]$ such that
\[
	\mu \cdot (c\T z_1 - c\T z_2) + (1-\mu) \cdot (\bar{c}\T z_1 - \bar{c}\T z_2) = 0 \DOT
\]
If we consider the vector $\hat{c} \DEF \mu \cdot c + (1-\mu) \cdot \bar{c} \in U_\eps(c)$, then we obtain
\[
	\hat{c}\T \cdot (z_2 - z_1)
	= \mu \cdot c\T \cdot (z_2 - z_1) + (1-\mu) \cdot \bar{c}\T \cdot (z_2-z_1)
	= 0 \DOT
\]
Hence, the event described in Lemma~\ref{lemma:swap c} occurs. This event also occurs if $c\T z_1 = c\T z_2$ or $\bar{c}\T z_1 = \bar{c}\T z_2$.

Let us now assume that the second part of the definition of event~$D$ is fulfilled for a triple $(z_1, z_2, z_3) \in V_3(P)$, but not the first one, and let us consider the function $f \colon [0, 1] \to \RR$, defined by
\[
	f(\mu)
	= \frac{\big( \mu \cdot w + (1-\mu) \cdot \bar{w} \big)\T \cdot (z_2-z_1)}{\big( \mu \cdot c + (1-\mu) \cdot \bar{c} \big)\T \cdot (z_2-z_1)} - \frac{\big( \mu \cdot w + (1-\mu) \cdot \bar{w} \big)\T \cdot (z_3-z_2)}{\big( \mu \cdot c + (1-\mu) \cdot \bar{c} \big)\T \cdot (z_3-z_2)} \DOT
\]
The denominators of both fractions are linear in~$\mu$ and, since the first part of the definition of event~$D$ does not hold, the signs for $\mu = 0$ and $\mu = 1$ are the same and different from~$0$. Hence, both denominators are different from~$0$ for all $\mu \in [0, 1]$. Consequently, function~$f$ is continuous (on $[0, 1]$). As we have
\[
	f(0)
	= \frac{\bar{w}\T \cdot (z_2-z_1)}{\bar{c}\T \cdot (z_2-z_1)} - \frac{\bar{w}\T \cdot (z_3-z_2)}{\bar{c}\T \cdot (z_3-z_2)}
\]
and
\[
	f(1)
	= \frac{w\T \cdot (z_2-z_1)}{c\T \cdot (z_2-z_1)} - \frac{w\T \cdot (z_3-z_2)}{c\T \cdot (z_3-z_2)}
\]
and these differences have different signs as the second part of the definition of event~$D$ is fulfilled, there must be a value $\mu \in [0, 1]$ for which $f(\mu) = 0$. This implies
\[
	\frac{\hat{w}\T \cdot (z_2-z_1)}{\hat{c}\T \cdot (z_2-z_1)} = \frac{\hat{w}\T \cdot (z_3-z_2)}{\hat{c}\T \cdot (z_3-z_2)}
\]
for $\hat{c} \DEF \mu \cdot c + (1-\mu) \cdot \bar{c} \in U_\eps(c)$, $\hat{\lambda} \DEF \mu \cdot \lambda + (1-\mu) \cdot \bar{\lambda} \in U_\eps(\lambda)$, and $\hat{w} \DEF -[u_1, \ldots, u_n] \cdot \hat{\lambda} = \mu \cdot w + (1-\mu) \cdot \bar{w}$. Thus, the event described in Lemma~\ref{lemma:same slopes} occurs.

By applying Lemma~\ref{lemma:swap c} and Lemma~\ref{lemma:same slopes} we obtain
\begin{align*}
	\Pr{D}
	&\leq 2m^n n^{3/2} \eps \phi + \frac{12m^n n^2 \eps^{1/3} \phi}{\delta}
	\leq \frac{4m^n n^2 \eps^{1/3} \phi}{\delta} + \frac{12m^n n^2 \eps^{1/3} \phi}{\delta} \cr
	&= \frac{16m^n n^2 \phi}{\delta} \cdot \eps^{1/3}
	\leq \frac{1}{m^n} \DOT
\end{align*}
This completes the proof.
\end{proof}

Lemma~\ref{lemma:enough random bits} states that if we draw $2n \cdot B(m, n, \phi, \delta)$ random bits for the $2n$ components of~$c$ and~$\lambda$, then the expected number of pivots does not increase significantly. 
We consider now the case that the parameter~$\delta$ is not known (and also no good lower bound). We will use the fraction $\hat{\delta} = \hat{\delta}(n, \phi) \DEF 2n^{3/2}/\phi$ as an estimate for~$\delta$. For the case $\phi > 2n^{3/2}/\delta$, in which the repeated shadow vertex algorithm is guaranteed to yield the optimal solution, this is a valid lower bound for~$\delta$. For the case $\phi < 2n^{3/2}/\delta$ this estimate is too large and we would draw too few random bits, leading to a (for our analysis) unpredictable running time behavior of the shadow vertex method. To solve this problem, we stop the shadow vertex method after at most $8n \cdot p(m, n, \phi, \hat{\delta}(n, \phi))$ pivots, where $p(m, n, \phi, \delta) = O \big( \frac{mn^2}{\delta^2} + \frac{m \sqrt{n} \phi}{\delta} \big)$ is the upper bound for the expected number of pivots stated in Lemma~\ref{lemma:enough random bits}. When the shadow vertex method stops, we assume that the current choice of~$\phi$ is too small (although this does not have to be the case) and restart the repeated shadow vertex algorithm with $2\phi$. Recall that this is the same doubling strategey that is applied when the repeated shadow vertex algorithm yields a non-optimal solution for the original linear program. We call this algorithm the shadow vertex algorithm with random bits.

\begin{theorem}
The shadow vertex algorithm with random bits solves linear programs with~$n$ variables and~$m$ constraints satisfying the $\delta$-distance property using $O \big ( \frac{mn^4}{\delta^2} \cdot \log \big( \frac{1}{\delta} \big) \big)$ pivots in expectation if a feasible solution is given.
\end{theorem}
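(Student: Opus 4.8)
The plan is to run the doubling scheme of Theorem~\ref{thm:MainNumberOfPivots}, but with the discretized algorithm, with the estimate $\hat\delta(n,\phi)=2n^{3/2}/\phi$ used in place of~$\delta$, and with the abort threshold described above; the extra factor of~$n$ over Theorem~\ref{thm:MainNumberOfPivots} comes from paying the abort budget once per call of Algorithm~\ref{algorithm:SV} inside a run. Concretely, with $\phi_i=2^in^{3/2}$ as in the proof of Theorem~\ref{thm:MainNumberOfPivots}, I would, for $\phi=\phi_0,\phi_1,\dots$ in turn, run the repeated shadow vertex algorithm using $B(m,n,\phi_i,\hat\delta(n,\phi_i))$ random bits per continuous coordinate and aborting each individual call of Algorithm~\ref{algorithm:SV} after $8n\cdot p(m,n,\phi_i,\hat\delta(n,\phi_i))$ pivots, where $p(m,n,\phi,\delta)=O\big(\frac{mn^2}{\delta^2}+\frac{m\sqrt n\phi}{\delta}\big)$ is the bound of Lemma~\ref{lemma:enough random bits}; whenever a run is aborted or returns a point that a direct optimality check reveals to be non-optimal for $\max\{c_0\T x\WHERE x\in P\}$, I continue with~$2\phi$.

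First I would pin down the good range. Let $i^\star=\lceil\log_2(1/\delta)\rceil+2$, so $\phi_{i^\star}>2n^{3/2}/\delta$ and hence $\hat\delta(n,\phi_i)=2n^{3/2}/\phi_i<\delta$ for every $i\ge i^\star$. Since $B(m,n,\phi,\cdot)$ and $p(m,n,\phi,\cdot)$ are decreasing in their last argument, for $i\ge i^\star$ the algorithm draws at least $B(m,n,\phi_i,\delta)$ bits, so Lemma~\ref{lemma:enough random bits} applies with the \emph{true} $\delta$: each call of Algorithm~\ref{algorithm:SV}—including the calls in lower dimension produced by the reductions of Section~\ref{reduct}, where~$\delta$ is unchanged and $m$, $n$ only decrease—has expected number of pivots at most $p(m,n,\phi_i,\delta)\le p(m,n,\phi_i,\hat\delta(n,\phi_i))$. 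By Markov's inequality a given call is aborted with probability at most $1/(8n)$, so by a union bound over the at most~$n$ calls a whole run with $\phi=\phi_i$ avoids every abort with probability at least $7/8$; combining this with the bound $1/m^n$ on the probability of the discretization-failure event~$D$ from the proof of Lemma~\ref{lemma:enough random bits}, such a run behaves like the continuous one, which for $\phi_i>2n^{3/2}/\delta$ returns $x^\star$ by Corollary~\ref{gre-cor}. Hence every attempt with $\phi\ge\phi_{i^\star}$ terminates the procedure with probability bounded below by an absolute constant, so the expected number of distinct values $\phi_i$ ever tried is $i^\star+O(1)=O(\log(1/\delta))$.

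It remains to bound the pivots spent on one attempt. A run with $\phi=\phi_i$ makes at most~$n$ calls, each aborted after at most $8n\cdot p(m,n,\phi_i,\hat\delta(n,\phi_i))$ pivots, hence uses at most $8n^2\cdot p(m,n,\phi_i,\hat\delta(n,\phi_i))$ pivots. Plugging $\hat\delta(n,\phi_i)=2n^{3/2}/\phi_i$ into the formula for~$p$ gives $p(m,n,\phi_i,\hat\delta(n,\phi_i))=O\big(\frac{m\phi_i^2}{n}\big)$, so one attempt with $\phi=\phi_i$ costs $O(nm\phi_i^2)$ pivots; since $\phi_i\le\phi_{i^\star}=\Theta(n^{3/2}/\delta)$ for $i\le i^\star$, this is $O\big(\frac{mn^4}{\delta^2}\big)$. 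Multiplying the $O(\log(1/\delta))$ attempts with $i\le i^\star$ by this bound, and adding the attempts with $i=i^\star+j$ ($j\ge1$), which are reached with probability at most $(1/8)^j$ and individually cost $O(nm\phi_{i^\star+j}^2)=O\big(4^j\cdot\frac{mn^4}{\delta^2}\big)$ and therefore contribute $\sum_{j\ge1}(1/8)^j\,O\big(4^j\frac{mn^4}{\delta^2}\big)=O\big(\frac{mn^4}{\delta^2}\big)$ in total, yields the claimed expected bound $O\big(\frac{mn^4}{\delta^2}\log(\frac1\delta)\big)$.

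The main obstacle, and the point needing the most care, is the second paragraph: one must check that although the abort budget $8n\cdot p(m,n,\phi,\hat\delta(n,\phi))$ is computed from the \emph{estimate} $\hat\delta$ (which the algorithm knows) while Markov's inequality is applied to the expectation furnished by Lemma~\ref{lemma:enough random bits} with the \emph{true} $\delta$, the two are comparable precisely because $\hat\delta(n,\phi)\le\delta$ throughout the good range, so the per-call abort probability really is at most $1/(8n)$; and one must ensure that for the lower-dimensional calls produced during facet identification the same constant threshold (defined via the original~$n$) still gives a per-call abort probability of at most $1/(8n)$, which it does since $m$ and the dimension only shrink while~$\delta$ stays fixed (so the relevant expectation only decreases). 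For attempts with $i<i^\star$ no guarantee is claimed—the estimate is too large and too few bits may be drawn—but this is harmless: such a run has performed at most $O(nm\phi_i^2)\le O\big(\frac{mn^4}{\delta^2}\big)$ pivots before it is aborted or returns a point that the optimality check rejects, after which we move on to~$2\phi$.
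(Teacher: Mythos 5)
Your proposal is correct and follows essentially the same route as the paper's own proof: both split the iterations at the same $i^\star$, bound the pre-$i^\star$ iterations by the deterministic abort budget $O(n^2 p(m,n,\phi_i,\hat\delta(n,\phi_i)))=O(mn^4/\delta^2)$ each (giving the $\log(1/\delta)$ factor), and control the post-$i^\star$ iterations by Markov's inequality over the $n$ calls to get a per-attempt failure probability $\le 1/8$ and hence a convergent geometric tail $\sum_j(1/8)^j 4^j = O(1)$. Your additional remark on the discretization-failure event~$D$ is a slight refinement over the paper's phrasing but does not change the argument.
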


Note that, in analogy, all other results stated in Theorem~\ref{thm:MainNumberOfPivots} and Theorem~\ref{thm:MainNumberOfPivots2} also hold for the shadow vertex algorithm with random bits with an additional $O(n)$-factor (or $O(m)$-factor when no feasible solution is given).

\begin{proof}
Let us assume that the shadow vertex algorithm with random bits does not find the optimal solution before the first iteration~$i^\star$ for which $\phi_{i^\star} > 2n^{3/2}/\delta$. For iterations $i \geq i^\star$ we know that the shadow vertex algorithm will return the optimal solution (or detect, that the linear program is unbounded) if it is not stopped because the number of pivots exceeds $8n \cdot p(m, n, \phi_i, \hat{\delta}(n, \phi_i))$. Due to Markov's inequality, the probability of the latter event is bounded from above by $1/8n$ (for each facet of the optimal solution) because $p(m, n, \phi_i, \hat{\delta}(n, \phi_i)) \geq p(m, n, \phi_i, \delta)$ due to $\hat{\delta}(n, \phi_i) \leq \delta$ and $p(m, n, \phi_i, \delta)$ is an upper bound for the expected number of pivots. As~$n$ facets have to be identified in iteration~$i$, the probability that the shadow vertex method stops because of too many pivots is bounded from above by $n \cdot 1/8n = 1/8$. Hence, the expected number of pivots of all iterations $i \geq i^\star$, provided that iteration~$i^\star$ is reached, is at most
\begin{align*}
	&\sum_{i=i^\star}^\infty \left( \frac{1}{8} \right)^{i-i^\star} \cdot \frac{7}{8} \cdot n \cdot 8n \cdot p(m, n, \phi_i, \hat{\delta}(n, \phi_i)) \cr
	= &7n^2 \cdot \sum_{i=i^\star}^\infty \frac{1}{8^{i-i^\star}} \cdot p \left( m, n, \phi_i, \frac{2n^{3/2}}{\phi_i} \right) \cr
	= &O \left( 8^{i^\star} n^2 \cdot \sum_{i=i^\star}^\infty \frac{1}{8^i} \cdot \frac{m \sqrt{n} \phi_i}{\frac{2n^{3/2}}{\phi_i}} \right)
	= O \left( 8^{i^\star} n \cdot \sum_{i=i^\star}^\infty \frac{1}{8^i} \cdot m \phi_i^2 \right) \cr
	= &O \left( 8^{i^\star} n \cdot \sum_{i=i^\star}^\infty \frac{1}{8^i} \cdot m \cdot (2^i n^{3/2})^2 \right)
	= O \left( 8^{i^\star} n \cdot \sum_{i=i^\star}^\infty \frac{1}{2^i} \cdot m n^3 \right) \cr
	= &O(4^{i^\star}m n^4)
	= O \left( \frac{m n^4}{\delta^2} \right) \DOT
\end{align*}
Some equations require further explanation. The factor $n \cdot 8n \cdot p(m, n, \phi_i, \hat{\delta}(n, \phi_i))$ stems from the fact that we have to identify~$n$ facets, and for each we stop after at most $8n \cdot p(m, n, \phi_i, \hat{\delta}(n, \phi_i))$ pivots. The second equation is in accordance with Lemma~\ref{lemma:enough random bits}, which states that $p(m, n, \phi, \delta) = O \big( \frac{mn^2}{\delta^2} + \frac{m \sqrt{n} \phi}{\delta} \big)$. As the term $mn^2/\delta^2$ is dominated by the term $m \sqrt{n} \phi/\delta$ when $\phi \geq n^{3/2}/\delta$, it can be omitted in the $O$-notation for such values. Above we only consider iterations $i \geq i^\star$, i.e., $\phi_i \geq \phi_{i^\star} > 2n^{3/2}/\delta$. The last equation is due to the fact that
\[
	2^{i^\star-1} n^{3/2}
	= \phi_{i^\star-1}
	\leq \frac{2n^{3/2}}{\delta} \COMMA
\]
i.e., $2^{i^\star} \leq 4/\delta$ and, hence, $4^{i^\star} = O(1/\delta^2)$.

To finish the proof, we observe that the iterations $i = 1, \ldots, i^\star$ require at most
\begin{align*}
	&\sum_{i=1}^{i^\star-1} n \cdot 8n \cdot p(m, n, \phi_i, \hat{\delta}(n, \phi))
	= \sum_{i=1}^{i^\star-1} n \cdot 8n \cdot p \left( m, n, \phi_i, \frac{2n^{3/2}}{\phi_i} \right) \cr
	= &O \left( \sum_{i=1}^{i^\star-1} n^2 \cdot \frac{mn^2}{\delta^2} \right)
	= O \left( i^\star \cdot \frac{mn^4}{\delta^2} \right)
	= O \left( \log \left( \frac{1}{\delta} \right) \cdot \frac{mn^4}{\delta^2} \right)
\end{align*}
pivots in expectation. The second equation stems from Lemma~\ref{lemma:enough random bits}, which states that $p(m, n, \phi, \delta) = O \big( \frac{mn^2}{\delta^2} + \frac{m \sqrt{n} \phi}{\delta} \big)$. The second term in the sum can be omitted if $\phi = O(n^{3/2}/\delta)$, which is the case for $\phi_1, \ldots, \phi_{i^\star-1}$. Finally, $i^\star$ is the smallest integer~$i$ for which $2^i n^{3/2} > 2n^{3/2}/\delta$. Hence, $i^\star = O(\log(1/\delta))$.
\end{proof}

\end{appendix}

\end{document}